%% LyX 2.4.3 created this file.  For more info, see https://www.lyx.org/.
%% Do not edit unless you really know what you are doing.
\documentclass[11pt,english,table]{article}
\usepackage{lmodern}

\usepackage[T1]{fontenc}
\usepackage[latin9]{inputenc}
\usepackage{color}
\usepackage{bm}
\usepackage{amsmath}
\usepackage{amsthm}
\usepackage{amssymb}
\usepackage{graphicx}
\usepackage{geometry}
\geometry{verbose,tmargin=2.45cm,bmargin=2.45cm,lmargin=3cm,rmargin=3cm}
\usepackage{rotfloat}
\usepackage{setspace}
\usepackage[authoryear]{natbib}
\doublespacing

\makeatletter
%%%%%%%%%%%%%%%%%%%%%%%%%%%%%% Textclass specific LaTeX commands.
\theoremstyle{plain}
\newtheorem{assumption}{\protect\assumptionname}
\theoremstyle{plain}
\newtheorem{thm}{\protect\theoremname}
\theoremstyle{plain}
\newtheorem{lem}{\protect\lemmaname}

%%%%%%%%%%%%%%%%%%%%%%%%%%%%%% User specified LaTeX commands.
\setcitestyle{round}
\usepackage{lscape}
\usepackage{longtable}
\usepackage{xcolor}
\usepackage{rotating}

\usepackage{array}
\usepackage{tabularx}\usepackage{multirow}\usepackage{booktabs}
\usepackage{ragged2e}\newcolumntype{C}[1]{>{\centering\arraybackslash}p{#1}}
\usepackage{rotfloat}
\newcolumntype{J}[1]{>{\justify\arraybackslash}p{#1}}
\newcolumntype{R}[1]{>{\RaggedLeft\arraybackslash}p{#1}}
\newcolumntype{Q}[1]{>{\columncolor{Gray}\RaggedLeft\arraybackslash}p{#1}}
\newcolumntype{L}[1]{>{\RaggedRight\arraybackslash}p{#1}}
\newcolumntype{G}{@{\extracolsep{0.5cm}}l@{\extracolsep{0pt}}}%
%\newcolumntype{sP}[1]{>{\extracolsep{3cm}}p{#1}}%
\newcolumntype{P}[1]{>{\centering\arraybackslash}p{#1}}
\newcolumntype{Y}{>{\centering\arraybackslash}X}
\newcommand{\nhphantom}[1]{\sbox0{#1}\hspace{-\the\wd0}} % command to add negative space, like off-set a minussign

\AtBeginDocument{

}

\usepackage[colorlinks,
            linkcolor=blue,
            anchorcolor=blue,
            citecolor=blue]{hyperref}

\usepackage{babel}

\ifdefined\showcaptionsetup
 % Caption package is used. Advise subfig not to load it again.
 \PassOptionsToPackage{caption=false}{subfig}
\fi
\usepackage{subfig}
\makeatother

\usepackage{babel}
\providecommand{\assumptionname}{Assumption}
\providecommand{\lemmaname}{Lemma}
\providecommand{\theoremname}{Theorem}

\begin{document}
\title{Option Pricing with Time-Varying Volatility Risk Aversion\thanks{We are grateful to Szabolcs Blazsek, Stefano Giglio, Jens Jackwerth,
Tobias Sichert, two anonymous reviewers, and conference participants
at the SoFiE 2022 conference for valuable comments.}{\normalsize\emph{\medskip{}
}}}
\author{\textbf{Peter Reinhard Hansen}$^{a}$\textbf{ }and\textbf{ Chen Tong}$^{b,c}$\thanks{Chen Tong acknowledges financial support from the National Natural
Science Foundation of China (72301227) and the Ministry of Education
of China, Humanities and Social Sciences Youth Fund (22YJC790117).}\bigskip{}
 \\
 {\normalsize$^{a}$}{\normalsize\emph{University of North Carolina
\& Copenhagen Business School}}{\normalsize{} }\\
 {\normalsize$^{b}$}{\normalsize\emph{Department of Finance, School
of Economics, Xiamen University}}{\normalsize{} }\\
 {\normalsize$^{c}$}{\normalsize\emph{Wang Yanan Institute for Studies
in Economics, Xiamen University\medskip{}
 }}}
\date{{\normalsize\emph{\today}}}
\maketitle
\begin{abstract}
We introduce a pricing kernel with time-varying volatility risk aversion
to explain observed time variations in the shape of the pricing kernel.
When combined with the Heston-Nandi GARCH model, this framework yields
a tractable option pricing model in which the variance risk ratio
(VRR) emerges as a key variable. We show that the VRR is closely linked
to economic fundamentals, as well as sentiment and uncertainty measures.
A novel approximation method provides analytical option pricing formulas,
and we demonstrate substantial reductions in pricing errors through
an empirical application to the S\&P 500 index, the CBOE VIX, and
option prices.
\end{abstract}
\clearpage{}

\section*{Introduction}

The pricing kernel, which represents the ratio of risk-neutral to
physical state probabilities, is a fundamental concept in economics
with roots in Arrow-Debreu securities, see \citet{ArrowDebreu1954}.
A cornerstone of asset pricing theory, the \citet{Lucas1978} tree
model, implies a monotonically decreasing pricing kernel with respect
to aggregate wealth, see \citet{HansenRenault:2010}. However, empirical
studies consistently find the pricing kernel to be non-monotonic and
time-varying, giving rise to the so-called pricing kernel puzzles,
see e.g. \citet{AitSahaliaLo2000}, \citet{Jackwerth2000}, and \citet{RosenbergEngle2002}.

In this paper, we propose a pricing kernel with time-varying volatility
risk aversion to resolve these pricing kernel puzzles. Building on
\citet{HestonNandi2000} and \citet{ChristoffersenHestonJacobs2013},
we develop a novel framework for derivatives pricing, in which the
\textit{variance risk ratio} (VRR) emerges as the key quantity. The
VRR, defined as the ratio of risk-neutral to physical conditional
variance, characterizes both the shape and time variation of the pricing
kernel and it is closely related to the variance risk premium by \citet{CarrWu2008}.
The dynamic pricing kernel introduces challenges for closed-form option
pricing formulas, which we address through a novel analytical approximation
method applicable to non-affine models. Our empirical implementation
of the dynamic structure for VRR is based on the score-driven framework
of \citet{CrealKoopmanLucas:2013}, which intuitively updates time-varying
quantities in response to first-order conditions. In an empirical
application to the S\&P 500 index, the CBOE VIX, and option prices,
we show that the new model substantially reduces pricing errors, both
in-sample and out-of-sample.

The discrete-time Heston-Nandi GARCH (HNG) model, developed by \citet{HestonNandi2000},
conveniently yields closed-form expressions for option prices while
allowing for time-varying volatility. However, it implies a monotonic
pricing kernel. \citet{ChristoffersenHestonJacobs2013} extended the
HNG model by introducing a variance-dependent pricing kernel, resulting
in a non-monotonic shape. Their model remains tractable, retains closed-form
option pricing expressions, and has become a benchmark in the option
pricing literature. The shape of the pricing kernel in their framework
is governed by the variance risk premium (VRP), where a negative VRP
leads to a U-shaped pricing kernel, a pattern confirmed in their empirical
analysis, see also \citet{CuesdeanuJackwerth2018}. 

While \citet{ChristoffersenHestonJacobs2013} explain non-monotonicity,
their framework cannot account for the time variation observed in
the pricing kernel. When estimated over long sample periods, the pricing
kernel is typically U-shaped, yet the shape differs distinctly in
certain periods, such as the years leading up to the global financial
crisis. The shape of the pricing kernel can be expressed in terms
of the probability weighting function. If individuals overweight low-probability
events and underweight high-probability events, the probability weighting
function will have an inverse S-shape, which corresponds to a U-shaped
pricing kernel. \citet{PolkovnichenkoZhao2013} estimated the probability
weighting function non-parametrically and found it to be time-varying.
Their estimate has an inverse S-shape during most of their sample
periods but a regular S-shape during the years 2004--2006. Similar
results were obtained by \citet{ChabiSong2013}, and an inverted U-shape
was also documented with DAX 30 options during the same years, 2004--2006,
as reported by \citet{GrithHardleKratschmer2017}. Further evidence
comes from \citet{KieselRahe2017} and \citet{BeareSchmidt2016}.
Interestingly, \citet[figure 3]{ChristoffersenHestonJacobs2013} also
contains evidence of time variations in empirical pricing kernels.
For most calendar years, their estimated empirical pricing kernel
is U-shaped; however, in some calendar years, it takes an inverted
U-shape.\footnote{The authors do not comment on this observation, but they estimate
a more flexible structure, which is not based on a pricing kernel
with a fixed shape, see \citet[figure 6]{ChristoffersenHestonJacobs2013}.}

The derivative pricing model proposed in this paper combines a pricing
kernel with time-varying volatility risk aversion and the GARCH model
of \citet{HestonNandi2000}. The VRR, defined as the ratio of risk-neutral
to physical conditional variance, $\eta_{t}=h_{t+1}^{\ast}/h_{t+1}$,
governs the time variation in the pricing kernel and is functionally
linked to the volatility risk aversion parameter, which captures the
curvature of the pricing kernel. We demonstrate that this structure
can generate the observed time variation in the empirical pricing
kernel, including its different shapes. Our new model nests \citet{ChristoffersenHestonJacobs2013}
as the special case where $\eta_{t}$ is constant, which we denote
as CHNG. The HNG option pricing model of \citet{HestonNandi2000}
is also nested by imposing $\eta_{t}=1$. We refer to our new model
as DHNG model, where ``D'' stands for dynamic, and highlight its
key properties below.

We derive a closed-form CBOE VIX pricing formula for the DHNG model
and develop an analytical option pricing formula, which constitutes
a separate and important methodological contribution. When volatility
risk aversion follows a stochastic process, the conditional moment-generating
function (MGF) of future cumulative returns does not have an affine
form, making it very challenging to derive option prices. However,
we introduce a novel approximation method that is applicable to non-affine
models. This approach constructs an auxiliary MGF for a simplified,
related problem and then applies a Taylor expansion to obtain an analytical
option pricing formula. The approximation method is shown to be highly
accurate in an empirically relevant simulation design.

To implement the theoretical framework, we connect the model for $\eta_{t}$
with observed data using an observation-driven approach, which is
a natural choice since the GARCH model under the physical measure
also follows an observation-driven formulation. Specifically, we propose
a score-driven model following \citet{CrealKoopmanLucas:2013}, where
the first-order conditions of the log-likelihood function define the
innovations in the dynamic model for $\eta_{t}$. This design is intuitive
because $\eta_{t}$ is updated to minimize pricing errors. Conveniently,
this structure also enhances robustness to model misspecification.
Another advantage of the observation-driven model is its simplicity
in estimation, which is particularly useful in applications involving
long sample periods and large panels of option prices.

An empirical application using 32 years of daily S\&P 500 returns,
the CBOE VIX, and a large panel of option prices provides strong support
for the model. The estimation incorporates information from both the
physical and risk-neutral measures, as advocated by \citet{ChernovGhysels2000}.
The results show that the pricing kernel with time-varying volatility
risk aversion typically reduces pricing errors by 50\% or more, as
measured by the root mean square error. This reduction is achieved
for both VIX and option prices, in both in-sample and out-of-sample
comparisons. The estimated time variation in the shape of the pricing
kernel aligns with previous studies, generally exhibiting a U-shape,
but taking on an inverted U-shape during certain periods, including
2004--2007 and around 1993 and 2017.

Finally, having established the VRR as a fundamental quantity, it
is natural to investigate its connection to key asset pricing variables,
including those linked to pricing kernel puzzles. Theoretical and
empirical studies suggest that heterogeneous beliefs and disagreements
about the physical distribution can contribute to a U-shaped pricing
kernel, see e.g. \citet[2008]{Shefrin2001},\nocite{Shefrin2008}
\citet{BakshiMadan2008}, \citet{BakshiMadanPanayotov2010}. Similar
arguments have been made about sentiment and uncertainty among investors,
see e.g. \citet{Han2008}, \citet{PolkovnichenkoZhao2013}, \citet{BakerBloomDavis2016},
and \citet{BollerslevLiXue2018}. \citet{BaliZhou2016} argue that
market uncertainty can be approximated by the variance risk premium,
see \citet{CarrWu2008}. The VRR is obviously related to the variance
risk premium, which has been shown to predict stock returns, see \citet{BollerslevTauchenZhou2009}.
A straightforward explanation is that preferences are state-dependent,
with market uncertainty being a possible state variable, see e.g.
\citet{GrithHardleKratschmer2017}. We connect our results to these
studies by showing that the monthly average VRR is closely related
to commonly used measures of sentiment, uncertainty, and disagreement,
as well as other key economic indicators.

Our paper is related to and builds on a large body of literature,
including studies that have sought to address pricing kernel puzzles
by augmenting existing models with additional state variables, see
e.g. \citet{Chabi-YoGarciaRenault2008}, \citet{chabi2012}, \citet{BrownJackwerth2012}
and \citet{SongXiu2016}. In our framework, the quantity VRR can be
viewed as an additional state variable that emerges naturally from
the model structure. 

Our work is also related to \citet{Barone-AdesiEngleMancini2008},
who estimated a GJR-GARCH model for returns under the physical measure.
They did not specify a pricing kernel but simply assumed that the
model under the risk-neutral measure is also a GJR-GARCH model, with
time-varying parameters calibrated to minimize option pricing errors.
While we also seek to minimize pricing errors, our approach is fundamentally
different. Our starting point is a pricing kernel and a model for
the physical probability measure, $\mathbb{P}$, and the two imply
the model under the risk-neutral probability measure, $\mathbb{Q}$.
An empirical comparison in \citet{ChristoffersenHestonJacobs2013}
explores the same idea as in \citet{Barone-AdesiEngleMancini2008},
as they estimate separate HNG models under both $\mathbb{P}$ and
$\mathbb{Q}$ measures. They referred to this structure as an \emph{ad-hoc
model}, because it is not derived from their pricing kernel. However,
we show that the pricing kernel with a time-varying VRR is incoherent
with HNG models (with constant parameters) under both $\mathbb{P}$
and $\mathbb{Q}$ measures. 

Additionally, our paper contributes to the literature on time-varying
risk aversion, which relates to the local shape of the pricing kernel.
A seminal paper by \citet{CampbellCochrane1999} addresses this concept,
while \citet{Li2007} demonstrated how time-varying risk aversion
links to a time-varying risk premium. Moreover, \citet{GonzalezNaveRubio2018}
identified risk aversion dynamics as a key determinant of stock market
betas. More recently, \citet{BekertEngstromXu2020} introduced a no-arbitrage
asset pricing model with time-varying risk aversion for pricing equities
and corporate bonds, where risk aversion is driven by uncertainty
shocks.

The rest of the paper is organized as follows. We present the theoretical
model in Section \ref{sec:Model} and derive closed-form pricing formula
for the VIX and an option pricing formula, based on the new approximation
method in Section \ref{sec:Derivative-Prices}. Section \ref{sec:ScoreDrivenModel}
introduces a dynamic model for the variance risk ratio using a score-driven
approach. An empirical application with 32 years of S\&P 500 returns,
VIX, and option prices is presented in Section \ref{sec:Empirical-Analysis}.
We show that the variance risk ratio relates to well-known measures
of sentiment, disagreement, and uncertainty in Section \ref{sec:Eta-and-EconomicFundamentals}.
A summary is presented in Section \ref{sec:Summary}. All proofs are
provided in the Appendix \ref{sec:Appendix-of-Proofs} and supplementary
empirical results can be found in the Online Appendix.

\everymath{\setlength{\abovedisplayskip}{11pt} \setlength{\belowdisplayskip}{11pt}} 
\everydisplay{\setlength{\abovedisplayskip}{11pt} \setlength{\belowdisplayskip}{11pt}} 

\section{The Model\label{sec:Model}}

This section introduces the new model with dynamic variance risk aversion
(DHNG), building on the models by \citet{HestonNandi2000} (HNG) and
\citet{ChristoffersenHestonJacobs2013} (CHNG).

The observed variables include the daily returns, $R_{t}\equiv\log\left(S_{t}/S_{t-1}\right)$,
where $S_{t}$ is the underlying asset price, and a vector of derivative
prices, $X_{t}$. We will work with two filtrations, $\mathcal{F}_{t}=\ensuremath{\sigma}(\{R_{j},X_{j}\},j\leq t)$
and $\mathcal{G}_{t}=\ensuremath{\sigma}(\{R_{j},X_{j-1}\},j\leq t)$,
where the latter arises naturally from a factorization of the joint
likelihood function.\textcolor{red}{{} }Clearly, $\mathcal{F}_{t-1}\subset\mathcal{G}_{t}\subset\mathcal{F}_{t}$
for all $t$. Throughout this paper, we will use the notation $\mathbb{E}_{t}^{\mathbb{P}}\left(\cdot\right)\equiv\mathbb{E}^{\mathbb{P}}\left(\cdot|\mathcal{G}_{t}\right)$
and $\mathbb{E}_{t}^{\mathbb{Q}}\left(\cdot\right)\equiv\mathbb{E}_{t}^{\mathbb{Q}}\left(\cdot|\mathcal{G}_{t}\right)$
to denote the conditional expectations with respect to $\mathcal{G}_{t}$
under $\mathbb{P}$ and $\mathbb{Q}$ measures, respectively.

We model returns in physical measure $\mathbb{P}$ using the classical
Heston-Nandi GARCH model (HNG), which has a convenient structure for
derivatives pricing. This model is given by
\begin{eqnarray}
R_{t+1} & = & r+(\lambda-\tfrac{1}{2})h_{t+1}+\sqrt{h_{t+1}}z_{t+1},\label{eq:HNGreturn}\\
h_{t+1} & = & \omega+\beta h_{t}+\alpha(z_{t}-\gamma\sqrt{h_{t}})^{2},\label{eq:HNGgarch}
\end{eqnarray}
where the return shock, $z_{t+1}$, is independent and identically
distributed (iid) with a standard normal distribution, $N(0,1)$;
$r$ is the risk-free rate; $\lambda$ is the equity risk premium
because the expected return is given by $\mathbb{E}^{\mathbb{P}}\left(\exp\left(R_{t+1}\right)|\mathcal{F}_{t}\right)=\exp\left(r+\lambda h_{t+1}\right)$;
and $h_{t+1}=\mathrm{var}^{\mathbb{P}}(R_{t+1}|\mathcal{F}_{t})$
is the daily conditional variance, see \citet{HestonNandi2000}. The
HNG model captures time variation in the conditional variance, as
is the case for ARCH and GARCH models, see \citet{Engle:1982} and
\citet{bollerslev:86}. However, the HNG model also allows for dependence
between returns and volatility (leverage). It is straightforward to
verify that $\mathrm{cov}^{\mathbb{P}}(R_{t},h_{t+1}|\mathcal{F}_{t-1})=-2\gamma\alpha h_{t}$,
which shows that the magnitude of the leverage effect is defined by
$\gamma$. A leverage effect is required to generate the empirically
important volatility smirk in option prices. The dynamic structure
of the HNG model is carefully crafted to yield a closed-form solution
for option valuation, and \citet{HestonNandi2000} showed that the
continuous limit of the HNG model (as the time interval between observations
shrinks to zero) yields a variance process, $h_{t}$, that converges
weakly to a continuous-time square-root variance process, see \citet{Feller1951},
\citet{CoxIngersollRoss85}, and \citet{Heston1993}.

Option pricing with GARCH models can be achieved with a simple risk-neutralization
by \citet{Duan1995}, known as the \emph{locally risk-neutral valuation
relationship} (LRNVR). This approach is equivalent to imposing a pricing
kernel with a single risk premium for equity risk, see \citet{HuangWangHansen2017}.
However, the LRNVR-based pricing kernel is inadequate for explaining
many discrepancies between $\mathbb{P}$ and $\mathbb{Q}$, including
the variance risk premium, see \citet{HaoZhang2013} and \citet{ChristoffersenHestonJacobs2013}.
We will review their model next and then proceed to introduce the
new model with time-varying volatility risk aversion.

\subsection{Pricing Kernel with Constant Parameters (CHNG)}

The model in \citet{ChristoffersenHestonJacobs2013} (CHNG) generalizes
the HNG model by \citet{HestonNandi2000}, to have a variance-dependent
pricing kernel, which is given by,
\[
\frac{M_{t+1}}{M_{t}}=\left(\frac{S_{t+1}}{S_{t}}\right)^{\phi}\exp\left[\delta+\pi h_{t+1}+\xi(h_{t+2}-h_{t+1})\right],
\]
and \citet{CorsiFusariVecchia2013} showed that it can be conveniently
expressed as
\[
M_{t+1,t}\equiv\frac{M_{t+1}}{\mathbb{E}^{\mathbb{P}}[M_{t+1}|\mathcal{F}_{t}^{R}]}=\frac{\exp\left(\phi R_{t+1}+\xi h_{t+2}\right)}{\mathbb{E}_{}^{\mathbb{P}}[\exp(\phi R_{t+1}+\xi h_{t+2})|\mathcal{F}_{t}^{R}]},
\]
where $\mathcal{F}_{t}^{R}=\sigma\left(\{R_{j}\},j\leq t\right)$
is the natural filtration for returns only. The pricing kernel above
depends on both equity risk and variance risk, where the latter is
typically characterized by the parameter $\xi$. The equity risk is
governed by $\phi$, as well as $\xi$, because $h_{t+2}$ depends
on return $R_{t+1}$. This pricing kernel implies the following risk-neutral
dynamics:
\begin{eqnarray*}
R_{t+1} & = & r-\tfrac{1}{2}h_{t+1}^{*}+\sqrt{h_{t+1}^{*}}z_{t+1}^{*}\\
h_{t+1}^{*} & = & \omega^{*}+\beta^{*}h_{t}^{*}+\alpha^{*}(z_{t}^{*}-\gamma^{*}\sqrt{h_{t}^{*}})^{2},
\end{eqnarray*}
with $z_{t+1}^{*}|\mathcal{F}_{t}^{R}\overset{\mathbb{Q}}{\sim}iid\ N(0,1)$
and the following relations between $\mathbb{P}$ and $\mathbb{Q}$
parameters:
\[
h_{t}^{*}=h_{t}\eta,\quad\omega^{*}=\omega\eta,\quad\beta^{\ast}=\beta,\quad\alpha^{*}=\alpha\eta^{2},\quad\gamma^{\ast}=\tfrac{1}{\eta}(\gamma+\lambda-\tfrac{1}{2})+\tfrac{1}{2},\quad\eta\equiv(1-2\alpha\xi)^{-1}.
\]

The logarithm of the pricing kernel is a quadratic function of the
daily market return,
\begin{equation}
\log\frac{M_{t}}{M_{t-1}}=-\lambda(R_{t}-r)+\xi\alpha\frac{(R_{t}-r)^{2}}{h_{t}}+\kappa_{0}+\kappa_{1}h_{t},\label{eq:QuadraticInReturns}
\end{equation}
where $\kappa_{0}=-\frac{1}{2}\log\eta$ and $\kappa_{1}=\frac{1}{2}(\lambda-\tfrac{1}{2})^{2}-\tfrac{1}{8}\eta$.\footnote{The expressions for $\kappa_{0}$ and $\kappa_{1}$ in \citet{ChristoffersenHestonJacobs2013}
are $\kappa_{0}=\delta+\xi\omega+\phi r$ and $\kappa_{1}=\pi+\xi(\beta-1+\alpha(\lambda-\tfrac{1}{2}+\gamma)^{2})$.
It follows from our results in Lemma \ref{lem:ShapeDynamicKernel}
that they are equivalent to those presented here.} Expression (\ref{eq:QuadraticInReturns}) shows that $\lambda$ and
$\xi$ define the linear and quadratic terms, respectively, with positive
$\xi$ leading to a U-shaped and negative $\xi$ to an inverted U-shaped
pricing kernel. Empirically, this relationship tends to be generally
U-shaped, but it is unstable over time as seen in \citet[figure 3]{ChristoffersenHestonJacobs2013}.
Figure \ref{fig:CHJ-3-and-5} includes parts of figures 3 and 5 in
\citet{ChristoffersenHestonJacobs2013}; the four left panels are
based on model-free market prices and the corresponding four right
panels are based on model prices using their estimated CHNG model.
\begin{figure}
\centering{}\subfloat[Log ratios inferred from market prices]{\centering{}\includegraphics[width=0.5\textwidth]{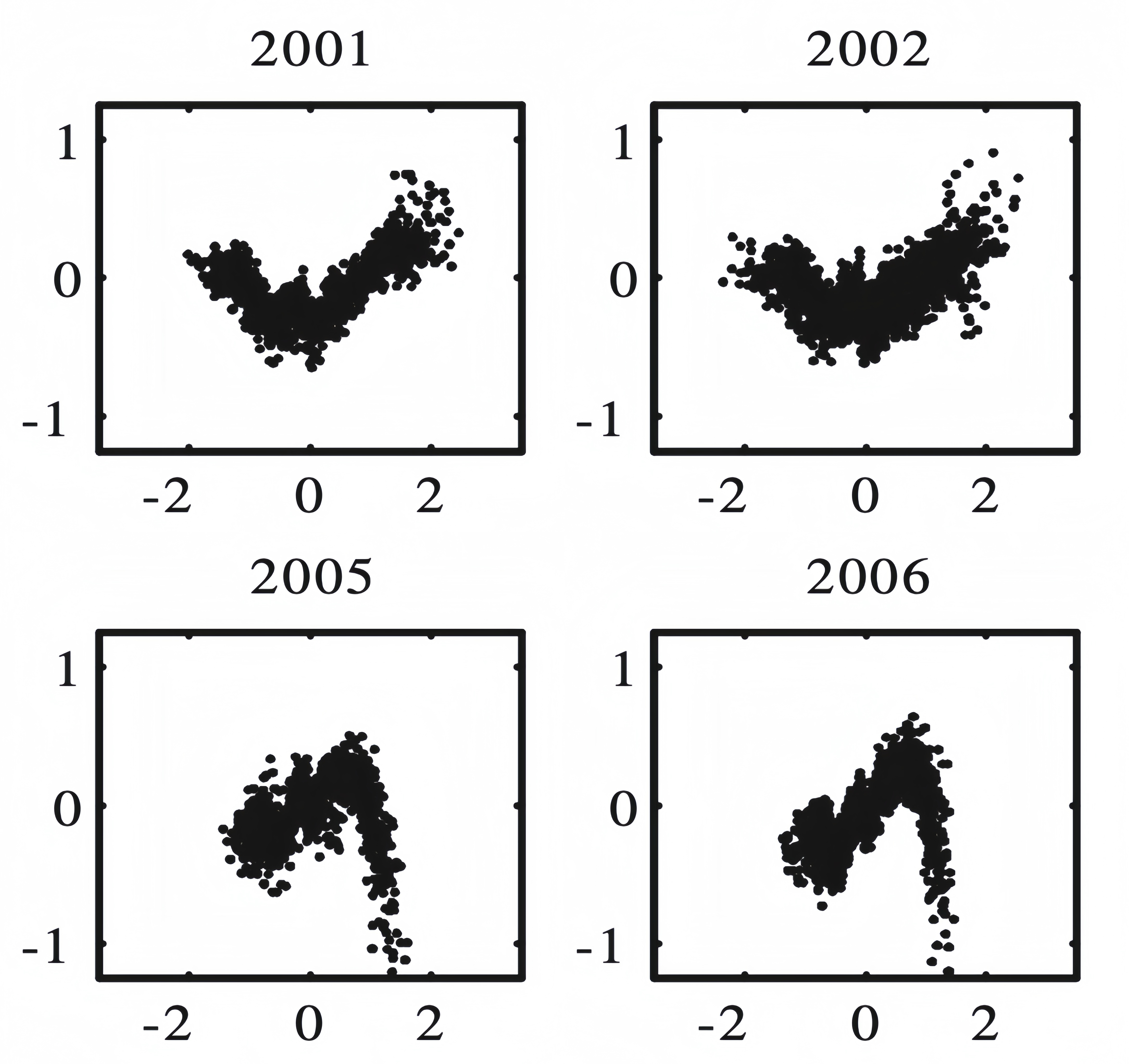}}\subfloat[CHNG model-based log ratios]{\centering{}\includegraphics[width=0.5\textwidth]{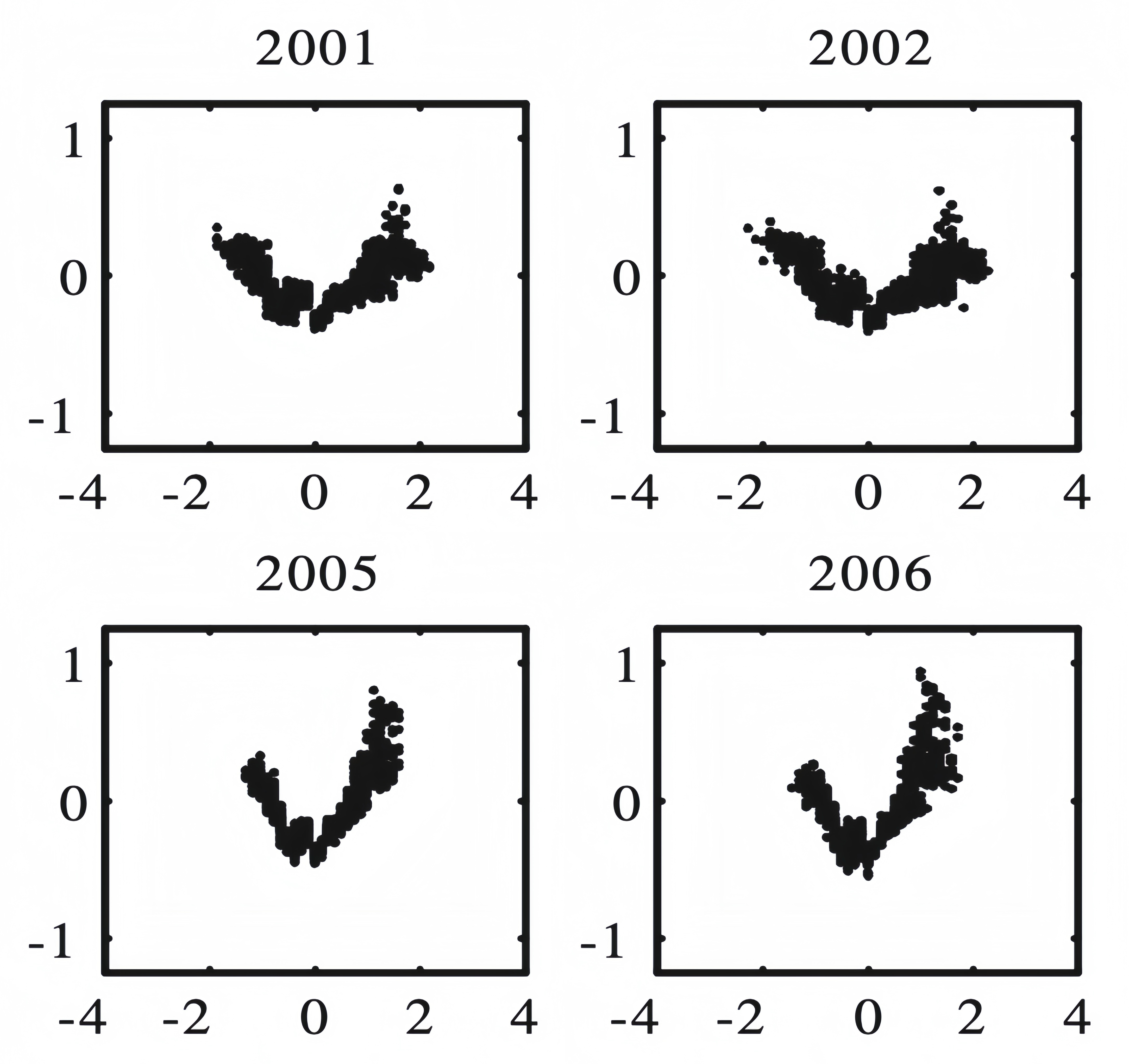}}\caption{{\small Log ratios of risk-neutral one-month densities and physical
one-month histogram against log returns in monthly standard deviations.
Panels (a) and (b) are subsets of figures 3 and 5, respectively, in
\citet{ChristoffersenHestonJacobs2013}. The former is based on empirical
market prices and the latter has the corresponding CHNG model-based
log ratios.\label{fig:CHJ-3-and-5}}\protect \\
{\small}\protect \\
{\small Alt text: The shape of the pricing kernel based on market prices
and the CHNG model for four calendar years. The figure plots log ratios
of risk-neutral to physical one-month densities against log returns.
The shapes for 2001 and 2002 are similar, while those for 2005 and
2006 are distinctively different. }}
\end{figure}
 There are periods where the relationship has an inverted $U$-shape,
such as the years 2004--2007 (of which 2005 and 2006 are shown in
Figure \ref{fig:CHJ-3-and-5}), which is incompatible with a static
quadratic coefficient $\xi$. 

These observations, and the fact that $\xi$ defines the shape of
the relationship between $\log M_{t}/M_{t-1}$ and returns, motivate
us to develop a new model where $\xi$ can be time-varying. This parameter,
$\xi$, is directly tied to volatility risk aversion, as it is the
coefficient to the state variable, $h_{t+2}$, in the pricing kernel.
Moreover, $\xi$ is also negatively related to the VRP, because $h_{t}-h_{t}^{\ast}=-\xi\times\tfrac{2\alpha h_{t}}{1-2\alpha\xi}\approx-\xi\times2\alpha h_{t}$,
where $2\alpha h_{t}>0$.

\subsection{Dynamic Pricing Kernel by Means of Variance Risk Aversion}

We seek a more flexible pricing kernel that is coherent with the observed
variation over time. Next, we introduce a new dynamic pricing kernel
while maintaining the Heston-Nandi GARCH structure (\ref{eq:HNGreturn})-(\ref{eq:HNGgarch})
under physical measure, $\mathbb{P}$. 
\begin{assumption}
\label{assu:HNG}Returns are given by the Heston-Nandi GARCH model,
(\ref{eq:HNGreturn})-(\ref{eq:HNGgarch}), under $\mathbb{P}$.
\end{assumption}
\begin{assumption}
\label{assu:Dyn-pricing-kernel}The pricing kernel takes the following
form:
\begin{equation}
M_{t+1,t}=\frac{\exp\left(\phi_{t}R_{t+1}+\xi_{t}h_{t+2}\right)}{\mathbb{E}_{t}^{\mathbb{P}}[\exp(\phi_{t}R_{t+1}+\xi_{t}h_{t+2})]},\label{eq:NewPK}
\end{equation}
where $\phi_{t}$ and $\xi_{t}$ are $\mathcal{G}_{t}$-measurable.
\end{assumption}
Assumptions \ref{assu:HNG} and \ref{assu:Dyn-pricing-kernel} induce
the following dynamic properties under risk-neutral measure.
\begin{thm}
\label{thm:DynUnderQ}Suppose Assumptions \ref{assu:HNG} and \ref{assu:Dyn-pricing-kernel}
hold and define $\eta_{t}\equiv(1-2\alpha\xi_{t})^{-1}$. \\
$(i)$ The risk aversion parameters satisfy
\[
\xi_{t}=\tfrac{1}{2\alpha}\tfrac{\eta_{t}-1}{\eta_{t}},\qquad\phi_{t}=\tfrac{\eta_{t}-1}{\eta_{t}}(\gamma-\tfrac{1}{2})-\tfrac{1}{\eta_{t}}\lambda,
\]
such that their dynamic properties are defined by that of $\eta_{t}$.\\
$(ii)$ The return dynamics under the risk-neutral measure $\mathbb{Q}$
are given by:
\begin{eqnarray*}
R_{t+1} & = & r-\tfrac{1}{2}h_{t+1}^{*}+\sqrt{h_{t+1}^{*}}z_{t+1}^{*},\\
h_{t+1}^{*} & = & \omega_{t}^{*}+\beta_{t}^{*}h_{t}^{*}+\alpha_{t}^{*}\left(z_{t}^{*}-\gamma_{t-1}^{*}\sqrt{h_{t}^{*}}\right)^{2},
\end{eqnarray*}
where $z_{t+1}^{*}|\mathcal{G}_{t}\overset{\mathbb{Q}}{\sim}iid\ N(0,1)$
and the following relations between $\mathbb{P}$ and $\mathbb{Q}$
parameters:
\[
\omega_{t}^{*}=\omega\eta_{t},\quad\beta_{t}^{*}=\beta\tfrac{\eta_{t}}{\eta_{t-1}},\quad\alpha_{t}^{*}=\alpha\eta_{t}\eta_{t-1},\quad\gamma_{t}^{*}=\tfrac{1}{\eta_{t}}(\gamma+\lambda-\tfrac{1}{2})+\tfrac{1}{2}.
\]
$(iii)$ The dynamic parameter, $\eta_{t}$, equals the ratio of the
conditional variances under $\mathbb{Q}$ and $\mathbb{P}$,
\[
\eta_{t}=\frac{h_{t+1}^{\ast}}{h_{t+1}}.
\]
\end{thm}
Theorem \ref{thm:DynUnderQ} reveals the following six interesting
properties of the risk-neutral probability measure and its dynamics.

First, $\eta_{t}=h_{t+1}^{\ast}/h_{t+1}$ emerges as a fundamental
quantity\footnote{Parameterize the model with the inverse ratio, $\theta_{t}=h_{t+1}/h_{t+1}^{\ast}$,
simplifies several expressions, such as $\xi_{t}=\tfrac{1}{2\alpha}(1-\theta_{t})$
and $\phi_{t}=(1-\theta_{t})(\gamma-\tfrac{1}{2})-\theta_{t}\lambda$.
We adopt $\eta_{t}$, because it is more common in the literature.}, which is termed the \emph{variance risk ratio}. It is $\mathcal{G}_{t}$-measurable
since $\eta_{t}$ is a function of $\xi_{t}\in\mathcal{G}_{t}$ defined
in Assumption \ref{assu:Dyn-pricing-kernel}. This variable appears
in all expressions relating $\mathbb{P}$ and $\mathbb{Q}$, and it
has a straightforward interpretation. Its functional relationship
with $\xi_{t}$ shows that $\eta_{t}$ is an indirect measure of volatility
risk aversion. When $\eta_{t}$ is large, agents demand a large compensation
for taking on variance risk, while a small value of $\eta_{t}$ corresponds
to an appetite for variance risk. Moreover, $\eta_{t}$ is obviously
related to the VRP, where the latter concerns the expectation of future
values of $h_{t+1}-h_{t+1}^{\ast}$. This follows from $h_{t+1}-h_{t+1}^{\ast}=\left(1-\eta_{t}\right)h_{t+1}$.
This relation could be used to construct a model-based measure of
the VRP. Note that the VRP can be positive in the most general version
of the model. A negative VRP can be guaranteed in the model design
by restricting $\eta_{t}$ to be greater than one.\footnote{This can be be achieved by specifying a dynamic model for $\log(\eta_{t}-1)$.
We do not impose this restriction in our analysis, as we specify a
model for $\log\eta_{t}$.} For comparison, the ratio, $h_{t}^{\ast}/h_{t}$, is constant in
CHNG model, and the relation $h_{t}-h_{t}^{\ast}=(1-\eta)h_{t}$ shows
that CHNG model implicitly restricts the VRP to be proportional to
the conditional variance under $\mathbb{P}$.

Second, the shape of the pricing kernel is time-varying, as shown
by the following results. 
\begin{lem}[Shape of Pricing Kernel]
\label{lem:ShapeDynamicKernel}The following expression
\begin{align*}
\log M_{t+1,t} & =-\lambda R_{t+1}+\xi_{t}\alpha\frac{\left(R_{t+1}-r\right)^{2}}{h_{t+1}}+\kappa_{0,t}+\kappa_{1,t}h_{t+1},
\end{align*}
is equivalent to (\ref{eq:NewPK}), where $\kappa_{0,t}=-\frac{1}{2}\log\eta_{t}$,
and $\kappa_{1,t}=\tfrac{1}{2}(\lambda-\tfrac{1}{2})^{2}-\tfrac{1}{8}\eta_{t}$.
\end{lem}
\noindent The expression in Lemma \ref{lem:ShapeDynamicKernel} is
the generalization of (\ref{eq:QuadraticInReturns}) to the case with
time-varying preference parameter, $\phi_{t}$ and $\xi_{t}$. The
quadratic coefficient can be time-varying, because it depends on $\xi_{t}$,
which makes it clear that $\xi_{t}$ influences the shape of the pricing
kernels. Or, equivalently, $\eta_{t}$ governs the shape. We have
a U-shape if $\xi_{t}>0$ ($\Leftrightarrow\eta_{t}>1$) and an inverted
U-shape if $\xi_{t}<0$ ($\Leftrightarrow\eta_{t}<1$).\footnote{We have $\alpha>0$ in the Heston-Nandi GARCH model.}
Thus, this framework can generate a variety of shapes of the pricing
kernel. This is illustrated in the upper left panel of Figure \ref{fig:PricingKernels-Shape-NIC-Skew-Kurt},
where the pricing kernel for cumulated returns over one month is presented
for two levels of $\eta$. The design is based on the empirical estimates
of the model in Section \ref{sec:Empirical-Analysis}, where $\eta_{\mathrm{low}}=0.70$
and $\eta_{\mathrm{high}}=1.70$ correspond to the $10\%$ and $90\%$
quantiles of the unconditional distribution of $\eta_{t}$, respectively.
The low value of $\eta$ produce an inverted U-shape similar to that
observed during the years 2004 to 2007, whereas the high value of
$\eta$ results in a shape with a pronounced U-shape.
\begin{figure}
\centering{}\includegraphics[width=1\textwidth]{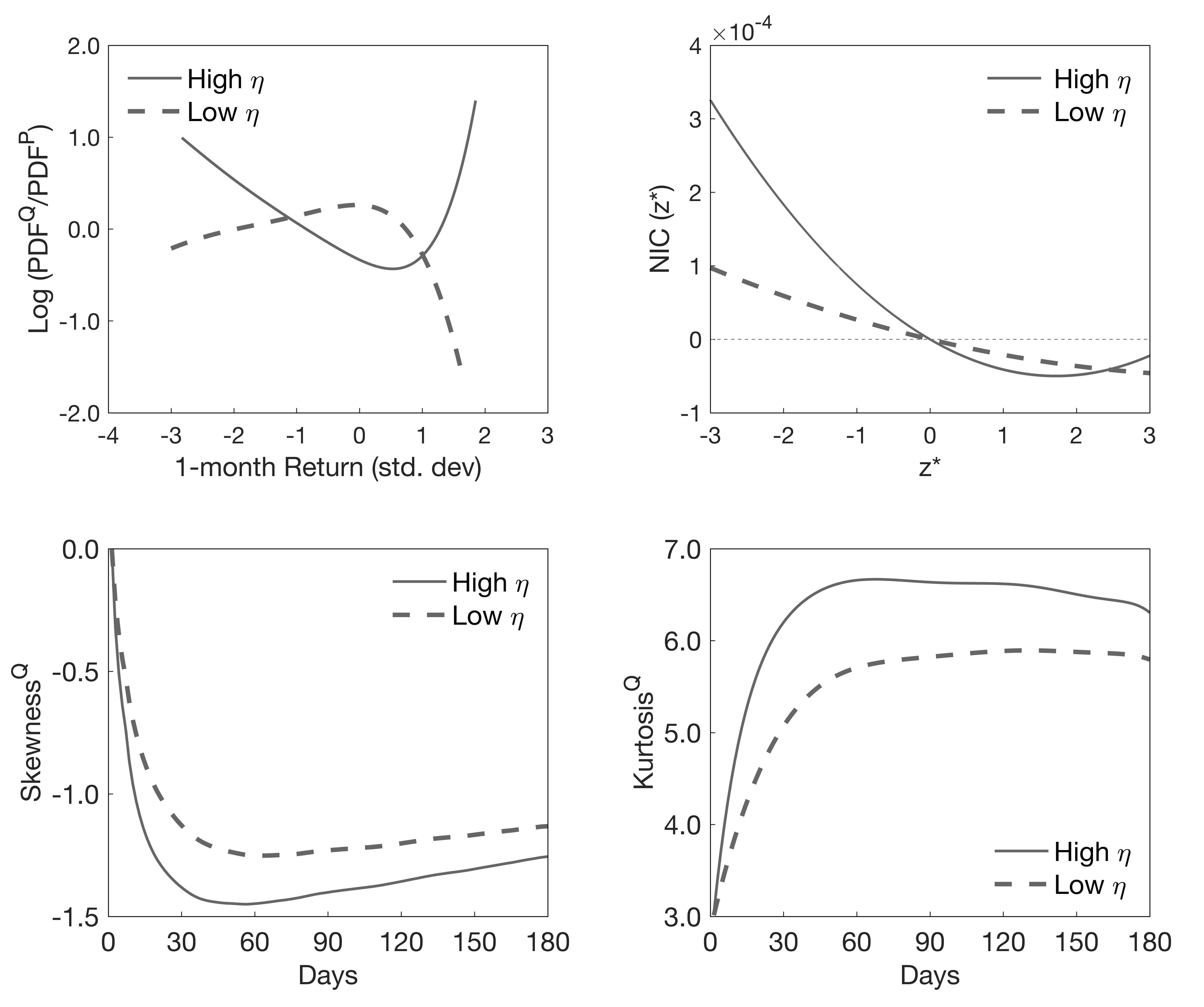}\caption{{\small Properties of cumulative returns for two levels of variance
risk ratio, $\eta_{\mathrm{low}}=0.70$ (dashed lines) and $\eta_{\mathrm{high}}=1.70$
(solid lines). The upper left panel presents the log-ratios of the
$\mathbb{Q}$-to-$\mathbb{P}$ densities for cumulative returns over
one month, the corresponding news impact curves under $\mathbb{Q}$
are shown in the upper right panel, and skewness and kurtosis of multi-period
cumulative returns under the risk-neutral measure $\mathbb{Q}$ are
shown in the two lower panels. These results are based on 1 million
simulations using a design based on our model estimates (last column
in Table \ref{tab:JointEstimation}), where $\eta_{\mathrm{low}}=0.70$
and $\eta_{\mathrm{high}}=1.70$ correspond to  the $10\%$ and $90\%$
quantiles of the unconditional distribution of $\eta_{t}$, respectively.\label{fig:PricingKernels-Shape-NIC-Skew-Kurt}}\protect \\
{\small}\protect \\
{\small Alt text: Properties of cumulative returns for two levels of
the variance risk ratio: low and high. The figure consists of four
panels displaying (1) the log-ratio of risk-neutral to physical densities,
(2) news impact curves, (3) skewness under the risk-neutral measure,
and (4) kurtosis under the risk-neutral measure.}}
\end{figure}

Third, the equity risk premium, $\lambda$, is constant, despite $\phi_{t}$
and $\xi_{t}$ being time-varying. This is an implication of the no-arbitrage
condition, that ties the linear coefficient to $\lambda$, which is
determined under the physical measure in the Heston-Nandi GARCH model.

Fourth, an important reason for permitting $\eta_{t}$ to be time-varying
is that it enables $h_{t}^{\ast}$ to incorporate information from
derivative prices. In conventional GARCH models, the conditional variance
depends only on lagged returns, such that $h_{t+1}\in\mathcal{F}_{t}^{R}$.
A restrictive implication of a constant VRR is that it requires the
two conditional variances to be proportional to each other. Consequently,
$h_{t+1}^{\ast}$ is driven solely by lagged returns, leaving no room
for derivative prices to influence its dynamic properties. Allowing
$\eta$ to be time-varying eliminates this rigidity and enables the
model to incorporate information from derivative prices into the dynamic
model of $h_{t+1}^{\ast}$. We take advantage of this feature in our
empirical implementation.

Fifth, the structure in Theorem \ref{thm:DynUnderQ} nests the CHNG
model of \citet{ChristoffersenHestonJacobs2013} and the HNG model,
which corresponds to $\eta_{t}=\eta$ and $\eta_{t}=1$, respectively.
If $h_{t}$ is constant and $\eta_{t}=1$, then it leads to the pricing
kernel given by the power utility function, see \citet{Rubinstein1976}.
So the famous Black-Scholes model is also nested as the special case.
Interestingly, when $\eta_{t}$ is time-varying, there is time variation
in the risk aversion parameters, which is similar to the habit formation
models, see \citet{CampbellCochrane1999}.

Sixth, the structure introduced in Theorem \ref{thm:DynUnderQ} generates
time-varying leverage under $\mathbb{Q}$, which manifests itself
in a number of ways. A time-varying leverage effect will impact the
distribution of cumulative returns, and the lower panels of Figure
\ref{fig:PricingKernels-Shape-NIC-Skew-Kurt} show the skewness and
kurtosis of cumulative returns under $\mathbb{Q}$ as a function of
the number of days that returns are cumulated over (the $x$-axis).
Skewness and kurtosis are shown for the two levels of the VRR, $\eta_{\mathrm{low}}$
and $\eta_{\mathrm{high}}$, and these result in distinctively different
levels of skewness and kurtosis under $\text{\ensuremath{\mathbb{Q}}}$.
Both skewness and kurtosis are far more pronounced for the large value
of $\eta$. Another way to illustrate the leverage effect is with
the news impact curve by \citet{Engle_Ng_1993}. The news impact curve
under $\mathbb{Q}$ is defined by $\mathrm{NIC}(z^{\ast})=\mathbb{E}_{t}^{\mathbb{Q}}(h_{t+1}^{\ast}|z_{t}^{\ast}=z^{\ast})-\mathbb{E}_{t}^{\mathbb{Q}}(h_{t+1}^{\ast}|z_{t}^{\ast}=0)$,
and from Theorem \ref{thm:DynUnderQ} it follows that $\mathrm{NIC}(z^{\ast})=\alpha\eta_{t}\eta_{t-1}z^{\ast2}-2\alpha\gamma\eta_{t}\sqrt{h_{t}^{*}}z^{\ast}$,
whose shape depends on the VRR. The upper right panel of Figure \ref{fig:PricingKernels-Shape-NIC-Skew-Kurt}
presents the news impact curve, with $h_{t}^{\ast}$ equal to its
unconditional mean under $\mathbb{Q}$, as estimated in Section \ref{sec:Empirical-Analysis},
and with $\eta=\eta_{t}=\eta_{t-1}$ equal to either $\eta_{\mathrm{low}}$
(dashed line) or $\eta_{\mathrm{high}}$ (solid line). We observe
the asymmetric shape of the news impact curve, see e.g. \citet{ChenGhysels:2011-news-impact-curve}.
However, the level of asymmetry depends on the level of $\eta$, such
that volatility is more responsive to negative return shocks when
$\eta$ is large, than when $\eta$ is small. 

\section{Derivatives Pricing with Dynamic Volatility Risk Aversion\label{sec:Derivative-Prices}}

In this section, we consider cases where $\eta_{t}$ is time-varying
and stochastic. We derive a closed-form expression for the VIX under
the assumption that $\log\eta_{t}$ follows an autoregressive moving
average process, ARMA$(p,q)$. We also obtain analytical expressions
for option prices under the same assumptions. The option pricing formula
is obtained with a novel approximation method based on a Taylor expansion
of the MGF for cumulative returns. 

\subsection{VIX  Pricing}

It is relatively straightforward to obtain closed-form expressions
for VIX pricing, once the dynamic properties of $\eta_{t}$ under
$\mathbb{Q}$ are known. This only requires expressions for $\mathbb{E}_{t}^{\mathbb{Q}}(h_{t+k}^{*})$,
$k=1,\ldots,M$, because the risk-neutral dynamics of returns under
Theorem \ref{thm:DynUnderQ} implies the following $M$-period ahead
VIX pricing formula 
\begin{equation}
{\rm VIX}_{t}=A\times\sqrt{\frac{1}{M}\sum_{k=1}^{M}\mathbb{E}_{t}^{\mathbb{Q}}(h_{t+k}^{*})},\label{eq:VIXform}
\end{equation}
where $A=100\sqrt{252}$ is the annualizing factor.
\begin{assumption}
\label{assu:ARMA}Suppose that the logarithm of variance risk ratio,
$\log\eta_{t}$, follows an autoregressive moving average process
of orders $(p,q)$,
\begin{equation}
\varphi(L)(\log\eta_{t}-\zeta)=\theta(L)\varepsilon_{t},\label{eq:ARMA}
\end{equation}
where $\varphi(x)=0\Rightarrow|x|>1$, $\theta(x)$ and $\varphi(x)$
have no common roots. The innovation $\varepsilon_{t}$ is $\mathcal{G}_{t}$-measurable,
and $\varepsilon_{t+1}|\mathcal{G}_{t}$ follows an iid distribution
with $\mathbb{E}_{t}^{\mathbb{Q}}(\varepsilon_{t+1})=0$, $\mathrm{var}_{t}^{\mathbb{Q}}(\varepsilon_{t+1})=\sigma^{2}$,
and a MGF satisfying $\operatorname{mgf}_{\varepsilon}(s)\equiv\mathbb{E}_{t}^{\mathbb{Q}}\left[\exp\left(s\varepsilon_{t+1}\right)\right]<\infty$
a.s. for $s\in\mathbb{R}$. Additionally, $\varepsilon_{t}$ is independent
of the return shock $z_{\tau}$ for all $\tau$. 
\end{assumption}
Assumption \ref{assu:ARMA} is explicit about the first two conditional
moments of $\varepsilon_{t}$, but does not impose restrictive distributional
assumptions. However, the exact distribution is important for VIX
and option prices, as these depend on $\operatorname{mgf}_{\varepsilon}(s)$.
That $\varepsilon_{t}$ is $\mathcal{G}_{t}$-measurable follows from
Assumption \ref{assu:Dyn-pricing-kernel} and therefore not a new
assumption.
\begin{thm}[VIX pricing]
\label{thm:VixPricing}Suppose that Assumptions \ref{assu:HNG}-\ref{assu:ARMA}
hold, then the analytical expression for $M$-period ahead VIX in
(\ref{eq:VIXform}) is given by
\begin{equation}
\mathrm{VIX}(M,\sigma^{2},h_{t+1}^{\ast})=\sqrt{a_{1}(M,\sigma^{2})+a_{2}(M,\sigma^{2})h_{t+1}^{*},}\label{eq:VixPricing}
\end{equation}
where $a_{1}(M,\sigma^{2})$ and $a_{2}(M,\sigma^{2})$ both depend
on $(\eta_{t},\ldots,\eta_{t-p+1})$, $(\varepsilon_{t},\ldots,\varepsilon_{t-q+1})$,
and $\operatorname{mgf}_{\varepsilon}(s)$. See Appendix \ref{sec:Proof_VIXpricing}
for their exact expressions.\footnote{We have suppressed the terms dependence on $\varphi$, $\zeta$, and
the parameters of the HNG model and (\ref{eq:VixPricing}) suppresses
an approximation term that is negligible in practice (see Lemma \ref{lem:bound}).}
\end{thm}

\subsection{Option Pricing: A Novel Analytical Approximation}

In this section, we derive the European option pricing formula for
the case where the variance risk aversion is time-varying. 

The European call option price at time $t$ is given by the conditional
risk-neutral expectation $C_{t}=\ensuremath{e^{-r\left(T-t\right)}\mathbb{E}_{t}^{\mathbb{Q}}\left[\max\left(S_{T}-K,0\right)\right]}$,
where $T$ is the maturity date, $K$ is the strike price, and $S_{T}$
is the terminal price of underlying asset. Let $M=T-t$ be the number
of periods to maturity, it is well-known that an affine structure
of the MGF of future cumulative returns,
\[
g_{t,M}(s)=\mathbb{E}_{t}^{\mathbb{Q}}\left[\exp\left(s\sum_{i=1}^{M}R_{t+i}\right)\right],\quad s\in\mathbb{R}
\]
is the key to closed-form option pricing formula, see \citet{HestonNandi2000}.
Unfortunately, $g_{t,M}(s)$ does not have an affine structure when
the variance risk aversion follows a stochastic process. We propose
a novel approximation method that extrapolates from the solution to
a simpler auxiliary problem, leading to an approximation of $g_{t,M}(s)$. 

The future values of $\log\eta_{t}$ that are relevant for pricing
an option with $M$ periods to maturity are the elements of the vector,
$\boldsymbol{\eta}_{t,M}=\left(\log\eta_{t+1},\ldots,\log\eta_{t+M}\right)^{\prime}$.
Under Assumption \ref{assu:ARMA}, this is a random vector. For later
use, we use the notation, $\bar{\boldsymbol{\eta}}_{t,M}$, to represent
a predetermined path (i.e. $\bar{\boldsymbol{\eta}}_{t,M}\in\mathcal{G}_{t}$),
where the prime example is the conditionally expected trajectory,
given by $\bar{\boldsymbol{\eta}}_{t,M}^{e}=\mathbb{E}_{t}^{\mathbb{Q}}(\boldsymbol{\eta}_{t,M})\in\mathbb{R}^{M\times1}$.

\subsubsection{MGF with Predetermined Path}

The assumption that $\log\eta_{t}$ will follow a deterministic trajectory,
$\bar{\boldsymbol{\eta}}_{t,M}$, implies that $\phi_{t+j}$ and $\xi_{t+j}$
are also predetermined for $j=1,\ldots,M$, and similarly, $\omega_{t+j}^{*}$,
$\alpha_{t+j}^{*}$, $\beta_{t+j}^{*}$, and $\gamma_{t+j}^{*}$ (the
GARCH parameters under $\mathbb{Q}$) are also predetermined for $j=1,\ldots,M$.
This is a direct consequence of Theorem \ref{thm:DynUnderQ}. We show
that the affine structure is preserved in this case with the following
closed-form expression for the MGF of cumulative returns. 
\begin{lem}
\label{lem:Closed-form-OP}Suppose that Assumptions \ref{assu:HNG}-\ref{assu:Dyn-pricing-kernel}
hold and that $\log\eta_{t}$ take the predetermined path, $\bar{\boldsymbol{\eta}}_{t,M}$.
Then the MGF for future cumulative returns under $\mathbb{Q}$ has
the affine form,
\begin{align*}
g_{t,M}(s|\bar{\boldsymbol{\eta}}_{t,M}) & =\exp\left(A_{T}(s,M)+B_{T}(s,M)h_{t+1}^{*}\right),\quad T=t+M,
\end{align*}
where the expressions for $A_{T}(s,M)$ and $B_{T}(s,M)$ depend on
$\bar{\boldsymbol{\eta}}_{t,M}$, and both are given from simple recursive
expressions, see Appendix \ref{sec:Proof_OP}.
\end{lem}
The expression for the MGF in Lemma \ref{lem:Closed-form-OP} simplifies
to that in \citet{ChristoffersenHestonJacobs2013} if $\bar{\boldsymbol{\eta}}_{t,M}=(\log\eta,\ldots,\log\eta)^{\prime}$,
and it simplifies further to that in \citet{HestonNandi2000} when
$\bar{\boldsymbol{\eta}}_{t,M}=0_{M\times1}$. Aside from these two
special cases, this naive approach does not constitute a coherent
model for option pricing. However, it serves as an important auxiliary
``model'' with the desired affine structure.

\subsubsection{Option Pricing with Stochastic Variance Risk Aversion\label{subsec:OPRand}}

We now turn to the more challenging problem where $\boldsymbol{\eta}_{t,M}$
is stochastic. The lack of an affine structure thwarts the standard
approach to obtaining closed-form option pricing formula, and it is
common to resort to simulation methods in this case. 

The idea behind our approximation method is simply to extrapolate
from the simple case, $g_{t,M}(s|\bar{\boldsymbol{\eta}}_{t,M}^{e})$,
and apply a second-order Taylor expansion about $\bar{\boldsymbol{\eta}}_{t,M}^{e}$.
This leads to a quadratic expression in the $M$-dimensional vector,
$\boldsymbol{\varepsilon}_{t,M}=\boldsymbol{\eta}_{t,M}-\bar{\boldsymbol{\eta}}_{t,M}^{e}$.
After taking the conditional expectation, we arrive at the approximate
MGF, which is a perturbation of $g_{t,M}(s|\bar{\boldsymbol{\eta}}_{t,M}^{e})$
to account for the randomness in $\boldsymbol{\eta}_{t,M}$. The adjustment
of $g_{t,M}(s|\bar{\boldsymbol{\eta}}_{t,M}^{e})$ is simple, because
Assumption \ref{assu:ARMA} implies that $\mathbb{E}_{t}^{\mathbb{Q}}[\boldsymbol{\varepsilon}_{t,M}]=0$
and makes it simple to evaluate $\mathbb{E}_{t}^{\mathbb{Q}}[\boldsymbol{\varepsilon}_{t,M}\boldsymbol{\varepsilon}_{t,M}^{\prime}]=\Sigma_{M}$.
\begin{thm}
\label{thm:OptionPricingRandom}Suppose that Assumptions \ref{assu:HNG}-\ref{assu:ARMA}
hold, then the approximate model-implied MGF, based on a second-order
Taylor expansion, is given by
\[
\hat{g}_{t,M}(s)=g_{t,M}(s|\bar{\boldsymbol{\eta}}_{t,M}^{e})\left[1+\frac{1}{2}{\rm tr}\{H_{t,M}(s)\Sigma_{M}\}\right],
\]
where $H_{t,M}(s)\in\mathbb{R}^{M\times M}$ is the Hessian from the
Taylor expansion and $\Sigma_{M}=\operatorname{var}_{t}^{\mathbb{Q}}(\boldsymbol{\varepsilon}_{t,M})$,
see Appendix \ref{sec:ProofOPrandom} for details.\footnote{While $g_{t,M}(s)$ is a MGF, the approximation, $\hat{g}_{t,M}(s)$,
need not be one, because there may not exist a density for which $\hat{g}_{t,M}(s)$
is the corresponding MGF.} The corresponding European call option price is given by
\[
\hat{C}(S_{t},M,K,r;h_{t+1}^{\ast})=S_{t}P_{1,t}-K\exp(-rM)P_{2,t},
\]
where
\begin{eqnarray*}
P_{1,t} & = & \text{\ensuremath{\frac{1}{2}+\frac{\exp(-rM)}{\pi}\int_{0}^{\infty}{\rm Re}\left[\frac{K^{-iu}\hat{g}_{t,M}(iu+1)}{iuS_{t}}\right]du},}\\
P_{2,t} & = & \frac{1}{2}+\frac{1}{\pi}\int_{0}^{\infty}{\rm Re}\left[\frac{K^{-iu}\hat{g}_{t,M}(iu)}{iu}\right]du.
\end{eqnarray*}
\end{thm}
Theorem \ref{thm:OptionPricingRandom} clarifies how random variation
in $\eta_{t}$ impacts the MGF.\footnote{The integrations in Theorem \ref{thm:OptionPricingRandom} can be
numerically evaluated through the method in Appendix \ref{subsec:NumeCom}.} The core principle behind the proposed approximation method is akin
to perturbation methods used to find approximate solutions to Schrödinger
equations for which standard methods are not applicable. Perturbation
methods also starts from an exact solution of a simpler, related problem,
from which an approximate solution to the actual problem is deduced.
Our approximation is based on a Taylor expansion about an $M$-dimensional
vector, where the conditional moments of the terms in the expansion
are deduced from Assumption \ref{assu:ARMA}. Next, we evaluate the
accuracy of the approximation.
\begin{figure}
\centering{}\includegraphics[width=1\textwidth]{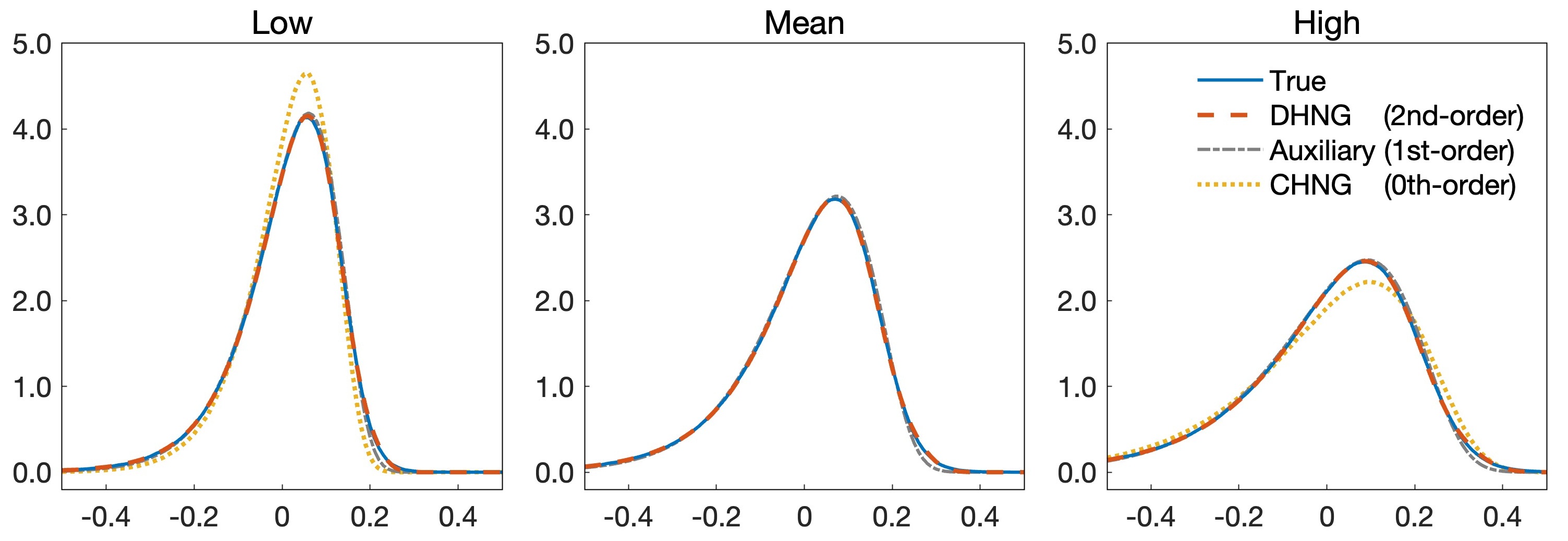}\caption{{\small Risk-neutral densities of (six months) cumulative return, for
different initial values of variance risk ratio, $\eta_{0}=0.70$
(left panel), $\eta_{0}=1.11$ (middle panel), and $\eta_{0}=1.70$
(right panel). The blue solid line represents the true density from
1 million simulations, the red dashed line is the (2nd-order) approximate
density by DHNG, the gray dash-dotted line is the auxiliary model
based on $\bar{\boldsymbol{\eta}}_{t,M}^{e}$ (1st-order), and the
yellow dotted line is the (0th-order) density by CHNG with constant
$\eta_{t}$. The model parameters are based on the estimates with
option prices (the last column in Table \ref{tab:JointEstimation}).\label{fig:Density_True_Appro}}\protect \\
{\small}\protect \\
{\small Alt text: Risk-neutral densities for cumulative six-month returns,
shown for three initial levels of the variance risk ratio: low, middle,
and high. Each density is compared with three approximations---CHNG,
the auxiliary model, and DHNG. DHNG provides the most accurate approximation.}}
\end{figure}

\subsection{Accuracy of Analytical Approximation}

The option pricing formula in Theorem \ref{thm:OptionPricingRandom}
involves a second-order approximation that accounts for the first
two conditional moments of $\text{\ensuremath{\boldsymbol{\eta}_{t,M}}}$.\footnote{If the distribution of $\boldsymbol{\varepsilon}_{t,M}$ is symmetric,
(e.g. Gaussian), then it is an approximation to third-order.} A first-order approximation is equivalent to assuming that $\eta_{t}$
will take the path of its conditional expectation, $\bar{\boldsymbol{\eta}}_{t,M}^{e}$.
Thus, $g_{t,M}(\cdot|\bar{\boldsymbol{\eta}}_{t,M}^{e})$ is the MGF
for the first-order approximation, which we previously labeled ``Auxiliary'',
because it was used as an intermediate step towards our preferred
approximation, $\hat{g}_{t,M}(\cdot)$. Taking $\eta_{t}=\eta_{0}$
to be constant, $\bar{\boldsymbol{\eta}}_{t,M}^{c}=(\log\eta_{0},\ldots,\log\eta_{0})^{\prime}$,
is a characteristic of CHNG, and this case can be interpreted as the
zero-order ``approximation''.

We first consider the risk-neutral density of cumulative returns.
The blue solid lines in Figure \ref{fig:Density_True_Appro} present
the true risk-neutral densities for cumulative returns over six months
for three initial values, $\eta_{0}=0.70$ (left panel), $\eta_{0}=1.11$
(middle panel), and $\eta_{0}=1.70$ (right panel), based on 1 million
simulations. The data generation process is based on the structure
in Theorem \ref{thm:DynUnderQ}, using the parameter estimates from
our empirical analysis of option prices with $\log\eta_{t}\sim\operatorname{AR}(1)$,
normally distributed innovations, and $h_{1}$ initialized at its
unconditional mean. In this design, the exponential of unconditional
mean, $\exp\left(\mathbb{E}\log\eta_{t}\right)=\exp\left(\zeta\right)$,
is used as the initial value $\eta_{0}$ in the middle panel.

The red dashed lines are the risk-neutral densities of DHNG implied
by $\hat{g}_{t,M}(\cdot)$, the gray dot-dashed lines are based on
the auxiliary, $g_{t,M}(\cdot|\bar{\boldsymbol{\eta}}_{t,M}^{e})$,
and the yellow dotted line is the risk-neutral density for CHNG, i.e.
$g_{t,M}(\cdot|\bar{\boldsymbol{\eta}}_{t,M}^{c})$. DHNG, which is
based on the second-order approximation of Theorem \ref{thm:OptionPricingRandom},
is accurate, whereas the Auxiliary structure fails to match the upper
tail of the densities. The CHNG is the worst approximation of the
true density in all cases. It is tied with Auxiliary in the middle
panel, because the two are identical when $\log\eta_{0}$ is set to
have its unconditional mean $\zeta$.

The accuracy of the densities can also be assessed by the moments
of cumulative returns under $\mathbb{Q}$. We plot the variance, skewness
and kurtosis of multi-period cumulative returns in Figure \ref{fig:True_Appro}.
The variance is presented in annualized units and scaled by 100, such
that 4 corresponds to $\sqrt{0.04}=20\%$ annualized volatility. The
$k$-th moment of cumulative returns can be computed by evaluating
the $k$-th derivative of the MGF in Theorem \ref{thm:OptionPricingRandom}
at zero. However, Theorem \ref{thm:moments}, presented below, provides
a much simpler and more direct method for evaluating moments. Figure
\ref{fig:True_Appro} shows that neglecting the random variation in
$\eta_{t}$ yields moments that are far from their true values, whereas
the second-order approximation is quite accurate, especially for the
second and third moment. Both CHNG and Auxiliary are far less accurate,
with CHNG being the least accurate, except in the middle column of
panels, where $\bar{\boldsymbol{\eta}}_{t,M}^{c}=\bar{\boldsymbol{\eta}}_{t,M}^{e}$.
\begin{sidewaysfigure}
\centering{}\includegraphics[width=1\textwidth]{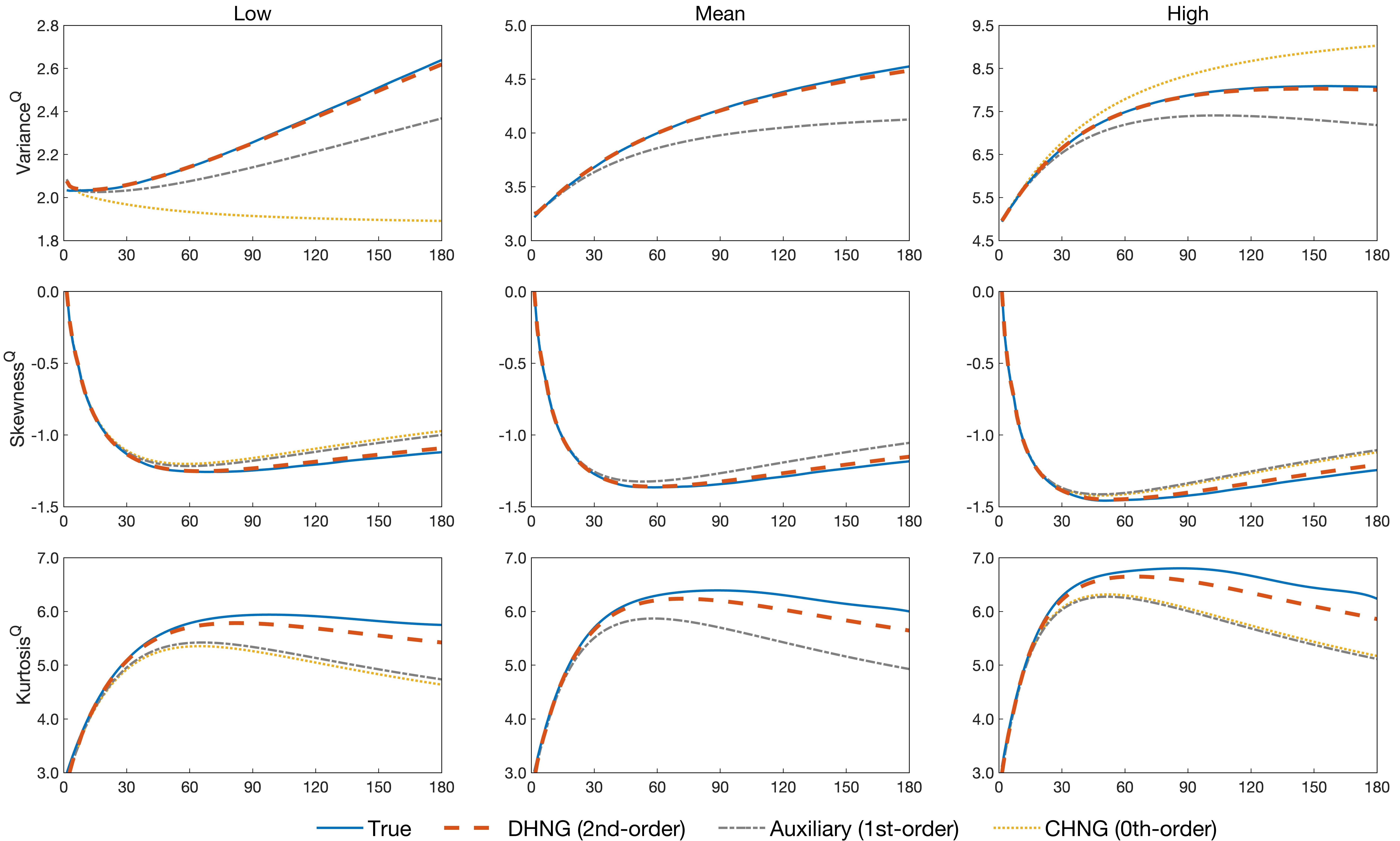}\caption{{\small Annualized variance, $\frac{252}{M}{\rm var}_{t}^{\mathbb{Q}}(\sum_{i=1}^{M}R_{t+i})\times100\%$,
skewness, and kurtosis of cumulative returns over $M$ days, are shown
with three different initial variance risk ratios, $\eta_{0}=0.70$
(left panels), $\eta_{0}=1.11$ (middle panels), and $\eta_{0}=1.70$
(right panels). Blue lines represent true population moments, red
dashed lines are those of DHNG, yellow dotted lines are those for
CHNG, and the gray dash-dotted lines are those for the auxiliary ``model''.
Population moments were obtained from 1 million simulated trajectories,
using a design based on the parameter estimates obtained with option
prices (the last column in Table \ref{tab:JointEstimation}).\label{fig:True_Appro}}\protect \\
{\small}\protect \\
{\small Alt text: Variance, skewness, and kurtosis of cumulative returns
for three initial levels of the variance risk ratio: low, middle,
and high. The figure compares the term structure of true moments with
those estimated using CHNG, the auxiliary model, and DHNG, with DHNG
providing the most accurate approximation.}}
\end{sidewaysfigure}

\begin{thm}
\label{thm:moments}The conditional risk-neutral $k$-th moment for
$k\in\mathbb{N}$ of the $M$-period ahead cumulative returns can
be expressed as
\begin{eqnarray*}
\mathbb{E}_{t}^{\mathbb{Q}}\left(R_{t,M}^{k}\right) & = & \frac{1}{\pi}\int_{0}^{\infty}\ensuremath{{\rm Re}\left[\frac{k!}{v^{k+1}}g_{t,M}(v)-\frac{k!}{u^{k+1}}g_{t,M}(u)\right]\mathrm{d}s,}
\end{eqnarray*}
where $R_{t,M}=\sum_{i=1}^{M}R_{t+i}$, and $u,v$ are two complex
numbers denoted by $u=u_{R}+is$ and $v=v_{R}+is$, with $u_{R}<0$,
$v_{R}>0$ and $s\in\mathbb{R}$. The function, $g_{t,M}\left(u\right)=\mathbb{E}_{t}^{\mathbb{Q}}\left[\exp\left(uR_{t,M}\right)\right]$,
is the conditional MGF of the cumulated returns over $M$ periods.
\end{thm}
The expression for the conditional moments in Theorem \ref{thm:moments}
is, to the best of our knowledge, a new result, and it is applicable
to any dynamic process with a well-defined MGF.

As a third way to assess the accuracy, we compare the (model) implied
volatilities with the true implied volatilities for options with different
levels of moneyness (Black-Scholes Delta) and days-to-maturity (DTM).
Table \ref{tab:ApproError} reports the percentage errors, $e_{i}\equiv100\times\left[\log{\rm IV}_{i}^{{\rm model}}-\log{\rm IV}_{i}^{{\rm true}}\right]$,
where the implied volatilities are deduced from option prices by the
Black-Scholes formula. The data generating process is the same as
that used in Figures \ref{fig:Density_True_Appro} and \ref{fig:True_Appro}.
We consider three initial values of $\eta_{0}$ over a two-dimensional
grid for Delta and DTM. 

Table \ref{tab:ApproError} shows that the DHNG option pricing formula
of Theorem \ref{thm:OptionPricingRandom} is far more accurate than
those of Auxiliary and CHNG. This is true across all types of options
and initial values of $\eta_{0}$. The absolute approximation error
(MAE) is just 0.08\% on average over all designs in Table \ref{tab:ApproError},
and the average error (ME) is -0.05\%. Unsurprisingly, CHNG is the
least accurate option pricing model. The assumptions that $\eta_{t}$
is constant leads to underpricing of options when $\eta_{0}$ is small
and overpricing when $\eta_{0}$ is large. The option pricing formula
for the Auxiliary structure systematically underprices options, and
we observe that the pricing error is increasing in DTM. This is consistent
with the results in Figure \ref{fig:True_Appro}, where the predicted
values of $h_{t+1}^{\ast},\ldots,h_{t+M}^{\ast}$ for the Auxiliary
structure are downward-biased and increasingly so as $M$ increases.
This has implications for empirical estimations under the Auxiliary
structure. To rationalize the observed option prices, the Auxiliary
structure will need to exaggerate the predicted values of $\log\eta_{t}$,
such that Auxiliary is expected to yield larger predictions of $\log\eta_{t}$
than those based on DHNG when estimated with option prices. This is
indeed what we find in the supplementary results, see Table \ref{tab:JointEstimation_Appro}
in Appendix \ref{subsec:ResultsAuxi}.
\begin{table}
\caption{Approximation Errors by Logarithm of Implied Volatility (in percentage)}

\begin{centering}
\vspace{0.2cm}
\begin{footnotesize}
\begin{tabularx}{\textwidth}{YYp{-0.1cm}YYYp{-0.1cm}YYp{-0.1cm}YYYY}
\toprule  
\midrule
          &       &       & \multicolumn{3}{c}{Low $\eta$} &       & \multicolumn{2}{c}{Mean $\eta$} &       & \multicolumn{3}{c}{High $\eta$} \\
\cmidrule{4-6}\cmidrule{8-9}\cmidrule{11-13}    Delta & DTM   &       & CHNG  & Auxiliary & DHNG &       & Auxiliary & DHNG &       & CHNG  & Auxiliary & DHNG \\
    \midrule
          &       &       &       &       &       &       &       &       &       &       &       &  \\
          & 30    &       & -2.09 & -0.41 & \ 0.10  &       & -0.48 & \ 0.11  &       & 1.16  & -0.72 & -0.02 \\
          & 60    &       & -5.06 & -1.30 & \ 0.02  &       & -1.55 & -0.04 &       & 2.55  & -1.69 & -0.05 \\
    0.3   & 90    &       & -8.26 & -2.23 & -0.06 &       & -2.41 & -0.07 &       & 3.93  & -2.56 & -0.06 \\
          & 120   &       & -11.21 & -2.99 & -0.07 &       & -3.15 & -0.06 &       & 5.26  & -3.28 & -0.06 \\
          & 150   &       & -13.99 & -3.70 & -0.09 &       & -3.84 & -0.09 &       & 6.50  & -3.93 & -0.09 \\
          &       &       &       &       &       &       &       &       &       &       &       &  \\
          & 30    &       & -1.62 & -0.02 & \ 0.41  &       & -0.49 & 0.04  &       & 1.12  & -0.68 & -0.07 \\
          & 60    &       & -4.88 & -1.23 & -0.04 &       & -1.34 & -0.05 &       & 2.61  & -1.44 & -0.05 \\
    0.4   & 90    &       & -7.83 & -1.96 & -0.07 &       & -2.08 & -0.06 &       & 4.03  & -2.18 & -0.06 \\
          & 120   &       & -10.64 & -2.62 & -0.05 &       & -2.72 & -0.05 &       & 5.41  & -2.80 & -0.05 \\
          & 150   &       & -13.34 & -3.26 & -0.07 &       & -3.33 & -0.07 &       & 6.69  & -3.37 & -0.07 \\
          &       &       &       &       &       &       &       &       &       &       &       &  \\
          & 30    &       & -1.74 & -0.20 & \ 0.22  &       & -0.69 & -0.18 &       & 1.11  & -0.63 & -0.09 \\
          & 60    &       & -4.83 & -1.21 & -0.10 &       & -1.23 & -0.05 &       & 2.56  & -1.32 & -0.05 \\
    0.5   & 90    &       & -7.59 & -1.82 & -0.06 &       & -1.92 & -0.06 &       & 3.98  & -1.99 & -0.06 \\
          & 120   &       & -10.38 & -2.46 & -0.04 &       & -2.53 & -0.04 &       & 5.37  & -2.57 & -0.04 \\
          & 150   &       & -13.07 & -3.07 & -0.06 &       & -3.10 & -0.06 &       & 6.66  & -3.11 & -0.05 \\
          &       &       &       &       &       &       &       &       &       &       &       &  \\
          & 30    &       & -2.32 & -0.76 & -0.30 &       & -0.65 & -0.17 &       & 1.13  & -0.55 & -0.03 \\
          & 60    &       & -4.72 & -1.11 & -0.05 &       & -1.22 & -0.05 &       & 2.46  & -1.29 & -0.06 \\
    0.6   & 90    &       & -7.57 & -1.82 & -0.04 &       & -1.89 & -0.05 &       & 3.85  & -1.94 & -0.05 \\
          & 120   &       & -10.40 & -2.46 & -0.04 &       & -2.50 & -0.03 &       & 5.22  & -2.51 & -0.03 \\
          & 150   &       & -13.16 & -3.09 & -0.05 &       & -3.08 & -0.05 &       & 6.50  & -3.05 & -0.05 \\
          &       &       &       &       &       &       &       &       &       &       &       &  \\
          & 30    &       & -2.70 & -1.05 & -0.56 &       & -0.38 & \ 0.10  &       & 0.99  & -0.65 & -0.10 \\
          & 60    &       & -4.74 & -1.14 & -0.01 &       & -1.28 & -0.08 &       & 2.31  & -1.34 & -0.07 \\
    0.7   & 90    &       & -7.82 & -1.94 & -0.06 &       & -1.98 & -0.05 &       & 3.66  & -2.01 & -0.05 \\
          & 120   &       & -10.81 & -2.65 & -0.04 &       & -2.65 & -0.04 &       & 4.97  & -2.62 & -0.04 \\
          & 150   &       & -13.78 & -3.32 & -0.04 &       & -3.26 & -0.04 &       & 6.25  & -3.19 & -0.04 \\
          &       &       &       &       &       &       &       &       &       &       &       &  \\
    MAE   &       &       & \ 7.78  & \ 1.91  & \ 0.11  &       & \ 1.99  & \ 0.07  &       & 3.85  & \ 2.06  & \ 0.06 \\
    ME    &       &       & -7.78 & -1.91 & -0.05 &       & -1.99 & -0.05 &       & 3.85  & -2.06 & -0.06 \\
\\[0.0cm]
\\[-0.5cm]
\midrule
\bottomrule
\end{tabularx}
\end{footnotesize}
\par\end{centering}
{\small Note: This table reports the approximation errors, $e_{i}=\left[\log{\rm IV}_{i}^{{\rm model}}-\log{\rm IV}_{i}^{{\rm true}}\right]\times100$
for options with different Moneyness (Delta), days to maturity (DTM),
and initial value of $\eta$. The true option prices are obtained
with 10 million simulated paths from the true model, and the implied
volatilities are computed using Black-Scholes formula. The column-wise
averages for the mean absolute error (MAE) and the mean error (ME),
are presented in the last two rows. CHNG, Auxiliary, and DHNG correspond
to zero-, first-, and second-order approximations, respectively. The
model parameters are based on the empirical estimates obtained with
option prices (the last column in Table \ref{tab:JointEstimation}).\label{tab:ApproError}}{\small\par}
\end{table}

\section{Observation-Driven Model for Variance Risk Aversion\label{sec:ScoreDrivenModel}}

In this section, we develop an observation-driven model for $\eta_{t}$,
which facilitates an implementation of the results from the previous
section. In practice, this requires inferring the current value of
$\eta_{t}$ and estimate its dynamic model. To this end, we adopt
an intuitive observation-driven model for $\eta_{t}$, inspired by
the score-driven framework of \citet{CrealKoopmanLucas:2013}. We
use the first-order conditions for minimizing pricing errors to define
innovations to $\eta_{t}$, ensuring that $\eta_{t}$ is adjusted
to reduce pricing errors on average.\footnote{Alternatively, one could adopt state space approach. This is computationally
more complicated without necessarily providing any benefit. Even when
the true model is a state space model, the score-driven models are
typically found to be competitive, see \citet{KoopmanLucasScharth:2016}.} The first-order condition of the log-likelihood function is presented
in the empirical section. For now, it suffices to express the log-likelihood
function in its generic form, $\sum_{t=1}^{T}\ell(R_{t},X_{t}|\mathcal{F}_{t-1})$,
where $T$ is the sample size, and we should emphasize that it relies
on information from both $\mathbb{P}$ and $\mathbb{Q}$. We factorize
$\ell(R_{t},X_{t}|\mathcal{F}_{t-1})=\ell(R_{t}|\mathcal{F}_{t-1})+\ell(X_{t}|R_{t},\mathcal{F}_{t-1})$,
where the latter can be expressed as $\ell(X_{t}|\mathcal{G}_{t})$.
This likelihood term measures how well the observed derivative prices
are explained by the statistical model for returns and the pricing
kernel.

A score-driven model updates a parameter in the direction dictated
by the first-order conditions of the log-likelihood function, known
as the score.\footnote{The score-driven approach is locally optimal in the Kullback-Leibler
sense, see \citet{BlasquesKoopmanLucas2015}.} In our model, the relevant score is $\partial\ell(X_{t}|\mathcal{G}_{t})/\partial\log\eta_{t}$,
because $\ell(R_{t}|\mathcal{F}_{t-1})$ does not depend on $\eta_{t}$.
The required structure for $\varepsilon_{t}$, as specified in Assumption
\ref{assu:ARMA}, motivates the choice
\begin{equation}
\varepsilon_{t+1}=\sigma s_{t},\label{eq:Def-of-e}
\end{equation}
where the normalized score is defined by
\begin{equation}
s_{t}=\frac{\nabla_{t}}{\sqrt{\mathbb{E}_{t}^{\mathbb{P}}\left(\nabla_{t}^{2}\right)}},\quad\text{with}\quad\nabla_{t}=\frac{\partial\ell(X_{t}|\mathcal{G}_{t})}{\partial\log\eta_{t}}.\label{eq:ScaledScore}
\end{equation}
Since $s_{t}$ is $\mathcal{F}_{t}$-measurable, we have $\varepsilon_{t+1}\in\mathcal{F}_{t}\subset\mathcal{G}_{t+1}$
as required by Assumption \ref{assu:ARMA}. Additionally, when the
score is evaluated at the true parameters we have
\[
\mathbb{E}_{t}^{\mathbb{P}}\left(s_{t}\right)=0,\quad\text{and}\qquad\mathrm{var}_{t}^{\mathbb{P}}\left(s_{t}\right)=1,
\]
such that $\mathbb{E}_{t}^{\mathbb{P}}(\varepsilon_{t+1})=0$ and
$\mathrm{var}_{t}^{\mathbb{P}}(\varepsilon_{t+1})=\sigma^{2}$. Note
that this definition of the score does not ensure that $\{s_{t}\}$
is a sequence of iid random variables with zero mean and unit variance
under $\mathbb{Q}$, as needed by Assumption \ref{assu:ARMA}. This
requires additional distributional assumptions about the pricing errors.
For instance, we will make assumptions about the pricing errors that
imply $s_{t}|\mathcal{G}_{t}\sim iid\ N(0,1)$ under both in $\mathbb{P}$
and $\mathbb{Q}$ measures (see Theorem \ref{thm:ScoreProp} below),
in which case the score satisfies all the requirements in Assumption
\ref{assu:ARMA}. From (\ref{eq:ARMA}) and the specifications (\ref{eq:Def-of-e})-(\ref{eq:ScaledScore}),
it follows that our constructed $\eta_{t}$ is $\mathcal{F}_{t-1}$-measurable,
which ensures that the information about current derivative prices
are not used to price themselves.

This approach to modeling $\log\eta_{t}$ is analogous to the way
the conditional variance is modeled in GARCH models. For instance,
the GARCH(1,1) model by \citet{bollerslev:86} implies that $h_{t}=c+bh_{t-1}+a(r_{t-1}^{2}-h_{t-1})$,
such that $h_{t}\sim\operatorname{AR}(1)$ and changes in $h_{t}$
are driven by discrepancies between squared returns and the conditional
variance. The score model invokes a similar self-adjusting property,
where $\log\eta_{t}$ is updated in response to derivative prices
that indicate that pricing errors can be reduced by revising the value
of $\log\eta_{t}$.

In our empirical analysis, we will adopt a score-driven model with
an $\operatorname{AR}(1)$ structure for $\log\eta_{t}$. Theorem
\ref{thm:OptionPricingRandom} can accommodate the case where $\log\eta_{t}\sim\operatorname{ARMA}(p,q)$,
whereas more general models for $\log\eta_{t}$, such as exogenous/endogenous
explanatory variables and long-memory specifications, would require
analogous option pricing results to be established first.

\subsection{The Log-Likelihood Function and the Form of Score\label{subsec:Log-Likelihood-Function}}

Next, we turn to the log-likelihood function used to estimate model
parameters. Our observed data consist of returns, $R_{t}$, and a
vector of derivative prices, $X_{t}$. Without loss of generality,
we can factorize the log-likelihood function as follows:
\[
\ell(R_{1},\ldots,R_{T},X_{1},\ldots,X_{T},\mathcal{F}_{0})=\sum_{t=1}^{T}\ell(R_{t},X_{t}|\mathcal{F}_{t-1})=\sum_{t=1}^{T}\ell(R_{t}|\mathcal{F}_{t-1})+\ell(X_{t}|R_{t},\mathcal{F}_{t-1}).
\]
This decomposition was also used by \citet{ChristoffersenHestonJacobs2013}.
The log-likelihood function for returns, $\ell(R_{t}|\mathcal{F}_{t-1})$,
is that of the HNG model in (\ref{eq:HNGreturn})-(\ref{eq:HNGgarch}),
\[
\ell(R_{t}|\mathcal{F}_{t-1})=-\tfrac{1}{2}\left[\log(2\pi)+\log h_{t}+(R_{t}-r-(\lambda-\tfrac{1}{2})h_{t})^{2}/h_{t}\right],
\]
which assumes that $z_{t}\overset{\mathbb{P}}{\sim}iid\ N(0,1)$.
To complete the model, we need to specify the log-likelihood function
for the vector of derivative prices, $\ell(X_{t}|R_{t},\mathcal{F}_{t-1})=\ell(X_{t}|\mathcal{G}_{t})$.
To this end, we let $X_{t}^{m}\in\mathcal{G}_{t}$ denote the vector
of model-based derivative prices and make the following assumption
for pricing errors.
\begin{assumption}
\label{assu:Pricing-Errors}The derivative pricing errors follow an
iid multivariate normal distribution
\[
e_{t}=X_{t}-X_{t}^{m}\left|\mathcal{G}_{t}\right.\overset{\mathbb{P}}{\sim}iid\ N\left(0,\sigma_{e}^{2}\Omega_{N_{t}}\right),
\]
with correlation matrix $\Omega_{N_{t}}$, where $N_{t}$ is the number
of derivative prices at time $t$. Additionally, $e_{t}$ is independent
of the return shock $z_{\tau}$ for all $\tau$.
\end{assumption}
The vector of derivative ``prices'', $X_{t}$, can be defined in
several ways. \citet{WangShenJiangHuang2017} measured $X_{t}$ in
the unit of volatility (the level of VIX), \citet{ChristoffersenHestonJacobs2013}
used the Vega-weighted option price, and \citet{FeunouOkou2019} used
the Black-Scholes implied volatility. In our empirical analysis, we
use the logarithmically transformed VIX and the logarithmically transformed
Black-Scholes implied volatilities for options, which are natural
choices because we specific a model for $\log\eta_{t}$. Moreover,
the logarithmic transformation reduces the risk of severe model misspecification,
because the distribution of log-volatilities is often well-approximated
by a Gaussian distribution, see \citet{Andersen2003}.

The existing empirical studies that used derivative pricing errors
in this manner, all assumed that the pricing errors were uncorrelated,
i.e. $\Omega_{N_{t}}=I_{N_{t}}$ where $I_{N_{t}}$ is the $N_{t}$-dimensional
identity matrix. This is unrealistic, because contemporaneous pricing
errors tend to be positively correlated. We will therefore allow for
a non-zero common correlation, which is important for the statistical
properties of the score, see Theorem \ref{thm:ScoreProp} below. For
instance, if we impose $\Omega_{N_{t}}=I_{N_{t}}$ in our empirical
analysis with option prices, then it would result in a type of misspecification
that induces an upward bias in the estimated variance of $s_{t}$.
\begin{assumption}
\label{assu:Option-Pricing-Errors}The correlation matrix for pricing
errors has the form: $\Omega_{N_{t}}=(1-\rho)I_{N_{t}}+\rho U_{N_{t}}$,
where $I_{N_{t}}\in\mathbb{R}^{N_{t}\times N_{t}}$ is the identity
matrix and $U_{N_{t}}\in\mathbb{R}^{N_{t}\times N_{t}}$ is a matrix
of ones.
\end{assumption}
The correlation matrix in Assumption \ref{assu:Option-Pricing-Errors},
$\Omega_{N_{t}}$, is known as an equicorrelation matrix. Having a
common correlation for all pairs of pricing errors is particularly
useful in applications where the panel of derivative prices is unbalanced
over time. In this case, the dimension of $\Omega_{N_{t}}$ will be
time-varying, but this is not problematic if an equicorrelation structure
is assumed. A common correlation can be motivated by the assumption
that pricing errors, $e_{i,t}=u_{t}+v_{i,t}$, have a common component,
$u_{t}$, and uncorrelated idiosyncratic components, $v_{i,t}$, $i=1,\ldots,N_{t}$.
This will bring about the structure in Assumption \ref{assu:Option-Pricing-Errors}
with $\rho=\sigma_{u}^{2}/\sigma_{e}^{2}$ and $\sigma_{e}^{2}=\sigma_{u}^{2}+\sigma_{v}^{2}$. 

Under Assumption \ref{assu:Pricing-Errors}, the part of log-likelihood
function that relates to derivative prices is
\[
\ell(X_{t}|\mathcal{G}_{t})=\ell(e_{t}|\mathcal{G}_{t})=-\tfrac{1}{2}\left(N_{t}\log(2\pi\sigma_{e}^{2})+\log\left|\Omega_{N_{t}}\right|+\sigma_{e}^{-2}e_{t}^{\prime}\Omega_{N_{t}}^{-1}e_{t}\right).
\]
The model for returns under $\mathbb{P}$ continues to be a time-homogeneous
Heston-GARCH model, which does not depend on derivative prices. This
is also true in the DHNG model, despite the time variations in the
pricing kernel. For this reason, the model parameters can (if needed)
be estimated by a two-stage estimation method, where the parameters
in the GARCH model is estimated from returns in the first stage, followed
by the remaining parameters being estimated in the second stage with
derivative prices.\footnote{In our empirical application we estimated the model by maximizing
the full log-likelihood function, which was straightforward and did
not cause computational issues. If needed, two-stage estimation can
be adopted to reduce the computational burden, and the approach to
estimation is not uncommon in this setting, see e.g. \citet{BroadieChernovJohannes2007},
\citet{CorsiFusariVecchia2013}, \citet{ChristoffersenHestonJacobs2013},
and \citet{MajewskiBormettiCorsi2015}.}
\begin{thm}
\label{thm:ScoreProp}Suppose that Assumption \ref{assu:Pricing-Errors}
holds. Then the score, (\ref{eq:ScaledScore}), is given by
\begin{align*}
\nabla_{t} & =\frac{1}{\sigma_{e}^{2}}\left(\frac{\partial X_{t}^{m}}{\partial\log\eta_{t}}\right)^{\prime}\Omega_{N_{t}}^{-1}e_{t},
\end{align*}
which has zero conditional mean, $\mathbb{E}_{t}^{\mathbb{P}}\left(\nabla_{t}\right)=0$,
and conditional variance:
\[
\mathbb{E}_{t}^{\mathbb{P}}\left(\nabla_{t}^{2}\right)=\frac{1}{\sigma_{e}^{2}}\left(\frac{\partial X_{t}^{m}}{\partial\log\eta_{t}}\right)^{\prime}\Omega_{N_{t}}^{-1}\left(\frac{\partial X_{t}^{m}}{\partial\log\eta_{t}}\right).
\]
Moreover, the scaled score is $\mathcal{F}_{t}$-measurable and satisfies
$s_{t}|\mathcal{G}_{t}\sim iid\ N(0,1)$, under $\mathbb{P}$ and
$\mathbb{Q}$ measures.
\end{thm}
To obtain the score, we need to derive $\partial X_{t}^{m}/\partial\log\eta_{t}$,
and this must be done separately for the VIX and option prices. These
terms are derived in Appendix \ref{subsec:ScoresDeri}. Note that
if $X_{t}^{m}$ is univariate (the case with a single derivative),
then the scaled score simplifies to 
\[
s_{t}=\frac{1}{\sigma_{e}}{\rm sign}\left(\frac{\partial X_{t}^{m}}{\partial\log\eta_{t}}\right)e_{t}.
\]
From this expression, it is evident that the score will indicate the
direction of change for $\eta_{t}$, which is expected to reduce pricing
errors. 

\section{Empirical Analysis\label{sec:Empirical-Analysis}}

\subsection{Data}

Our empirical analysis is based on daily returns for the S\&P 500
index, the CBOE VIX, and the panel of SPX option prices based on the
S\&P 500 index. Our sample period spans 32 years from January 2nd,
1990, to December 31, 2021, with 8,064 trading days. Returns are defined
from cum-dividend logarithmic transformed closing prices, which were
downloaded from the CRSP of Wharton Research Data Services (WRDS).
Daily VIX prices were obtained from the Chicago Board Options Exchange
(CBOE) website. The SPX option prices were obtained from two sources.
Prices for the first six years (1990--1995) are the so-called Optsum
data, which were purchased from CBOE website. Option prices for the
remaining 26 years are OptionMetrics data downloaded from the WRDS
database.

Option prices were primarily preprocessed following \citet{ChristoffersenFeunouJacobsMeddahi2014}
and \citet{BakshiCaoChen1997}. Specifically, we include out-of-the-money
put and call option prices with positive trading volume and maturities
between two weeks and six months. Options with missing implied volatilities
or prices below one dollar are excluded. Put option prices are converted
to call options using the put-call parity. Very deep out-of-the-money
options with deltas larger than 0.85 or less than 0.15 are discarded.
Much of the existing literature uses weekly data, and a panel of option
prices is typically sampled on Wednesdays, because liquidity tends
to be highest on Wednesdays. However, in our analysis, we use daily
option prices, because a daily score, $s_{t}$, is needed to update
$\eta_{t}$. The number of available options has grown rapidly over
the sample period, both in terms of available maturities and the range
of moneyness at each maturity.\footnote{The number of available option prices has increased almost 50-fold
over our sample period, initially from about 38 daily option prices
to well over 1,700.} From the pool of available options, we select up to six options per
trading day. The inclusion criteria are as follows: we first determine
the most liquid option for each maturity on day $t$, as measured
by daily trading volume. We sort these options by maturity in ascending
order and index these by $j=1,\ldots,\tilde{N}_{t}$, where $\tilde{N}_{t}$
is the number of distinct maturities on day $t$. From this set, we
include all options if $\tilde{N}_{t}\leq6$; the first six options
if $7\leq\tilde{N}_{t}\leq10$; options $\{1,3,5,7,9,11\}$ if $11\leq\tilde{N}_{t}\leq15$;
and options $\{1,4,7,10,13,16\}$ if $16\leq$ $\tilde{N}_{t}\leq20$;
and so forth. This resulting set of options will be representative
for the range of available maturities, and we have $N_{t}=6$ on most
days. The total number of option prices in our full sample period
is 37,152.
\begin{table}
\begin{centering}
\caption{Summary Statistics\label{Tab:Descriptive}}
\par\end{centering}
\begin{centering}
\vspace{0.2cm}
\begin{footnotesize}
\begin{tabularx}{\textwidth}{XYYYYYYY}
\toprule
\midrule
         \\
    \multicolumn{7}{l}{{\it A: S\&P 500 returns and the CBOE VIX}} \\[6pt]
       &   & Mean(\%) & Std(\%) & Skewness & Kurtosis & Obs. \\[3pt]
    \multicolumn{2}{l}{Returns (annualized)} & 8.13 & 18.11 &	-0.41	& 14.37	& 8,064 \\
    VIX & & 19.48 & 8.01  & 2.21  & 11.49 & 8,064 \\
          &       &       &       &       &  \\
    \\
    \multicolumn{7}{l}{{\it B: Option Price Data}} \\[6pt]
        &  &    \multicolumn{2}{>{\hsize=\dimexpr2\hsize+2\tabcolsep+\arrayrulewidth\relax}c}
        {Implied Volatility (\%) }
        &            \multicolumn{2}{>{\hsize=\dimexpr2\hsize+2\tabcolsep+\arrayrulewidth\relax}c}
        {Average price (\$) }
      & Observations \\[3pt]
    All options & & \multicolumn{2}{c}{18.47}& \multicolumn{2}{c}{63.63}  & 37,152 \\
         & &       &       &       &       &  \\
    \multicolumn{7}{l}{\it Partitioned by Moneyness } \\[2pt]
    Delta<0.3          &  & \multicolumn{2}{c}{14.18} & \multicolumn{2}{c}{11.04  } & 5,452 \\
    0.3$\leq$Delta<0.4 &  & \multicolumn{2}{c}{15.52} & \multicolumn{2}{c}{20.92 } & 2,955 \\
    0.4$\leq$Delta<0.5 &  & \multicolumn{2}{c}{16.76} & \multicolumn{2}{c}{33.92 } & 3,912 \\
    0.5$\leq$Delta<0.6 &  & \multicolumn{2}{c}{18.94} & \multicolumn{2}{c}{48.78 } & 7,091 \\
    0.6$\leq$Delta<0.7 &  & \multicolumn{2}{c}{19.40} & \multicolumn{2}{c}{64.86 } & 6,098 \\
    0.7$\leq$Delta     &  & \multicolumn{2}{c}{21.02} & \multicolumn{2}{c}{117.47} & 11,644 \\
&          &       &       &       &       &  \\
    \multicolumn{7}{l}{\it Partitioned by Maturity} \\[2pt]
     DTM<30          & & \multicolumn{2}{c}{16.89} & \multicolumn{2}{c}{41.41 } & 9,614 \\
     30$\leq$DTM<60  & & \multicolumn{2}{c}{17.91} & \multicolumn{2}{c}{52.29 } & 9,696 \\
     60$\leq$DTM<90  & & \multicolumn{2}{c}{19.00} & \multicolumn{2}{c}{65.63 } & 7,465 \\
    90$\leq$DTM<120  & & \multicolumn{2}{c}{20.25} & \multicolumn{2}{c}{87.43 } & 4,421 \\
    120$\leq$DTM<150 & & \multicolumn{2}{c}{20.22} & \multicolumn{2}{c}{98.51}  & 2,931 \\
     150$\leq$DTM    & & \multicolumn{2}{c}{19.68} & \multicolumn{2}{c}{97.06}  & 3,025 \\
          &       & &      &       &       &  \\
    \multicolumn{7}{l}{\it Partitioned by the level of VIX} \\[2pt] 
     VIX<15         & & \multicolumn{2}{c}{12.15} & \multicolumn{2}{c}{44.77}  & 12,638 \\
     15$\leq$VIX<20 & & \multicolumn{2}{c}{16.70} & \multicolumn{2}{c}{62.49}  & 10,788 \\
     20$\leq$VIX<25 & & \multicolumn{2}{c}{21.45} & \multicolumn{2}{c}{74.72}  & 7,143 \\
     25$\leq$VIX<30 & & \multicolumn{2}{c}{25.43} & \multicolumn{2}{c}{84.16}  & 3,395 \\
     30$\leq$VIX<35 & & \multicolumn{2}{c}{29.61} & \multicolumn{2}{c}{88.12}  & 1,456 \\
     35$\leq$VIX    & & \multicolumn{2}{c}{40.23} & \multicolumn{2}{c}{101.70}  & 1,732 \\
\\[0.0cm]
\\[-0.5cm]
\midrule
\bottomrule
\end{tabularx}
\end{footnotesize}

\par\end{centering}
{\small Note: Summary statistics for close-to-close S\&P 500 index
log returns, CBOE VIX, and SPX option prices from January 1990 to
December 2021. We report the sample mean (Mean), standard deviation
(Std), skewness (Skew), kurtosis (Kurt), number of observations (Obs),
for returns and the VIX. Option prices are based on closing prices
of out-of-the-money call and put options. We report the average Black-Scholes
implied volatility (IV), average price, and the number of option prices
for different ranges of Moneyness, days to maturity (DTM), and VIX
levels. Data sources: S\&P 500 returns from CRSP, VIX from CBOE's
website. Option prices from Optsum data (1990--1995) and OptionMetrics
(1996--2021).}{\small\par}
\end{table}

Table \ref{Tab:Descriptive} contains descriptive statistics for our
S\&P 500 returns and the CBOE VIX in Panel A, and option prices in
Panel B. As expected, the sample average of the VIX (19.48\%) is larger
than the standard deviation for annualized returns (18.11\%). This
difference reflects the (average) negative volatility risk premium.
The S\&P 500 returns exhibit slight negative skewness and a high degree
of kurtosis, whereas the VIX has positive skewness and slightly lower
kurtosis. We also report summary statistics for option prices partitioned
by moneyness, maturity, and the contemporaneous level of VIX. Options
with deltas below 0.5 are out-of-the-money call options, and options
with deltas above 0.5 are based on out-of-the-money put options. Deep
out-of-the-money put options (i.e., deltas greater than 0.7) are expensive
relative to out-of-the-money call options, reflecting the well-known
volatility smirk. The implied volatility has a relatively flat term
structure with respect to time to maturity. Unsurprisingly, the implied
volatility increases with the VIX level, as shown at the bottom of
Table \ref{Tab:Descriptive}, where option prices are partitioned
by the contemporaneous level of the VIX.

\subsection{Parameter estimation}

We estimate the model with both constant and time-varying variance
risk aversion, CHNG and DHNG, respectively.\footnote{The corresponding results for the Auxiliary structure are presented
in the Appendix \ref{subsec:ResultsAuxi}.} Both specifications are estimated using two types of derivative prices,
VIX data and the panel of option prices. Parameters are estimated
by maximizing the joint log-likelihood function for the full sample
period from January 1990 to December 2021. Each column reports the
parameter estimates for the specification listed in the first row
of Table \ref{tab:JointEstimation}. The type of derivatives, VIX
or option prices, used in the estimation is indicated with {[}VIX{]}
and {[}Opt{]}, respectively. Robust standard errors are give in parentheses
below the estimates.\footnote{Following \citet{ChristoffersenHestonJacobs2013}, we impose $\omega$
= 0 in estimation when the non-negativity constraint, which ensures
positive variances, is binding.} We also report the implied persistence of volatility under both $\mathbb{P}$
and $\mathbb{Q}$. These are given by $\pi^{\ensuremath{\mathbb{P}}}=\beta+\alpha\gamma^{2}$
and $\pi^{\ensuremath{\mathbb{Q}}}=\mathbb{E}^{\mathbb{Q}}[\beta_{t}^{*}+\alpha_{t}^{*}\gamma_{t-1}^{*2}]$,
respectively, where the latter simplifies to $\pi^{\ensuremath{\mathbb{Q}}}=\beta^{*}+\alpha^{*}\gamma^{*2}$
for the CHNG model with constant parameters. We also report the different
terms of the log-likelihood (for returns and different sets of derivative
prices). Some of these likelihood terms can be interpreted as pseudo
out-of-sample log-likelihood values, indicated in italic. For instance,
the CHNG{[}VIX{]} model is estimated using returns and the VIX, but
the estimated model also yields model-based option prices that can
be compared with actual option prices. The reported pseudo log-likelihood
for option prices is evaluated with the resulting option pricing errors
that are implicitly used to obtain estimates of $\rho$ and $\sigma_{e}$
for option prices. Similarly, we evaluate the log-likelihood for VIX
pricing errors for the specifications estimated with option prices.
In this case, we compute the implied estimate of $\sigma_{e}$ for
VIX prices. The largest log-likelihood within each row is highlighted
in bold. 
\begin{table}
\caption{Joint Estimation Results}

\begin{centering}
\vspace{0.2cm}
\begin{footnotesize}
\begin{tabularx}{\textwidth}{p{2.5cm}YYYY}
\toprule
\midrule
    Model & CHNG   & CHNG     & DHNG  & DHNG   \\
          & [VIX] & [Opt]  & [VIX] & [Opt]  \\
    \midrule 
          &       &       &       &       \\
    $\lambda$ & 3.377 & 2.828 & 3.398 & 3.088 \\
          & \textit{(0.662)} & \textit{(0.022)} & \textit{(1.281)} & \textit{(0.644)} \\
          &       &       &       &  \\
    $\beta$ & 0.902 & 0.713 & 0.720 & 0.570 \\
          & \textit{(0.001)} & \textit{(0.008)} & \textit{(0.018)} & \textit{(0.015)} \\
          &       &       &       &  \\
    $\alpha(\times10^{-6})$ & 1.166 & 2.495 & 4.428 & 5.833 \\
          & \textit{(0.206)} & \textit{(0.012)} & \textit{(0.416)} & \textit{(0.256)} \\
          &       &       &       &  \\
    $\gamma$ & 272.45 & 327.49 & 220.45 & 249.59 \\
          & \textit{(23.81)} & \textit{(3.81)} & \textit{(9.474)} & \textit{(10.28)} \\
          &       &       &       &  \\
    $\zeta$ &       &       & 0.192 & 0.102 \\
          &       &       & \textit{(0.057)} & \textit{(0.011)} \\
          &       &       &       &  \\
    $\varphi$ &       &       & 0.988 & 0.994 \\
          &       &       & \textit{(0.002)} & \textit{(0.001)} \\
          &       &       &       &  \\
    $\sigma$ &       &       & 0.061 & 0.042 \\
          &       &       & \textit{(0.002)} & \textit{(0.010)} \\
          &       &       &       &  \\
    $\rho$ &       & 0.822 &       & 0.087 \\
          &       & \textit{(0.015)} &       & \textit{(0.010)} \\
          &       &       &       &  \\
    $\sigma_e$ & 0.175 & 0.229 & 0.043 & 0.091 \\
          & \textit{(0.071)} & \textit{(0.024)} & \textit{(0.001)} & \textit{(0.006)} \\
          &       &       &       &  \\
    $\eta, \mathbb{E}{\eta}$ & 1.371 & 1.252 & 1.314 & 1.190 \\
          & \textit{(0.041)} & \textit{(0.013)} &       &  \\
          &       &       &       &  \\
    $\widehat{\mathrm{var}}(s_{t})$ &  &  & 1.000 & 1.007 \\
        &       &       &       &  \\
    $\pi^\mathbb{P}$ & 0.989 & 0.981 & 0.935 & 0.934 \\
    $\pi^\mathbb{Q}$ & 0.991 & 0.985 & 0.944 & 0.944 \\

          &       &       &       &  \\
    LogL  &       &       &       &  \\
    $\ell(\rm{R})$ & 26,324 & 26,438 & \textbf{26,450} & 26,419 \\
    $\ell(\rm{VIX})$ & 2,619  & \textit{1,602}  & \textbf{13,917} & \textit{7,932} \\
    $\ell(\rm{Opt})$ & \textit{15,177} & 21,647 & \textit{27,857} & \textbf{36,682} \\
    $\ell(\rm{R,VIX})$ & 28,944 & \textit{28,041} & \textbf{40,367} & \textit{34,351} \\
    $\ell(\rm{R,Opt})$ & \textit{41,501} & 48,085 & \textit{54,307} & \textbf{63,101} \\    
\\[0.0cm]
\\[-0.5cm]
\midrule
\bottomrule
\end{tabularx}
\end{footnotesize}
\par\end{centering}
{\small Note: Estimation results for the full sample period, January
1990 to December 2021. CHNG is the model by }\citet{ChristoffersenHestonJacobs2013}{\small{}
and DHNG is the model introduced in this paper. The type of derivatives
used to estimate each model, VIX or option prices, is indicated with
{[}VIX{]} and {[}Opt{]}, respectively. Estimates are reported with
robust standard errors (in parentheses), and $\pi^{\mathbb{P}}$ and
$\pi^{\mathbb{Q}}$ refer to the persistence of volatility under $\mathbb{P}$
and $\mathbb{Q}$, respectively. The components of the maximized log-likelihood
function are reported at the bottom of the table, and bold is used
to highlight the largest log-likelihood in each row. Italic font is
used to highlight pseudo out-of-sample log-likelihoods.\label{tab:JointEstimation}}{\small\par}
\end{table}

There are several interesting observations to be made from Table \ref{tab:JointEstimation}.
First, the volatility process is found to be highly persistent across
all specifications, and the persistence is larger under $\mathbb{Q}$
than under $\mathbb{P}$, which is consistent with the existing literature.
Second, the estimate of the equity risk premium, $\lambda$, is positive
and significant in all specification. Third, the VRR, $\eta$, is
significantly larger than one in both CHNG specifications. Similarly,
for the DHNG model the expected value of $\log\eta_{t}$, $\zeta$,
is significantly positive in both DHNG specifications. Their corresponding
expected $\eta_{t}$ (denoted $\mathbb{E}\eta$) inferred from the
AR(1) model for $\log\eta_{t}$ are both notably larger than one.
This implies that the risk-neutral volatility, $h^{\ast}$, is (on
average) larger than the physical volatility, $h$. Fourth, the parameter
that defines the leverage effects, $\gamma$, is also found to be
significant across all specification. Fifth, for the DHNG specification
we note that, $\eta_{t}$, is highly persistent, as $\varphi$ is
estimated to be close to one. In fact, it is estimated to be more
persistent than $h_{t}$ and $h_{t}^{\ast}$ across all DHNG specifications.
Sixth, the coefficient, $\sigma$, which measures the impact of $s_{t-1}$
on $\eta_{t}$, is estimated to be positive and significant.\footnote{In principle, $\sigma$ could be negative, but $\operatorname{var}(\sigma s_{t})=\sigma^{2}$
is non-negative.} The implied unconditional variance for $\eta_{t}$ is similar for
the two DHNG specifications, $\mathrm{var}(\eta_{t})=0.29$ when estimated
with VIX and $\mathrm{var}(\eta_{t})=0.23$ when estimated with option
prices.\footnote{The variance for $\eta_{t}$ is computed from AR(1) parameters for
$\log\eta_{t}$ and the assumptions imply that the unconditional distribution
for $\eta_{t}$ is log-normal.} Seventh, all specifications yield similar likelihood values for the
returns. This is to be expected because they all rely on the same
Heston-Nandi GARCH model for returns under physical measure. 

Eighth, the key difference between the CHNG and DHNG models is the
enhanced flexibility in DHNG's pricing kernel. This generalization
leads to large improvements in the likelihood for derivative prices.
The reason is that the adaptive pricing kernel with time-varying variance
risk aversion yields model-implied derivative prices that are much
closer to observed prices, and the substantial reduction in the pricing
errors translates to much higher values of the log-likelihood for
derivative prices. Ninth, estimating the models with VIX or option
prices results in some differences across the estimated parameters.
This is to be expected, as the vector of option prices contains more
information about the distribution of future returns than the VIX.
For instance, the leverage parameter, $\gamma$, is estimated to be
larger with option prices than with the VIX. Tenth, the correlation
among option pricing errors, $\rho$, is estimated to be positive
and significant for both models. The estimate has the staggering large
value of $82.2\%$ for the CHNG model and a more moderate value of
$8.7\%$ for DHNG model. Ignoring this correlation (assuming it to
be zero) would greatly underestimate the condition variance of $\text{\ensuremath{\nabla_{t}}}$,
which makes the variance of $s_{t}$ larger than one.
\begin{sidewaysfigure}
\centering{}\includegraphics[viewport=1050bp 100bp 10225bp 5150bp,width=1\textwidth]{figures/Figure5.jpg}\caption{{\small This figure displays the full-sample time series of the daily
estimated variance risk ratio, $\eta_{t}$, from January 1990 to December
2021. The blue line is based on the VIX and the red line on option
prices. The two time series are very similar, albeit the $\eta_{t}$
based on the VIX tend to be slightly larger than that based on option
prices.\label{fig:eta}}\protect \\
{\small}\protect \\
{\small Alt text: Time series of the variance risk ratio from 1990
to 2021, estimated using VIX data and option prices. The two series
follow similar trends, with occasional level differences, particularly
during financial crises and market stress periods.}}
\end{sidewaysfigure}

Figure \ref{fig:eta} presents the estimated daily time series of
the VRR, $\eta_{t}=h_{t+1}^{\ast}/h_{t+1}$, based on the VIX (blue
line) and option prices (red line). The two have a high degree of
comovement and, as we discussed earlier, the unconditional variance
of the two series is very similar. Importantly, the VRR occasionally
falls below one, as is seen during the years 1993--1995, 2004--2007,
and around 2017. A value below one ($\eta_{t}<1$) indicates that
investors have an increased appetite for variance risk, whereas a
large value of $\eta$ implies that investors demand additional compensation
for variance risk. The latter was particularly the case during financial
crises, such as the immediate aftermath of the Lehman collapse and
the recent COVID-19 pandemic. The observed conditional variance was
unusually high during these episodes, and the large value of $\eta_{t}$
implies that $h_{t}^{\ast}$ increased to far higher levels and was,
briefly, more than three times larger than $h_{t}$. 

The level of $\eta_{t}$ based on the VIX tend to be slightly larger
than that based on option prices. This discrepancy can be attributed
to the fact that the VIX is based on options with 30 days to maturity,
whereas the options have maturities up to 180 days. A possible explanation
is that short-term investors demand larger compensation for variance
risk than long-term investors, which would be consistent with the
findings in \citet{EisenbachSchmalz2016} and \citet{AndriesEisenbachSchmalz2018}.
\begin{table}
\caption{VIX and Option Pricing Performance}

\begin{centering}
\vspace{0.2cm}
\begin{footnotesize}%
\begin{tabularx}{\textwidth}{p{2.5cm}YYYYYYYY}
\toprule 
\midrule
    Model & CHNG   & CHNG      & DHNG  & DHNG   \\
      & [VIX] & [Opt]  & [VIX] & [Opt]  \\
    \midrule
          &       &       &       &        \\
    \multicolumn{5}{l}{\it A: RMSE for VIX Pricing}\\[6pt]
    Full Sample   & 17.49 & 19.81 & 4.308 & 9.049  \\
          &       &       &       &         \\
    \multicolumn{5}{l}{\it B: RMSE for Option Pricing} \\[6pt]
    Full Sample   & 23.06 & 22.88 & 13.34 & 9.191 \\
          &       &       &       &         \\
    \multicolumn{5}{l}{\it Partitioned by moneyness } \\[3pt]
   \ Delta<0.3      & 33.05 & 29.80 & 21.11 & 12.77 \\
   \ 0.3$\leq$Delta<0.4      & 27.84 & 26.59 & 17.86 & 11.28 \\
   \ 0.4$\leq$Delta<0.5     & 23.29 & 23.87 & 14.68 & 9.375 \\
   \ 0.5$\leq$Delta<0.6      & 19.74 & 21.49 & 11.41 & 7.271 \\
   \ 0.6$\leq$Delta<0.7      & 18.83 & 20.62 & 9.72  & 6.859 \\
   \ 0.7$\leq$Delta    & 19.50 & 19.48 & 8.78  & 8.634 \\
           &       &       &       &        \\
    \multicolumn{5}{l}{\it Partitioned by maturity} \\[3pt]
    \ DTM<30 & 25.48 & 23.51 & 13.72 & 12.19 \\
    \ 30$\leq$DTM<60 & 23.48 & 23.27 & 12.80 & 7.848 \\
    \ 60$\leq$DTM<90 & 21.78 & 22.30 & 12.75 & 6.914 \\
   \ 90$\leq$DTM<120 & 20.47 & 21.81 & 13.51 & 7.963 \\
   \ 120$\leq$DTM<150 & 20.74 & 21.68 & 13.84 & 8.622 \\
    \ 150$\leq$DTM & 22.18 & 23.58 & 14.45 & 9.226 \\
          &       &       &       &       \\
    \multicolumn{5}{l}{\it Partitioned by the level of VIX} \\[3pt]
    \ VIX<15     & 29.25 & 28.06 & 15.60 & 10.02 \\
    \ 15$\leq$VIX<20     & 18.47 & 19.72 & 13.16 & 9.021 \\
    \ 20$\leq$VIX<25     & 18.23 & 19.51 & 10.89 & 8.224 \\
    \ 25$\leq$VIX<30     & 18.74 & 18.52 & 11.02 & 8.518 \\
    \ 30$\leq$VIX<35    & 19.48 & 19.23 & 10.97 & 8.542 \\
    \ 35$\leq$VIX     & 25.36 & 22.41 & 11.57 & 9.508 \\
\\[0.0cm]
\\[-0.5cm]
\midrule
\bottomrule
\end{tabularx}
\end{footnotesize}
\par\end{centering}
{\small Note: This table presents the full-sample VIX and option pricing
performance from January 1990 to December 2021 for each model listed
in \ref{tab:JointEstimation}. We evaluate the model's VIX and option
pricing ability through the root mean square error for $\log{\rm VIX}$
and log implied volatility ($\log{\rm IV}$), respectively, see (\ref{eq:RMSE_VIX_log})
and (\ref{eq:RMSE_IV_log}). Results for option pricing are presented
for different ranges of moneyness (Black-Scholes Delta), days-to-maturity
(DTM), and the VIX levels.\label{tab:DerivativePricing}}{\small\par}
\end{table}

\subsection{In-Sample VIX and Option Pricing Performance}

We evaluate the performance of the models in terms of their ability
to accurately price the VIX and options. We report the root mean squared
errors (RMSE) for the logarithm of VIX and the logarithmically transformed
implied volatilities.\footnote{As a robustness check, we also computed the RMSE for the level of
VIX and implied volatilities. These results are reported in Appendix
\ref{subsec:RMSEIV}.} The RMSE multiplied by 100 can be interpreted as the (absolute) relative
pricing errors in percent. This loss function is coherent with our
log-likelihood function, where the term that involves derivative prices
is given in Assumption \ref{assu:Pricing-Errors}. A similar loss
function was adopted in \citet{Ornthanalai2014}, who used the relative
implied volatility to prevent days with high volatility from being
given a disproportionately large weight in the comparisons. For each
of the specifications listed in Table \ref{tab:JointEstimation},
we report their (in-sample) RMSEs in Table \ref{tab:DerivativePricing}.

We present the in-sample RMSEs for the VIX in Panel A. These are defined
by 
\begin{equation}
{\rm RMSE_{VIX}}=\sqrt{\frac{1}{T}\sum_{t=1}^{T}\text{\ensuremath{\left[\log\mathrm{VIX}_{t}^{m}-\log\mathrm{VIX}_{t}\right]^{2}}}}\times100,\label{eq:RMSE_VIX_log}
\end{equation}
where $\mathrm{VIX}_{t}^{m}$ is the model-based quantity and $\mathrm{VIX}_{t}$
is the observed market VIX value. Panel A of Table \ref{tab:DerivativePricing}
shows that a time-varying VRR, $\eta_{t}$, results in substantially
smaller average pricing errors, relative to a constant $\eta$, which
is a characteristic of CHNG. The improved VIX pricing is substantial
and impressive. The most accurate VIX pricing is achieved with by
the DHNG model, when it is estimated with VIX data. This reduces the
RMSE by a factor of four relative to both CHNG specifications. Even
the DHNG model that is estimated solely with option prices, has a
much smaller RMSE for VIX prices than the CHNG specification that
uses VIX prices as part of the objective in the estimation. This is
impressive, because the DHNG model estimated with option prices is
based on an objective that simultaneously seeks to price options with
maturities ranging from one to six months. This entails a trade-off
across maturities, unlike models estimated with solely with VIX, that
specifically targets the one-month maturity.

Panel B of Table \ref{tab:DerivativePricing} reports the in-sample
performance for option pricing. We follow the literature and convert
option prices to their corresponding Black-Scholes implied volatilities.
We compare the model-based implied volatility, $\mathrm{IV}^{m}$,
with the market-based implied volatility, $\mathrm{IV}$, where the
latter is defined by the observed option price. The resulting RMSE,
\begin{equation}
{\rm RMSE_{IV}}=\sqrt{\frac{1}{\sum_{t=1}^{T}N_{t}}\sum_{t=1}^{T}\text{\ensuremath{\sum_{i=1}^{N_{t}}\left[\log\mathrm{IV}_{t,i}^{m}-\log\mathrm{IV}_{t,i}\right]^{2}}}}\times100,\label{eq:RMSE_IV_log}
\end{equation}
is reported in the first row of Panel B. We also compute the ${\rm RMSE_{IV}}$
for options partitioned by moneyness, time to maturity, and the contemporaneous
VIX level. The resulting RMSEs can be used to identify shortcomings
in a model, such as its inability to generate sufficient leverage
effect, capture the dynamic properties, and generate a proper variance
risk premium.

The DHNG specifications clearly dominate the CHNG specifications in
terms of option pricing. The average option pricing errors are substantially
smaller for DHNG models than for CHNG models, and the smallest RMSE
is obtained by the DHNG model estimated with option prices. Its RMSE
is less than half that of any of the CHNG specifications. Impressively,
the DHNG specifications uniformly dominate all CHNG specifications
for both VIX and option pricing across all partitions. This further
supports the advantage of having a dynamic variance risk aversion
in the model. Note that the largest gains in pricing accuracy are
found during periods with low volatility and for out-of-the-money
call options, which tend to be options with low variance risk premia.
\begin{figure}
\centering{}\includegraphics[width=1\textwidth]{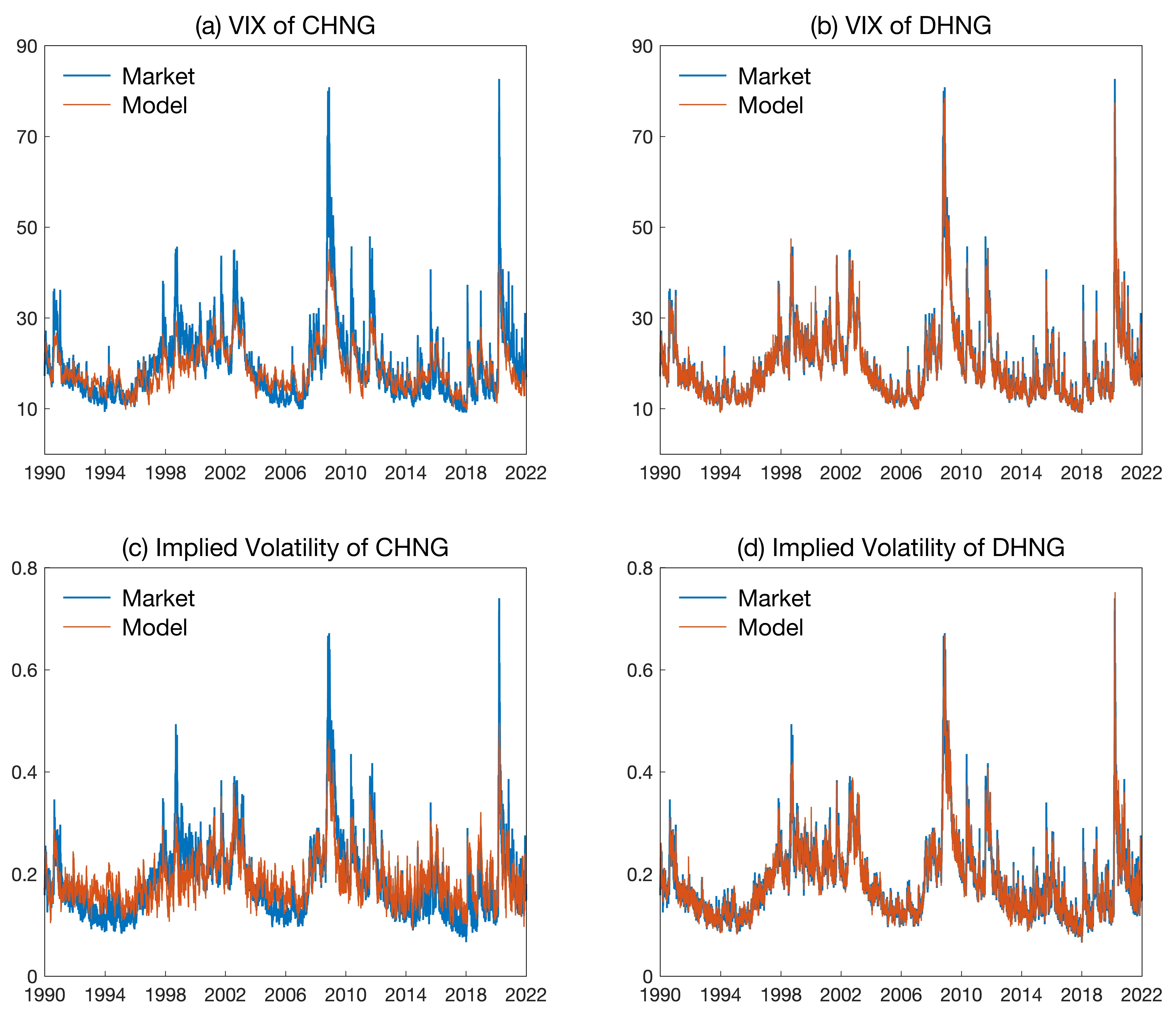}\caption{{\small This figure displays the full-sample time series of the daily
model-based (red lines) and market-based (blue lines) VIX and average
implied volatility. The CHNG and DHNG models are estimated with VIX
data in the upper panels and option prices in the lower panels. Model
parameters are based on the Table \ref{tab:JointEstimation}.\label{fig:Model_VIX_IV}}\protect \\
{\small}\protect \\
{\small Alt text: Comparison of model-based and market-based values
for the VIX and average implied volatility over time. The figure presents
results for both the CHNG and DHNG models, estimated using VIX data
in the upper panels and option prices in the lower panels. DHNG provides
a closer match to market values.}}
\end{figure}

Figure \ref{fig:Model_VIX_IV} displays the model-based VIX and implied
volatility (red lines) alongside the corresponding market-based quantities
(blue lines). The upper panels show the model-based VIX for CHNG and
DHNG, with both models estimated using VIX data. The lower panels
present the daily average of model-based implied volatilities for
CHNG and DHNG, estimated with option prices. For CHNG, we see larger
discrepancies between model-based and market-based quantities as evident
in Figure \ref{fig:Model_VIX_IV}. Specifically, CHNG underprices
when volatility is high and overprices when volatility is low, whereas
no such patterns are observed for DHNG.

Interestingly, \citet{ChristoffersenHestonJacobs2013} also estimated
an ad-hoc model where the ratio of volatility under $\mathbb{P}$
and $\mathbb{Q}$ is not held constant. This model was used as a benchmark
for the CHNG model. The ad-hoc model consisted of two separate HNGs:
one fitted to returns under $\mathbb{P}$, and one that was estimated
with option prices under $\mathbb{Q}$. This model is referred to
as ad-hoc, because it makes no attempt to uncover a pricing kernel
that explains the differences between $\mathbb{P}$ and $\mathbb{Q}$.
Their ad-hoc model did not improve option pricing accuracy relative
to CHNG. Theorem \ref{thm:DynUnderQ} provides a theoretical explanation
for the poor performance of their ad-hoc model, as it shows that any
time variation in $\xi_{t}$ (or, equivalently, in $\eta_{t})$ makes
it impossible to have HNGs with time-invariant parameters under both
$\mathbb{P}$ and $\mathbb{Q}$ measures.\footnote{The ad-hoc model is therefore internally inconsistent. Additionally,
the ad-hoc model only leads to a minuscule improvement in the empirical
fit, as the log-likelihood for option prices improves by only about
0.1 units.} 

In the next section, we evaluate if the substantial in-sample improvements
in derivatives pricing also holds out-of-sample. 
\begin{table}
\caption{Out-of-sample VIX and Option Pricing}

\begin{centering}
\vspace{0.2cm}
\begin{footnotesize}%
\begin{tabularx}{\textwidth}{p{2.5cm}YYYYYYYY}
\toprule 
\midrule
    Model & CHNG   & CHNG      & DHNG  & DHNG   \\
      & [VIX] & [Opt]  & [VIX] & [Opt]  \\
    \midrule
          &       &       &       &        \\
    \multicolumn{5}{l}{\it A: RMSE for VIX Pricing}\\[6pt]
    Out-of-Sample   & 18.33 & 19.60 & 4.782 & 10.03  \\
          &       &       &       &         \\
    \multicolumn{5}{l}{\it B: RMSE for Option Pricing} \\[6pt]
    Out-of-Sample   & 24.33 & 20.98 & 15.52 & 9.815 \\
          &       &       &       &         \\
    \multicolumn{5}{l}{\it Partitioned by moneyness } \\[3pt]
   \ Delta<0.3      & 36.55 & 29.30 & 25.22 & 13.58 \\
   \ 0.3$\leq$Delta<0.4      & 31.16 & 25.64 & 21.19 & 11.66 \\
   \ 0.4$\leq$Delta<0.5     & 25.81 & 22.17 & 17.05 & 9.540 \\
   \ 0.5$\leq$Delta<0.6      & 20.56 & 19.05 & 12.62 & 7.145 \\
   \ 0.6$\leq$Delta<0.7      & 18.51 & 17.39 & 9.373 & 6.883 \\
   \ 0.7$\leq$Delta    & 16.81 & 15.81 & 8.670 & 9.381 \\
           &       &       &       &        \\
    \multicolumn{5}{l}{\it Partitioned by maturity} \\[3pt]
    \ DTM<30 & 28.51 & 22.73 & 16.86 & 11.92 \\
    \ 30$\leq$DTM<60 & 25.27 & 21.61 & 15.64 & 8.372 \\
    \ 60$\leq$DTM<90 & 20.23 & 18.80 & 14.07 & 7.737 \\
   \ 90$\leq$DTM<120 & 19.73 & 19.22 & 14.34 & 8.809 \\
   \ 120$\leq$DTM<150 & 18.02 & 17.64 & 14.11 & 9.539 \\
   \  150$\leq$DTM & 21.33 & 21.02 & 15.27 & 10.62 \\
          &       &       &       &       \\
    \multicolumn{5}{l}{\it Partitioned by the level of VIX} \\[3pt]
    \ VIX<15     & 32.40 & 24.86 & 17.71 & 10.82 \\
    \ 15$\leq$VIX<20     & 19.31 & 18.82 & 15.62 & 9.580 \\
    \ 20$\leq$VIX<25     & 14.41 & 18.11 & 12.99 & 8.233 \\
    \ 25$\leq$VIX<30     & 16.27 & 17.42 & 13.20 & 9.281 \\
    \ 30$\leq$VIX<35    & 18.40 & 17.24 & 13.32 & 9.569 \\
    \ 35$\leq$VIX     & 27.69 & 21.02 & 12.43 & 9.924 \\
\\[0.0cm]
\\[-0.5cm]
\midrule
\bottomrule
\end{tabularx}
\end{footnotesize}
\par\end{centering}
{\small Note: Each model is estimated once, using data from the years
1990--2007 (in-sample). The pricing accuracy of each estimated model
is evaluated out-of-sample using the remaining 14 years of data, 2008--2021,
in terms of the RMSE for log prices. We also present results for subsets
of options, partitioning them by moneyness (Black-Scholes Delta),
days-to-maturity (DTM), and the VIX levels.\label{tab:OutOfSample}}{\small\par}
\end{table}

\subsection{Out-of-Sample Pricing Performance}

In this section, we shift our focus to out-of-sample comparisons.
Since the DHNG model nests the CHNG model as a special case, it naturally
achieves a higher in-sample log-likelihood. Out-of-sample comparisons
are crucial to determine whether the improved in-sample performance
is due to overfitting or reflects the true quality of the DHNG model.
We investigate whether the DHNG model also produces more accurate
derivative prices in out-of-sample comparisons. The full dataset is
divided into two subsamples: data from the years 1990--2007 (in-sample)
are used to estimate each of the four specifications, while data from
the period 2008--2021 (out-of-sample) are used to evaluate the estimated
models. Similar to the in-sample comparisons, we assess and compare
the models based on their root mean squared pricing errors.

We find that the out-of-sample performance of the DHNG model is just
as impressive as its in-sample performance. The most accurate VIX
pricing is once again achieved with the DHNG model estimated using
VIX data, while the best option pricing is similarly achieved by the
DHNG model estimated with option prices. Although these two specifications
have similar point estimates and produce very similar paths for $\eta_{t}$,
the small differences between the two estimated model do affect derivative
pricing. Overall, the out-of-sample results are very encouraging and
align closely with the in-sample findings. 

Next, we turn our attention to the autocorrelations of the pricing
errors.

\subsection{Key Insight from Autocorrelations of Pricing Error}

The autocorrelations of pricing errors offer valuable insights into
the improved derivative pricing achieved by the DHNG model. The estimated
DHNG models reveal substantial time variation in $\eta_{t}$. In contrast,
the CHNG model, which relies on a constant $\eta$, tends to produce
positive pricing errors when $\eta_{t}$ is small and negative pricing
errors when $\eta_{t}$ is large. Given the high persistence of $\eta_{t}$,
this is expected to induce autocorrelation in the pricing errors for
the CHNG model, which is indeed what we observe.

Figure \ref{fig:ACF-pricing-errors} displays the autocorrelation
functions (ACF) for VIX pricing errors in the upper panels and ACFs
for average option pricing errors in the lower panels, covering the
full sample period. The very high and persistent ACFs for CHNG indicate
that a constant $\eta$ induces a high degree of predictability in
the pricing errors. In stark contrast, the results for DHNG, shown
in the right panels, reveal autocorrelations that are substantially
closer to zero. The horizontal lines in each plot represent two standard
deviations from zero. For DHNG, the autocorrelations are largely insignificant,
with the exception of the first few autocorrelations in the lower-right
panel.

The remarkable reduction in these autocorrelations is a clear demonstration
of the adaptive nature of score-driven models. The DHNG model continuously
adjusts $\eta_{t}$ to minimize pricing errors, using the latest first-order
conditions (and curvature) to determine the direction and magnitude
of each adjustment. As a result, the first-order conditions are more
consistently satisfied throughout the sample. In contrast, models
with static parameters, such as CHNG, focus only on the average first-order
condition. This approach allows for large pricing errors, as long
as the positive errors are offset by negative errors. This explains
why the CHNG model exhibits much larger RMSE and significant autocorrelations
in its pricing errors.
\begin{figure}
\centering{}\includegraphics[width=1\textwidth]{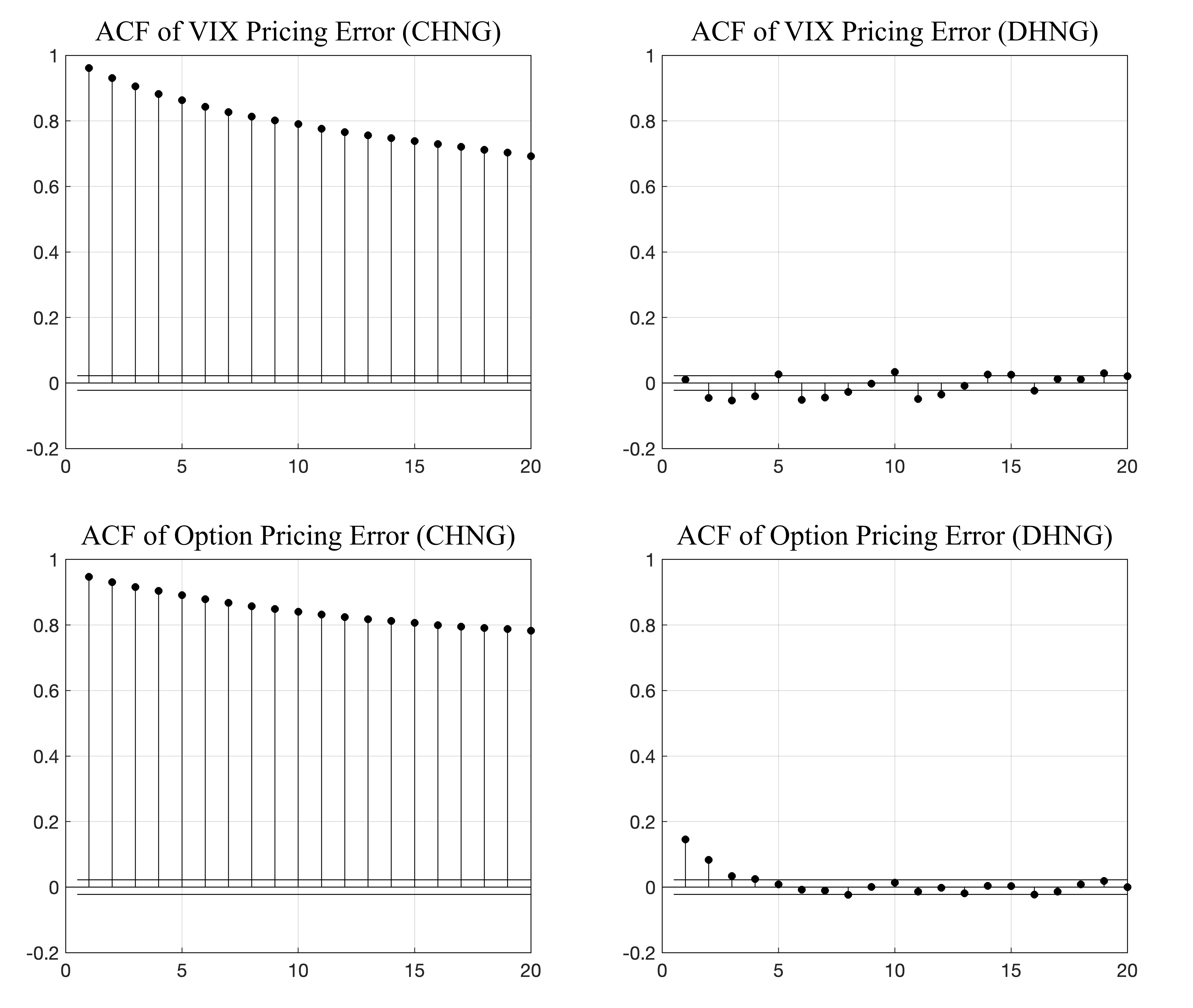}\caption{{\small The autocorrelation function (ACF) for derivative pricing errors
over the full sample period (January 1990 to December 2021). The ACFs
for the VIX are shown in the upper panels, and the ACFs for the average
option pricing errors are shown in the lower panels. The CHNG and
DHNG models are estimated with VIX data in the upper panels and option
prices in the lower panels. Model parameters are based on the Table
\ref{tab:JointEstimation}.\label{fig:ACF-pricing-errors}}\protect \\
{\small}\protect \\
{\small Alt text: Autocorrelation functions (ACFs) of derivative pricing
errors over the full sample period (1990--2021) for the CHNG and
DHNG models. DHNG exhibits vastly smaller autocorrelations for both
VIX pricing errors and average option pricing errors, consistent with
improved pricing accuracy.}}
\end{figure}

\subsection{Adaptive Compensation for Misspecification\label{subsec:Misspecification}}

The adaptive nature of the DHNG model, where parameters are adjusted
according to the first-order conditions, enhances the robustness of
the derivative pricing formulae against model misspecification. This
robustness arises because $\eta_{t}$ is automatically adjusted to
compensate for any misspecification that would otherwise lead to increased
pricing errors. While this adaptive feature is a strength of the model,
it also reveals a potential drawback: misspecification can undermine
the interpretation of $\eta_{t}$ as the variance risk ratio.

Expected volatility under $\mathbb{Q}$ can be measured accurately
using observed derivative prices. However, the situation differs under
$\mathbb{P}$, where we must rely on a model to estimate expected
volatility. If the model is misspecified, the resulting model-based
expected volatility may be biased. One common form of misspecification
is parameter instability. \citet{Sichert:2022} demonstrates that
a GARCH model with structural breaks can resolve the pricing kernel
puzzle related to the inverted U-shape. This explanation is further
explored in the empirical analysis by \citet{TongHansenHuang2022},
where they estimate a Markov switching Realized GARCH model with two
states. Their findings show significant U-shaped pricing kernels in
the high-volatility regime, which aligns with our results. However,
the positive variance risk premium (associated with the inverted U-shaped
pricing kernel) they observed in the low-volatility regime is close
to zero and statistically insignificant.

If the model-based expected volatility is biased, this bias will affect
$\eta_{t}$, meaning $\eta_{t}$ may not accurately reflect the true
variance risk ratio. For example, a small $\eta_{t}$ might not indicate
an investor's appetite for variance risk; instead, it could be an
artifact of the Heston-Nandi model overestimating future volatility
relative to actual expectations under $\mathbb{P}$. 

As a robustness check, we computed an alternative VRR that does not
rely on the Heston-Nandi GARCH model. This alternative measure is
based on model-free realized variances and uses a simple AR(1) model
to define expected variance under $\mathbb{P}$. The resulting time
series of this alternative VRR, reported in Appendix \ref{subsec:EMVRR},
closely resembles that of $\eta_{t}$ in Figure \ref{fig:eta}. This
similarity suggests that the observed fluctuations in $\eta_{t}$
are not driven by a flaw specific to the GARCH model (\ref{eq:HNGreturn})-(\ref{eq:HNGgarch}).
However, an oversimplified description of how expected volatility
evolves can be the cause. The AR(1) model and standard GARCH models
both produce volatility forecasts with mean-reverting characteristics.

Regarding misspecification as a potential source of time variation
in $\eta_{t}$, we generated scatterplots of pricing errors against
$\eta_{t}$. These scatterplots do not suggest a systematic relationship
between pricing errors and $\eta_{t}$; see Figure \ref{fig:Miss}
in Appendix \ref{subsec:EMVRR}.

\section{Variance Risk Aversion and Economic Fundamentals\label{sec:Eta-and-EconomicFundamentals}}

Time variation in volatility risk aversion is the key characteristic
of the new pricing kernel. In this section, we relate the time variation
in $\eta_{t}$ to economic fundamentals and several other variables
that are widely used in the asset pricing literature.

We focus on seven core measures of uncertainty, disagreement, and
sentiment. The first set of variables is related to investor disagreements.
Following \citet{BollerslevLiXue2018}, we use two types of proxies
for disagreement. The first concerns macroeconomic fundamentals, for
which we use the forecast dispersion for both the unemployment rate
and GDP growth from the Survey of Professional Forecasters.\footnote{\citet{BollerslevLiXue2018} only use the forecast dispersion for
the one-quarter-ahead unemployment rate.} The second proxy is disagreement about economic policy, for which
we adopt the \textit{Economic Policy Uncertainty} index by \citet{BakerBloomDavis2016}.

We also include the \emph{variance risk premium} (VRP) by \citet{CarrWu2008},
which \citet{GrithHardleKratschmer2017} suggest serves as a proxy
for market uncertainty. Additionally, we incorporate three uncertainty
measures from existing literature: the \emph{Economic Uncertainty
Index} by \citet{BaliBrownCaglayan2014}, the \emph{Survey-based Uncertainty
Index} by \citet{OzturkSheng2018};\footnote{This index is constructed from the Consensus Forecasts publication
by Consensus Economics Inc. The authors provide data up to October
2021, and we fill in the values for the last two months using a simple
predictive regression on all our explanatory variables.} and the well-known \emph{Sentiment} variable by \citet{BakerWurgler2006}.

In addition to these seven core variables, we include the economic
variables previously used in \citet{BakerWurgler2006} and \citet{WelchGoyal2007},
which we label as ``control variables''. Most variables are available
at a monthly frequency. Therefore, we take the monthly average of
$\eta_{t}$ and regress the logarithmically transformed average on
the variables listed above, as well as subsets of these variables.
\begin{sidewaystable}
\caption{Economic Explanation for Time-Varying Variance Risk Aversion}

\begin{centering}
\vspace{0.2cm}
\begin{footnotesize}
\begin{tabularx}{\textwidth}{p{4.5cm}YYYYYYYYYYYYYYYYYYYYYYY}
\toprule 
\midrule
           \multicolumn{9}{c}{Log Variance Risk Ratio (Monthly Average)} \\
    \midrule
 &       &   &       &       &       &       &       &       &  \\[-5pt]
 &  (1)     &  (2) &  (3)     &  (4)     &   (5)    &  (6)      &   (7)     &   (8)     &  (9) \\ [-2pt]
 &       &   &       &       &       &       &       &       &  \\
    Sentiment &       & -0.095*** &       &       &       &       &       &       & -0.093*** \\
          &       & \textit{(0.026)} &       &       &       &       &       &       & \textit{(0.020)} \\
    Dispersion-UNEMP &       &       & 0.139*** &       &       &       &       &       & 0.008 \\
          &       &       & \textit{(0.052)} &       &       &       &       &       & \textit{(0.044)} \\
    Dispersion-GDP &       &       &       & 0.176*** &       &       &       &       & 0.050 \\
          &       &       &       & \textit{(0.040)} &       &       &       &       & \textit{(0.049)} \\
    Economic Policy Uncertainty &       &       &       &       & 0.380*** &       &       &       & 0.208*** \\
          &       &       &       &       & \textit{(0.060)} &       &       &       & \textit{(0.045)} \\
    Survey-based Uncertainty &       &       &       &       &       & 0.148*** &       &       & 0.078*** \\
          &       &       &       &       &       & \textit{(0.026)} &       &       & \textit{(0.025)} \\
    Economic Uncertainty Index &       &       &       &       &       &       & 0.071*** &       & 0.013 \\
          &       &       &       &       &       &       & \textit{(0.009)} &       & \textit{(0.009)} \\
    Variance Risk Premium &       &       &       &       &       &       &       & 0.120*** & 0.096*** \\
          &       &       &       &       &       &       &       & \textit{(0.014)} & \textit{(0.012)} \\
 &       &   &       &       &       &       &       &       &  \\          
    Stock Market Volatility & 0.394*** & 0.395*** & 0.368*** & 0.356*** & 0.294*** & 0.363*** & 0.354*** & 0.395*** & 0.306*** \\
          & \textit{(0.042)} & \textit{(0.041)} & \textit{(0.043)} & \textit{(0.043)} & \textit{(0.039)} & \textit{(0.042)} & \textit{(0.041)} & \textit{(0.029)} & \textit{(0.029)} \\
    Default Yield Spread & 0.385*** & 0.358*** & 0.372*** & 0.365*** & 0.393*** & 0.282*** & 0.246*** & 0.285*** & 0.196*** \\
          & \textit{(0.090)} & \textit{(0.091)} & \textit{(0.088)} & \textit{(0.087)} & \textit{(0.078)} & \textit{(0.087)} & \textit{(0.081)} & \textit{(0.067)} & \textit{(0.058)} \\
    Default Return Spread & 0.026*** & 0.027*** & 0.025*** & 0.025*** & 0.019*** & 0.024*** & 0.021*** & 0.026*** & 0.020*** \\
          & \textit{(0.008)} & \textit{(0.008)} & \textit{(0.007)} & \textit{(0.007)} & \textit{(0.007)} & \textit{(0.007)} & \textit{(0.007)} & \textit{(0.006)} & \textit{(0.005)} \\
    Long Term Yield & 0.028** & 0.030** & 0.040*** & 0.043*** & 0.077*** & 0.016 & 0.045*** & 0.035*** & 0.063*** \\
          & \textit{(0.013)} & \textit{(0.013)} & \textit{(0.013)} & \textit{(0.012)} & \textit{(0.012)} & \textit{(0.011)} & \textit{(0.010)} & \textit{(0.010)} & \textit{(0.010)} \\
    Dividend Price Ratio & -0.457*** & -0.435*** & -0.525*** & -0.529*** & -0.638*** & -0.463*** & -0.403*** & -0.354*** & -0.471*** \\
          & \textit{(0.132)} & \textit{(0.127)} & \textit{(0.130)} & \textit{(0.129)} & \textit{(0.117)} & \textit{(0.133)} & \textit{(0.122)} & \textit{(0.091)} & \textit{(0.085)} \\
    Term Spread & -0.031* & -0.040** & -0.020 & -0.018 & -0.035** & -0.031* & -0.030* & -0.011 & -0.022* \\
          & \textit{(0.018)} & \textit{(0.017)} & \textit{(0.018)} & \textit{(0.018)} & \textit{(0.015)} & \textit{(0.018)} & \textit{(0.016)} & \textit{(0.012)} & \textit{(0.012)} \\          &       &       &       &       &       &       &       &       &  \\
    Other Controls & YES   & YES   & YES   & YES   & YES   & YES   & YES   & YES   & YES \\
          &       &       &       &       &       &       &       &       &  \\
    $R^2$  & 0.709 & 0.722 & 0.722 & 0.732 & 0.756 & 0.741 & 0.761 & 0.817 & 0.866 \\
    $\rho(1)$ & 0.607 & 0.579 & 0.620 & 0.628 & 0.612 & 0.626 & 0.584 & 0.326 & 0.383 \\
\\[0.0cm]
\\[-0.5cm]
\midrule
\bottomrule
\end{tabularx}
\end{footnotesize}
\par\end{centering}
{\small Note: The full-sample monthly regression results from January
1990 to December 2021, and the dependent variable is logarithm of
monthly averaged variance risk ratio estimated by option prices (the
last column of Table \ref{tab:JointEstimation}). The control variables
are from \citet{WelchGoyal2007} and \citet{BakerWurgler2006}. ``Other
Controls'' refer to six macro variables that control for macroeconomic
conditions (monthly growth rate of industrial production, real durables
consumption, real nondurables consumption, real services consumption,
employment, and the NBER recession indicator) and five financial variables
(earnings-to-price ratio, book-to-market ratio, net equity expansion,
long-term rate of return, and the consumer price index). Most of these
additional control variables are insignificant. Newey-West standard
errors are reported in parentheses. $\rho(1)$ the first-order autocorrelation
for residuals.\label{tab:EconomicFundamentals}}{\small\par}
\end{sidewaystable}

The full-sample monthly regression results are presented in Table
\ref{tab:EconomicFundamentals}. The seven core variables are listed
at the top of the table, followed by the six control variables that
contributed the most to explaining variation in $\eta_{t}$. There
are eleven additional control variables, labeled ``Other Controls'',
which were largely insignificant.\footnote{We follow {\small\citet{WelchGoyal2007} in the definitions of these
variables. }For instance, the yield spread is the difference between
BAA- and AAA-rated corporate bond yields, and the default return spread
is the difference between the return on long-term corporate bonds
and government bonds. See {\small\citet{WelchGoyal2007} for more
details.}} To conserve space, the results for all control variables (along with
their labels) are presented in the Appendix \ref{subsec:FullEconomic}.

The first column in Table \ref{tab:EconomicFundamentals} corresponds
to the specification that includes only the control variables, which
explain 70.9\% of the variation in the monthly average of $\eta$.
The six most significant control variables are stock market volatility,
the dividend-price ratio, and four variables related to interest rates
and credit risk.

Columns two through eight present the results for specifications that
include a single core variable in addition to the control variables.
Every core variable is significant in these regressions, and the signs
of their estimated coefficients align with expectations. The sentiment
variable has a negative coefficient, indicating that low sentiment
is associated with high volatility risk aversion. The five uncertainty-related
variables all have positive coefficients, suggesting that high uncertainty
is associated with high volatility aversion.

Finally, the variance risk premium is an empirical measure of the
difference between volatility under $\mathbb{Q}$ and $\mathbb{P}$.
As expected, this variable has a positive coefficient and contributes
the most to the increase in $R^{2}$. The last column shows the results
for a ``kitchen sink regression'' that includes all explanatory variables.
Most of the variables in Table \ref{tab:EconomicFundamentals} remain
significant in this regression, and the $R^{2}$ increases to nearly
87\%. However, three core variables--Forecast Dispersion of Unemployment,
Forecast Dispersion of GDP Growth, and the Economic Uncertainty Index--become
insignificant in the kitchen sink regression, possibly due to collinearity,
as the first two and the last two variables exhibit the highest correlations
among all core variables (88.6\% and 54.8\%), respectively.

\section{Summary\label{sec:Summary}}

We have introduced a novel option pricing model based on a flexible
pricing kernel with time-varying risk aversion. This model, denoted
DHNG, offers an elegant and tractable framework when combined with
the Heston-Nandi GARCH model. The variance risk ratio, $\eta_{t}$,
emerges as the fundamental variable that captures all time variation
in the pricing kernel. This ratio is functionally linked to the variance
risk aversion parameter, which defines the curvature of the pricing
kernel, and the framework can generate the empirically observed shapes
of the pricing kernel. Additionally, $\eta_{t}$ is closely related
to a range of well-known measures of sentiment, disagreement, uncertainty,
and other key economic variables.

Under the assumptions characterizing the DHNG model, we derived a
closed-form expression for the VIX and an approximate option pricing
formula. This formula is based on a novel approximation method that
can yield affine-type pricing formulae for non-affine models. The
method has proven to be highly accurate in simulation studies. Our
comprehensive empirical analysis demonstrated that DHNG significantly
reduces derivative pricing errors. Specifically, DHNG typically reduces
the root mean square error of pricing errors by more than 50\% compared
to the CHNG models where $\eta_{t}$ is constant. This substantial
reduction in pricing errors is observed both in-sample and out-of-sample.

The time-varying pricing kernel can also be combined with other GARCH
models, such as EGARCH, GJR-GARCH, and NGARCH, see \citet{Nelson91},
\citet{Glosten1993}, and \citet{Engle_Ng_1993}, as well as volatility
models that utilize realized measures of volatility, such as Realized
GARCH models by \citet{HansenHuangShek:2012} and \citet{HansenHuang:2016},
the GARV model by \citet{ChristoffersenFeunouJacobsMeddahi2014},
and the LHARG, see \citet{MajewskiBormettiCorsi2015}. For some of
these models, such as GARV and LHARG, it is possible to define an
Auxiliary models with an affine MGF, and it may be possible to establish
closed-form option pricing expressions with the approximation method
we have proposed in this paper. Other approximation methods, such
as that by \citet{DuanGauthierSimonato1999}, may be applicable to
the non-affine models. 

Analytical results are generally easier to establish in continuous-time
models, such as those by \citet{Heston1993} and \citet{CoxIngersollRoss85},
which partly explains their popularity among practitioners. However,
the discrete-time HNG model also offers an analytical option pricing
formula, and we successfully derived an analytical option pricing
formula for our discrete-time model using the new approximation method.
In contrast, obtaining analytical results in a continuous-time framework
with a time-varying VRR would be highly challenging due to the non-affine
structure. Additionally, the estimation process would be further complicated
by the presence of the latent variable $\eta_{t}$. The key advantage
of discrete-time GARCH models over stochastic volatility models lies
in their relative ease of estimation. This is particularly beneficial
in empirical studies involving a large number of options. Our empirical
implementation of the DHNG model retains a straightforward analytical
log-likelihood function, which greatly simplifies the estimation process.

\bibliographystyle{apalike}
\bibliography{prh}

\clearpage{}

\appendix
%dummy comment inserted by tex2lyx to ensure that this paragraph is not empty%dummy comment inserted by tex2lyx to ensure that this paragraph is not empty

\section{Appendix of Proofs\label{sec:Appendix-of-Proofs}}

\subsection{Proof of Theorem \ref{thm:DynUnderQ}\label{sec:Proof_DynUnderQ}}

\setcounter{equation}{0}\renewcommand{\theequation}{A.\arabic{equation}}
\setcounter{lem}{0}\renewcommand{\thelem}{A.\arabic{lem}}

The pricing kernel with dynamic risk parameters takes the form\textit{
\[
M_{t+1,t}=\frac{\exp\left(\phi_{t}R_{t+1}+\xi_{t}h_{t+2}\right)}{\mathbb{E}_{t}^{\mathbb{P}}[\exp(\phi_{t}R_{t+1}+\xi_{t}h_{t+2})]},
\]
}where the parameters $\phi_{t}$ and $\xi_{t}$ are $\mathcal{G}_{t}$-measurable.
The pricing kernel implies some dynamic properties in risk-neutral
measure satisfying the no-arbitrage conditions: $\mathbb{E}_{t}^{\mathbb{P}}[M_{t+1,t}]=1$
and $\mathbb{E}_{t}^{\mathbb{P}}[M_{t+1,t}\exp(R_{t+1})]=\exp(r)$.
The first is trivial, whereas the second condition reads
\begin{equation}
\frac{\mathbb{E}_{t}^{\mathbb{P}}[\exp\{(\phi_{t}+1)R_{t+1}+\xi_{t}h_{t+2}\}]}{\mathbb{E}_{t}^{\mathbb{P}}[\exp(\phi_{t}R_{t+1}+\xi_{t}h_{t+2})]}=\exp(r).\label{eq:Noarb}
\end{equation}
The MGF of $(R_{t+1},h_{t+2})$ under $\mathbb{P}$ is
\begin{align}
 & \text{\ensuremath{\mathbb{E}_{t}^{\mathbb{P}}}}\left[\exp\left(v_{1}R_{t+1}+v_{2}h_{t+2}\right)\right]\nonumber \\
= & \exp\left[v_{1}r+v_{2}\omega+[v_{1}(\lambda-\tfrac{1}{2})+v_{2}(\beta+\alpha\gamma^{2})]h_{t+1}\right]\times\text{\ensuremath{\mathbb{E}_{t}^{\mathbb{P}}}}\left[\exp\left(v_{2}\alpha z_{t+1}^{2}+(v_{1}-2v_{2}\alpha\gamma)\sqrt{h_{t+1}}z_{t+1}\right)\right]\nonumber \\
= & \exp\left[v_{1}r+v_{2}\omega+[v_{1}(\lambda-\tfrac{1}{2})+v_{2}(\beta+\alpha\gamma^{2})]h_{t+1}-\tfrac{1}{2}\log\left(1-2v_{2}\alpha\right)+\tfrac{(v_{1}-2v_{2}\alpha\gamma)^{2}}{2\left(1-2v_{2}\alpha\right)}h_{t+1}\right]\nonumber \\
= & \exp\left[v_{1}r+v_{2}\omega-\tfrac{1}{2}\log\left(1-2v_{2}\alpha\right)+\left(v_{1}(\lambda-\tfrac{1}{2})+v_{2}(\beta+\alpha\gamma^{2})+\tfrac{(v_{1}-2v_{2}\alpha\gamma)^{2}}{2\left(1-2v_{2}\alpha\right)}\right)h_{t+1}\right]\label{eq:MGF_R_H}
\end{align}
Substituting $(\phi_{t+1}+1,\xi_{t})$ and $(\phi_{t+1},\xi_{t})$
for $(v_{1},v_{2})$ and taking their ratio, reveals that (\ref{eq:Noarb})
simplifies to $[\lambda-\tfrac{1}{2}+\frac{1+2\phi_{t}-4\xi_{t}\alpha\gamma}{2(1-2\xi_{t}\alpha)}]h_{t+1}=0.$
Next, solving for $\lambda$ yields 
\begin{equation}
\lambda=\eta_{t}\left(\gamma-\phi_{t}-\tfrac{1}{2}\right)-\gamma+\tfrac{1}{2},\quad\text{where}\quad\eta_{t}\equiv\frac{1}{1-2\alpha\xi_{t}},\label{eq:lambda}
\end{equation}
and the results stated in part $(i)$ of Theorem \ref{thm:DynUnderQ}
are established.

The risk-neutral dynamic is deduced from the MGF of $z_{t+1}$ under
the $\mathbb{\mathbb{Q}}$ measure is:
\begin{align}
\text{\ensuremath{\mathbb{E}_{t}^{\mathbb{Q}}}}\left[\exp\left(v_{1}z_{t+1}\right)\right] & =\exp\left[\left(\tfrac{\phi_{t}-2\xi_{t}\alpha\gamma}{1-2\xi_{t}\alpha}\sqrt{h_{t+1}}\right)v_{1}+\tfrac{1}{2\left(1-2\xi_{t}\alpha\right)}v_{1}^{2}\right]\nonumber \\
 & =\exp\left[\left[-\left(\lambda+\tfrac{1}{2}\eta_{t}-\tfrac{1}{2}\right)\sqrt{h_{t+1}}\right]v_{1}+\tfrac{1}{2}\eta_{t}v_{1}^{2}\right]\label{eq: MGFzInQ}
\end{align}
The last equality comes from (\ref{eq:lambda}). The expression (\ref{eq: MGFzInQ})
shows that $z_{t+1}$ is also normally distributed under $\mathbb{\mathbb{Q}}$,
but with a different mean and a different variance (than under $\mathbb{P}$).
So we define
\[
z_{t+1}^{*}=\frac{z_{t+1}-\mathbb{E}_{t}^{\mathbb{Q}}(z_{t+1})}{\sqrt{\mathrm{var}_{t}^{\mathbb{Q}}(z_{t+1})}}=\frac{1}{\sqrt{\eta_{t}}}\left(z_{t+1}+\left(\lambda+\frac{1}{2}\eta_{t}-\tfrac{1}{2}\right)\sqrt{h_{t+1}}\right),
\]
such that $z_{t}^{\ast}|\mathcal{G}_{t}\sim iid\ N(0,1)$ under $\mathbb{\mathbb{Q}}$.
Substituting the risk-neutralized quantities, $z_{t+1}^{*}$ and $h_{t+1}^{*}=\eta_{t}h_{t+1}$
into (\ref{eq:HNGreturn}), we arrive at the expressions stated in
part $(ii)$. Finally, part $(iii)$ is an implication of the MGF
for $z_{t+1}$ under $\mathbb{\mathbb{Q}}$, see (\ref{eq: MGFzInQ}).\hfill{}$\square$

\subsection{Proof of Lemma \ref{lem:ShapeDynamicKernel}\label{subsec:Proof-of-PK}}

Using (\ref{eq:MGF_R_H}), the pricing kernel in (\ref{eq:NewPK})
can expressed as
\[
\log M_{t+1,t}=\delta_{t}+\phi_{t}\left(R_{t+1}-r\right)+\theta_{t}h_{t+1}+\xi_{t}h_{t+2},
\]
where $\theta_{t}=-\left(\phi_{t}\left(\lambda-\tfrac{1}{2}\right)+\xi_{t}\left(\beta+\alpha\gamma^{2}\right)+\tfrac{\left(\phi_{t}-2\xi_{t}\alpha\gamma\right)^{2}}{2\left(1-2\xi_{t}\alpha\right)}\right)$
and $\delta_{t}=-\xi_{t}\omega+\tfrac{1}{2}\log\left(1-2\xi_{t}\alpha\right)$. 

Using the expression of $h_{t+2}$, we have
\begin{align*}
\log M_{t+1,t} & =\delta_{t}+\phi_{t}\left(R_{t+1}-r\right)+\theta_{t}h_{t+1}+\xi_{t}\left(\omega+\beta h_{t+1}+\alpha\left(z_{t+1}-\gamma\sqrt{h_{t+1}}\right)^{2}\right)\\
 & =[\delta_{t}+\xi_{t}\omega]+\phi_{t}\left(R_{t+1}-r\right)+\left(\theta_{t}+\xi_{t}\beta\right)h_{t+1}+\tfrac{\xi_{t}\alpha}{h_{t+1}}\left(R_{t+1}-r-\left(\lambda-\tfrac{1}{2}+\gamma\right)h_{t+1}\right)^{2}\\
 & =[\delta_{t}+\xi_{t}\omega]+\phi_{t}\left(R_{t+1}-r\right)+\left(\theta_{t}+\xi_{t}\beta\right)h_{t+1}+\tfrac{\xi_{t}\alpha}{h_{t+1}}\left(R_{t+1}-r\right)^{2}\\
 & \quad+\tfrac{\xi_{t}\alpha}{h_{t+1}}\left(\lambda-\tfrac{1}{2}+\gamma\right)^{2}h_{t+1}^{2}-2\tfrac{\xi_{t}\alpha}{h_{t+1}}\left(R_{t+1}-r\right)\left(\lambda-\tfrac{1}{2}+\gamma\right)h_{t+1}\\
 & =\kappa_{0,t}+\kappa_{1,t}h_{t+1}+\kappa_{2,t}\left(R_{t+1}-r\right)+\tfrac{\xi_{t}\alpha}{h_{t+1}}\left(R_{t+1}-r\right)^{2},
\end{align*}
where $\kappa_{0,t}=\delta_{t}+\xi_{t}\omega=\tfrac{1}{2}\log\left(1-2\xi_{t}\alpha\right)=-\frac{1}{2}\log\eta_{t}$,
and $\kappa_{2,t}=\phi_{t}-2\alpha\xi_{t}\left(\lambda-\tfrac{1}{2}+\gamma\right)=-\lambda,$
where we used that $\ensuremath{\lambda=\ensuremath{\frac{1}{1-2\alpha\xi_{t}}}\left(\gamma-\phi_{t}-\frac{1}{2}\right)-\gamma+\frac{1}{2}}$,
see (\ref{eq:lambda}), and
\begin{align*}
\kappa_{1,t} & =\theta_{t}+\xi_{t}\beta+\xi_{t}\alpha\left(\lambda-\tfrac{1}{2}+\gamma\right)^{2}\\
 & =\theta_{t}+\xi_{t}\left(\beta+\alpha\gamma^{2}\right)+\xi_{t}\alpha\left(\lambda-\tfrac{1}{2}\right)^{2}+2\xi_{t}\alpha\gamma\left(\lambda-\tfrac{1}{2}\right)\\
 & =-\phi_{t}\left(\lambda-\tfrac{1}{2}\right)+\xi_{t}\alpha\left(\lambda-\tfrac{1}{2}\right)^{2}+2\xi_{t}\alpha\gamma\left(\lambda-\tfrac{1}{2}\right)-\tfrac{\left(\phi_{t}-2\xi_{t}\alpha\gamma\right)^{2}}{2\left(1-2\xi_{t}\alpha\right)}\\
 & =-\alpha\xi_{t}\left(\lambda-\tfrac{1}{2}\right)^{2}+\lambda\left(\lambda-\tfrac{1}{2}\right)-\tfrac{\left(\lambda-2\xi_{t}\alpha(\lambda-\tfrac{1}{2})\right)^{2}}{2\left(1-2\xi_{t}\alpha\right)}\\
 & =\tfrac{1}{2}\left(1-2\alpha\xi_{t}\right)\left(\lambda-\tfrac{1}{2}\right)^{2}-\tfrac{1}{2}\left(\lambda-\tfrac{1}{2}\right)^{2}+\lambda\left(\lambda-\tfrac{1}{2}\right)-\tfrac{1}{2}\left[\tfrac{1}{\eta_{t}}\left(\lambda-\tfrac{1}{2}\right)^{2}+\tfrac{1}{4}\eta_{t}+(\lambda-\tfrac{1}{2})\right]\\
 & =-\tfrac{1}{8}\eta_{t}+\tfrac{1}{2}\left(\lambda-\tfrac{1}{2}\right)^{2}.
\end{align*}

\subsection{Proof of Theorem \ref{thm:VixPricing}\label{sec:Proof_VIXpricing}}

We rewrite the volatility dynamic under $\mathbb{Q}$ measure as
\[
h_{t+1}^{*}=(\omega_{t}^{*}+\alpha_{t}^{*})+(\beta_{t}^{*}+\alpha_{t}^{*}\gamma_{t-1}^{*2})h_{t}^{*}+\nu_{t},
\]
where $\nu_{t}=\alpha_{t}^{*}((z_{t}^{*})^{2}-1-2\gamma_{t-1}^{*}z_{t}^{*}\sqrt{h_{t}^{*}})$
is a zero-mean variation condition on $\mathcal{G}_{t-1}$ due to
the facts that $z_{t}^{*}|\mathcal{G}_{t-1}\overset{\mathbb{Q}}{\sim}iid\ N\left(0,1\right)$,
and the independence between $\varepsilon_{t}$ and $z_{t}$.

We denote $A_{t}\equiv\omega_{t}^{*}+\alpha_{t}^{*}$ and $B_{t}\equiv\beta_{t}^{\ast}+\alpha_{t}^{*}\gamma_{t-1}^{\ast2}$,
then we have
\begin{eqnarray*}
h_{t+k}^{*} & = & A_{t+k-1}+B_{t+k-1}h_{t+k-1}^{*}+\nu_{t+k-1}\\
 & = & A_{t+k-1}+B_{t+k-1}A_{t+k-2}+B_{t+k-1}B_{t+k-2}h_{t+k-2}^{*}+B_{t+k-1}\nu_{t+k-2}+\nu_{t+k-1}\\
 & = & A_{t+k-1}+B_{t+k-1}A_{t+k-2}+\cdots+B_{t+k-1}\cdots B_{t+2}A_{t+1}+B_{t+k-1}\cdots B_{t+1}h_{t+1}^{*}\\
 &  & +\sum_{i=1}^{k-2}B_{t+k-1}B_{t+k-2}\cdots B_{t+i+1}\nu_{t+i}+\nu_{t+k-1}.
\end{eqnarray*}
In order to get its analytical formula, we replace $B_{t}$ with $\tilde{B}_{t}\equiv\beta_{t}^{\ast}+\alpha_{t}^{*}\tilde{\gamma}_{t-1}^{\ast2}$
where $\tilde{\gamma}_{t}^{\ast}=\tfrac{1}{\eta_{t}}(\gamma+\lambda)$.
Empirically $\tilde{\gamma}_{t}^{\ast}$ is indistinguishable from
$\gamma_{t}^{*}=\tfrac{1}{\eta_{t}}(\gamma+\lambda+\ensuremath{\tfrac{1}{2}(\eta_{t}-1)})$
because the magnitude of $\gamma+\lambda$ is much larger than $\frac{1}{2}(\eta_{t}-1)$.
Empirically we have $\gamma+\lambda\approx253$ whereas $\frac{1}{2}(\eta_{t}-1)$
ranges between $-0.27$ and 1.39, with an average value of about 0.07.
Then it follows that\footnote{Recall that $\alpha_{t}^{\ast}=\alpha\eta_{t}\eta_{t-1}$ and $\beta_{t}^{*}=\beta\tfrac{\eta_{t}}{\eta_{t-1}}$.}
\[
\tilde{B}_{t}=\beta_{t}^{\ast}+\alpha_{t}^{\ast}\gamma_{t-1}^{\ast2}=\beta\frac{\eta_{t}}{\eta_{t-1}}+\alpha\eta_{t}\eta_{t-1}(\tfrac{\gamma+\lambda}{\eta_{t-1}})^{2}=\tilde{\beta}\frac{\eta_{t}}{\eta_{t-1}},
\]
where $\tilde{\beta}\equiv\beta+\alpha(\gamma+\lambda)^{2}$, so that
$\tilde{B}_{t+k}\tilde{B}_{t+k-1}\cdots\tilde{B}_{t+i}=\tilde{\beta}^{k-i+1}\frac{\eta_{t+k}}{\eta_{t+i-1}}$,
and 
\[
\tilde{B}_{t+k}\cdots\tilde{B}_{t+i+1}A_{t+i}=\tilde{\beta}^{k-i}\frac{\eta_{t+k}}{\eta_{t+i}}(\omega\eta_{t+i}+\alpha\eta_{t+i}\eta_{t+i-1})=\tilde{\beta}^{k-i}(\omega\eta_{t+k}+\alpha\eta_{t+k}\eta_{t+i-1}).
\]
Using these results, we have
\[
\mathbb{E}_{t}^{\mathbb{Q}}(h_{t+k}^{*})=\sum_{i=2}^{k}\tilde{\beta}^{k-i}\mathbb{E}_{t}^{\mathbb{Q}}\left(\omega\eta_{t+k-1}+\alpha\eta_{t+k-1}\eta_{t+i-2}\right)+\tilde{\beta}^{k-1}\mathbb{E}_{t}^{\mathbb{Q}}\left(\frac{\eta_{t+k-1}}{\eta_{t}}\right)h_{t+1}^{*}+\delta_{k},
\]
where $\delta_{k}$ is the term that arises from the substitution
of $\tilde{B}_{t}$ for $B_{t}$. 

The $M$-days ahead VIX can be calculated as the annualized arithmetic
average of the expected daily variance over the following month under
the risk-neutral measure, i.e., 
\[
\mathrm{VIX}_{t}=A\times\sqrt{\frac{1}{M}\sum_{k=1}^{M}\mathbb{E}_{t}^{\mathbb{Q}}(h_{t+k}^{*})}.
\]
Where $A=100\sqrt{252}$ is the annualizing factor. The model-implied
squared VIX is given by,
\begin{eqnarray}
\frac{A^{2}}{M}\sum_{k=1}^{M}\mathbb{E}_{t}^{\mathbb{Q}}(h_{t+k}^{*}) & = & \underset{a_{1}(M,\sigma^{2})}{\underbrace{\frac{A^{2}}{M}\sum_{k=2}^{M}\sum_{i=2}^{k}\tilde{\beta}^{k-i}\mathbb{E}_{t}^{\mathbb{Q}}(\omega\eta_{t+k-1}+\alpha\eta_{t+k-1}\eta_{t+i-2})}}\nonumber \\
 &  & +\underset{a_{2}(M,\sigma^{2})}{\underbrace{\frac{A^{2}}{M}\sum_{k=1}^{M}\tilde{\beta}^{k-1}\mathbb{E}_{t}^{\mathbb{Q}}\left(\frac{\eta_{t+k-1}}{\eta_{t}}\right)}h_{t+1}^{*}}+\Delta,\label{eq:ModelVixRaw}
\end{eqnarray}
where $\Delta=\frac{A^{2}}{M}\sum\delta_{k}$. We provide bounds for
the approximation error term $\Delta$ in Lemma \ref{lem:bound},
and show that this term is negligible in practice.

For the $\operatorname{ARMA}(p,q)$ structure, $\log\eta_{t}=\left(1-\sum_{i=1}^{p}\varphi_{i}\right)\zeta+\sum_{i=1}^{p}\varphi_{i}\log\eta_{t-i}+\sum_{j=1}^{q}\theta_{j}\varepsilon_{t-j}+\varepsilon_{t}$,
we will show that 
\begin{equation}
\log\eta_{t+k}=A_{k}+\sum_{i=1}^{p}B_{k,i}\log\eta_{t+1-i}+\sum_{j=1}^{q}C_{k,j}\varepsilon_{t+1-j}+\sum_{l=1}^{k}D_{k,l}\varepsilon_{t+l},\label{eq:logARMA}
\end{equation}
for $k\geq1$, for suitable constants, $A_{k}$, $B_{k,i}$, $C_{k,j}$
and $D_{k,l}$. We show this by induction. For $k=1$ it is easy to
verify that $A_{1}=\left(1-\sum_{i=1}^{p}\varphi_{i}\right)\zeta$,
$B_{1,i}=\varphi_{i}$, $C_{1,j}=\theta_{j}$ and $D_{1,1}=1$. Suppose
that (\ref{eq:logARMA}) holds for $k\geq1$, then for $k+1$ we have
\begin{align*}
\log\eta_{t+k+1} & =A_{k}+\sum_{i=1}^{p}B_{k,i}\log\eta_{t+2-i}+\sum_{j=1}^{q}C_{k,j}\varepsilon_{t+2-j}+\sum_{l=1}^{k}D_{k,l}\varepsilon_{t+1+l}\\
 & =A_{k}+B_{k,1}\log\eta_{t+1}+\sum_{i=1}^{p-1}B_{k,i+1}\log\eta_{t+1-i}+C_{k,1}\varepsilon_{t+1}+\sum_{j=1}^{q-1}C_{k,j+1}\varepsilon_{t+1-j}+\sum_{l=2}^{k+1}D_{k,l-1}\varepsilon_{t+l}\\
 & =A_{k}+\sum_{i=1}^{p-1}B_{k,i+1}\log\eta_{t+1-i}+C_{k,1}\varepsilon_{t+1}+\sum_{j=1}^{q-1}C_{k,j+1}\varepsilon_{t+1-j}+\sum_{l=2}^{k+1}D_{k,l-1}\varepsilon_{t+l}\\
 & \quad+B_{k,1}\left[\left(1-\sum_{i=1}^{p}\varphi_{i}\right)\zeta+\sum_{i=1}^{p}\varphi_{i}\log\eta_{t+1-i}+\sum_{j=1}^{q}\theta_{j}\varepsilon_{t+1-j}+\varepsilon_{t+1}\right]\\
 & =A_{k+1}+\sum_{i=1}^{p}B_{k+1,i}\log\eta_{t+1-i}+\sum_{j=1}^{q}C_{k+1,j}\varepsilon_{t+1-j}+\sum_{l=1}^{k+1}D_{k+1,l}\varepsilon_{t+l},
\end{align*}
where
\begin{align*}
A_{k+1} & =A_{k}+B_{k,1}\left(1-\sum_{i=1}^{p}\varphi_{i}\right)\zeta,\quad\qquad\qquad B_{k+1,i}=\begin{cases}
\varphi_{i}B_{k,1}+B_{k,i+1} & 1\leq i\leq p-1,\\
\varphi_{i}B_{k,1} & i=p,
\end{cases}\\
C_{k+1,j} & =\begin{cases}
\theta_{j}B_{k,1}+C_{k,j+1} & 1\leq j\leq q-1,\\
\theta_{j}B_{k,1} & j=q,
\end{cases}\quad D_{k+1,l}=\begin{cases}
C_{k,1}+B_{k,1} & l=1,\\
D_{k,l-1} & 2\leq l\leq k+1.
\end{cases}
\end{align*}
This completes the proof of (\ref{eq:logARMA}). 

Suppose that $\varepsilon_{t}$ is iid and let $\Psi(s)\equiv\log\mathbb{E}[\exp(s\varepsilon_{t})]$
denote its log-MGF. Then we have
\[
\mathbb{E}_{t}\left(\eta_{t+k}\right)=\exp\left\{ A_{k}+\sum_{i=1}^{p}B_{k,i}\log\eta_{t+1-i}+\sum_{j=1}^{q}C_{k,j}\varepsilon_{t+1-j}+\sum_{l=1}^{k}\Psi\left(D_{k,l}\right)\right\} ,
\]
and for $i<k$, we have
\begin{align*}
\log(\eta_{t+k}\eta_{t+i}) & =A_{k}+A_{i}+\sum_{i^{\prime}=1}^{p}\left(B_{k,i^{\prime}}+B_{i,i^{\prime}}\right)\log\eta_{t+1-i^{\prime}}+\sum_{j=1}^{q}\left(C_{k,j}+C_{i,j}\right)\varepsilon_{t+1-j}\\
 & \quad+\sum_{l=1}^{i}\left(D_{k,l}+D_{i,l}\right)\varepsilon_{t+l}+\sum_{l=i+1}^{k}D_{k,l}\varepsilon_{t+l},\\
\mathbb{E}_{t}\left(\eta_{t+k}\eta_{t+i}\right) & =\exp\left\{ A_{k}+A_{i}+\sum_{i^{\prime}=1}^{p}\left(B_{k,i^{\prime}}+B_{i,i^{\prime}}\right)\log\eta_{t+1-i^{\prime}}+\sum_{j=1}^{q}\left(C_{k,j}+C_{i,j}\right)\varepsilon_{t+1-j}\right.\\
 & \quad\left.+\sum_{l=1}^{i}\Psi\left(D_{k,l}+D_{i,l}\right)+\sum_{l=i+1}^{k}\Psi\left(D_{k,l}\right)\right\} .
\end{align*}
We can now compute the two key terms in (\ref{eq:ModelVixRaw}). 

The expression simplify in the special case with an AR(1) structure,
$\log\eta_{t}=(1-\varphi)\zeta+\varphi\log\eta_{t-1}+\varepsilon_{t}$,
where $|\varphi|<1$. It follows that $\log\eta_{t+k}=\left(1-\varphi^{k}\right)\zeta+\varphi^{k}\log\eta_{t}+\sum_{j=1}^{k}\varphi^{k-j}\varepsilon_{t+j},$
such that, for $i<k$, we have
\[
\log(\eta_{t+k}\eta_{t+i})=\left(2-\varphi^{k}-\varphi^{i}\right)\zeta+\left(\varphi^{k}+\varphi^{i}\right)\log\eta_{t}+\sum_{j=1}^{i}\left(\varphi^{k-j}+\varphi^{i-j}\right)\varepsilon_{t+j}+\sum_{j^{\prime}=i+1}^{k}\varphi^{k-j^{\prime}}\varepsilon_{t+j^{\prime}},
\]
$\mathbb{E}_{t}\left(\eta_{t+k}\right)=\exp\left\{ (1-\varphi^{k})\zeta+\varphi^{k}\log\eta_{t}+\sum_{j=0}^{k-1}\Psi(\varphi^{j})\right\} $,
and
\[
\mathbb{E}_{t}\left(\eta_{t+k}\eta_{t+i}\right)=\exp\left\{ (2-\varphi^{k}-\varphi^{i})\zeta+(\varphi^{k}+\varphi^{i})\log\eta_{t}+\sum_{j=1}^{i}\Psi(\varphi^{k-j}+\varphi^{i-j})+\sum_{j^{\prime}=i+1}^{k}\Psi(\varphi^{k-j^{\prime}})\right\} .
\]
We can therefore express the two key terms in (\ref{eq:ModelVixRaw})
as, 
\begin{align*}
a_{1}(M,\sigma^{2}) & =\frac{A^{2}}{M}\sum_{k=2}^{M}\sum_{i=2}^{k}\tilde{\beta}^{k-i}\left[\omega\Lambda_{1}(k)\eta_{t}^{\varphi^{k-1}}+\alpha\Lambda_{2}(k,i)\eta_{t}^{\varphi^{k-1}+\varphi^{i-2}}\right],
\end{align*}
and $a_{2}(M,\sigma^{2})=\frac{A^{2}}{M}\sum_{k=1}^{M}\tilde{\beta}^{k-1}\Lambda_{1}(k)\eta_{t}^{\varphi^{k-1}-1}$,
respectively,  where
\begin{align*}
\Lambda_{1}(k) & =\exp\left\{ \left(1-\varphi^{k-1}\right)\zeta+\sum_{j=0}^{k-2}\Psi(\varphi^{j})\right\} ,\\
\Lambda_{2}(k,i) & =\exp\left\{ (2-\varphi^{k-1}-\varphi^{i-2})\zeta+\sum_{j=1}^{i-2}\Psi(\varphi^{k-1-j}+\varphi^{i-2-j})+\sum_{j^{\prime}=i-1}^{k-1}\Psi(\varphi^{k-1-j^{\prime}})\right\} .
\end{align*}
If $\varepsilon_{t}$ is normally distributed with zero mean and variance
$\sigma^{2}$, we have $\Psi(\phi)=\frac{1}{2}\sigma^{2}\phi^{2}$.\hfill{}$\square$
\begin{lem}[Bound for $\Delta$]
\label{lem:bound}Suppose that $\ensuremath{\eta_{t}\in[\eta_{L},\eta_{H}]}$
is bounded where $\eta_{L},\eta_{H}>0$. Then $\Delta\leq\delta(\tilde{\beta}_{{\rm max}})-\delta(\tilde{\beta}_{{\rm min}})$,
where
\[
\delta(x)=A\times\sqrt{\frac{1}{M}\left[\sum_{k=2}^{M}\sum_{i=2}^{k}x^{k-i}\mathbb{E}_{t}^{\mathbb{Q}}(\omega\eta_{t+k-1}+\alpha\eta_{t+k-1}\eta_{t+i-2})+\sum_{k=1}^{M}x^{k-1}\mathbb{E}_{t}^{\mathbb{Q}}\left(\frac{\eta_{t+k-1}}{\eta_{t}}\right)h_{t+1}^{*}\right]},
\]
and $\tilde{\beta}_{{\rm min}}=\beta+\alpha(\gamma+\lambda+\ensuremath{\tfrac{1}{2}(\eta_{L}-1)})^{2}$
and $\tilde{\beta}_{{\rm max}}=\beta+\alpha(\gamma+\lambda+\ensuremath{\tfrac{1}{2}(\eta_{H}-1)})^{2}$.
\end{lem}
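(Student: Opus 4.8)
The plan is to track exactly what is lost when the genuine one-step coefficient $B_t=\beta_t^*+\alpha_t^*\gamma_{t-1}^{*2}$ is replaced by its tilde-version in the proof of Theorem \ref{thm:VixPricing}. Using $\gamma_{t-1}^*=\tfrac{1}{\eta_{t-1}}(\gamma+\lambda+\tfrac12(\eta_{t-1}-1))$ together with $\alpha_t^*=\alpha\eta_t\eta_{t-1}$ and $\beta_t^*=\beta\tfrac{\eta_t}{\eta_{t-1}}$ from Theorem \ref{thm:DynUnderQ}, the true coefficient factors as $B_t=\tfrac{\eta_t}{\eta_{t-1}}\hat\beta_{t-1}$ with $\hat\beta_{t-1}=\beta+\alpha(\gamma+\lambda+\tfrac12(\eta_{t-1}-1))^2$, whereas the approximation substitutes the constant $\tilde\beta=\beta+\alpha(\gamma+\lambda)^2$ for $\hat\beta_{t-1}$. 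Thus $\Delta$ is driven entirely by the discrepancy between the path-dependent $\hat\beta_{t-1}$ and a single constant, and the whole lemma reduces to sandwiching $\hat\beta_{t-1}$ between two constants and propagating that sandwich through the VIX expression.

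First I would establish the two-sided bound on $\hat\beta_{t-1}$. Since $\gamma+\lambda$ is large and positive (empirically $\approx 280$) and $\eta_{t-1}\in[\eta_L,\eta_H]$ with $\eta_L,\eta_H>0$, the inner term $\gamma+\lambda+\tfrac12(\eta_{t-1}-1)$ is positive and strictly increasing in $\eta_{t-1}$; squaring a positive increasing quantity preserves monotonicity, so $\hat\beta_{t-1}$ is increasing in $\eta_{t-1}$ and hence $\hat\beta_{t-1}\in[\tilde\beta_{\min},\tilde\beta_{\max}]$ with $\tilde\beta_{\min},\tilde\beta_{\max}$ exactly as defined in the statement.

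Next I would push this bound through the products of coefficients that appear in $\mathbb{E}_t^{\mathbb{Q}}(h_{t+k}^*)$. In the recursion for $h_{t+k}^*$, each multi-step factor $B_{t+k-1}\cdots B_{t+i}$ telescopes the volatility-risk ratios into a single positive ratio $\eta_{t+k-1}/\eta_{t+i-1}$ multiplied by a product of factors $\hat\beta_{\cdot}$, each in $[\tilde\beta_{\min},\tilde\beta_{\max}]$. Consequently every such multi-step factor is bounded below by the same expression with $\tilde\beta_{\min}$ in place of every $\hat\beta$, and above by the analogue with $\tilde\beta_{\max}$. Crucially, all remaining ingredients of $\sum_{k=1}^M\mathbb{E}_t^{\mathbb{Q}}(h_{t+k}^*)$ — namely $A_{t+i}=\omega\eta_{t+i}+\alpha\eta_{t+i}\eta_{t+i-1}$, the ratios $\eta_{t+k-1}/\eta_t$, and $h_{t+1}^*$ — are nonnegative, since $\omega,\alpha,h_{t+1}^*>0$ and all $\eta>0$. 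Replacing each $\hat\beta$ factor uniformly by $\tilde\beta_{\min}$, respectively $\tilde\beta_{\max}$, therefore sandwiches the entire sum term by term, and after multiplying by $A/\sqrt M$ and taking square roots (both monotone) the true VIX lies in $[\delta(\tilde\beta_{\min}),\delta(\tilde\beta_{\max})]$, these being precisely $\delta(x)$ evaluated at the two constant bases. The approximate VIX is $\delta(\tilde\beta)$, which lies in the same interval because $\tilde\beta\in[\tilde\beta_{\min},\tilde\beta_{\max}]$; two points of a common interval differ by at most its width, yielding $\Delta\le\delta(\tilde\beta_{\max})-\delta(\tilde\beta_{\min})$. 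Monotonicity of $\delta$ — the bracketed expression has only nonnegative coefficients of the powers $x^{k-i}$ and $x^{k-1}$ — guarantees the right-hand side is nonnegative.

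The main obstacle is bookkeeping rather than conceptual: one must verify that after telescoping, the coefficient multiplying each $\eta$-expectation in $\sum_k\mathbb{E}_t^{\mathbb{Q}}(h_{t+k}^*)$ is exactly a power of the constant base, so that the uniform substitution $\hat\beta\mapsto\tilde\beta_{\min/\max}$ converts the sum into the bracketed expression defining $\delta(x)$; this is the same reindexing already carried out in the proof of Theorem \ref{thm:VixPricing}. A minor secondary point is that $\tilde\beta$ itself lies in $[\tilde\beta_{\min},\tilde\beta_{\max}]$, which holds whenever $\eta_L\le 1\le\eta_H$ and is satisfied in the empirical range since $\eta_L<1$; if one prefers a statement free of this caveat, it suffices to observe that both the true VIX and $\delta(\tilde\beta)$ belong to $[\delta(\tilde\beta_{\min}),\delta(\tilde\beta_{\max})]$ once $\tilde\beta$ is admitted as a constant base.
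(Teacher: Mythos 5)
Your proof is correct and follows essentially the same route as the paper's: both rest on the factorization $B_t=\tfrac{\eta_t}{\eta_{t-1}}\hat\beta_{t-1}$ with $\hat\beta_{t-1}\in[\tilde\beta_{\mathrm{min}},\tilde\beta_{\mathrm{max}}]$, push this bound through the nonnegative terms of $\sum_{k}\mathbb{E}_t^{\mathbb{Q}}(h_{t+k}^{*})$ using monotonicity of $\delta(\cdot)$, and conclude that the true VIX and the approximation $\delta(\tilde\beta)$ lie in a common interval of width $\delta(\tilde\beta_{\mathrm{max}})-\delta(\tilde\beta_{\mathrm{min}})$. You are merely more explicit than the paper (which compresses the sandwich into an intermediate-value step producing $\bar\beta$), and your remark that the bound tacitly requires $\tilde\beta\in[\tilde\beta_{\mathrm{min}},\tilde\beta_{\mathrm{max}}]$, i.e.\ $\eta_L\le 1\le\eta_H$, correctly flags a caveat the paper's own proof shares but leaves unstated.
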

\begin{proof}
We have $B_{t}=\beta_{t}^{\ast}+\alpha_{t}^{\ast}\gamma_{t-1}^{\ast2}\in[\tilde{\beta}_{{\rm min}},\tilde{\beta}_{{\rm max}}]\frac{\eta_{t}}{\eta_{t-1}}$.
Since $\delta(x)$ is a monotone function in $x$, there exists an
$\bar{\beta}\in[\tilde{\beta}_{{\rm min}},\tilde{\beta}_{{\rm max}}]$,
such that $\delta(\bar{\beta})=A\times\sqrt{\tfrac{1}{M}\sum_{k=1}^{M}\mathbb{E}_{t}^{\mathbb{Q}}(h_{t+k}^{*})}$
(i.e., true value of VIX), and $\Delta=|\delta(\bar{\beta})-\delta(\tilde{\beta})|$,
then we have $\Delta\leq\delta(\tilde{\beta}_{{\rm max}})-\delta(\tilde{\beta}_{{\rm min}})$.
Note that $\delta(x)$ can be expressed in terms of $a_{1}(\eta_{t},M,\sigma^{2})$
and $a_{2}(\eta_{t},M,\sigma^{2})$. 
\end{proof}

\subsection{Proof of Lemma \ref{lem:Closed-form-OP}\label{sec:Proof_OP}}
\begin{lem}[Full Version of Lemma \ref{lem:Closed-form-OP}]
\label{lem:MGFpredetermined}In the predetermined case with path
$\bar{\boldsymbol{\eta}}_{t,M}$, the conditional MGF for future cumulative
returns has the following exponentially affine form:
\begin{align*}
g_{t,M}(s|\bar{\boldsymbol{\eta}}_{t,M}) & =\mathbb{E}_{t}^{\mathbb{Q}}(\exp(s\sum_{\tau=t+1}^{T}R_{\tau}))=\exp\left(A_{T}(s,M)+B_{T}(s,M)h_{t+1}^{*}\right),
\end{align*}
where $M=T-t$, $A_{T}(s,m)$ and $B_{T}(s,m)$ are recursively given
by
\begin{align}
A_{T}(s,m+1) & =A_{T}(s,m)+sr+B_{T}(s,m)\omega_{T-m}^{*}-\tfrac{1}{2}\log(1-2\alpha_{T-m}^{*}B_{T}(s,m)),\label{eq:A}\\
B_{T}(s,m+1) & =s(\gamma_{T-m-1}^{*}-\tfrac{1}{2})-\tfrac{1}{2}\gamma_{T-m-1}^{*2}+\beta_{T-m}^{*}B_{T}(s,m)+\tfrac{(s-\gamma_{T-m-1}^{*})^{2}}{2(1-2\alpha_{T-m}^{*}B_{T}(s,m))},\label{eq:B}
\end{align}
with initial values $A_{T}(s,1)=sr$ and $B_{T}(s,1)=\tfrac{1}{2}\left(s^{2}-s\right)$,
where all terms in (\ref{eq:A}) and (\ref{eq:B}) are predetermined.
\end{lem}
\begin{proof}
For later use, we note that the MGF for $(R_{t+1},h_{t+2}^{*})$ is
given by
\begin{align}
\mathbb{E}_{t}^{\mathbb{Q}}(\exp(sR_{t+1}+uh_{t+2}^{*})) & =\exp\Biggl[sr+u\omega_{t+1}^{*}-\tfrac{1}{2}\log(1-2u\alpha_{t+1}^{*})\nonumber \\
 & \quad+\left.\left(s(\gamma_{t}^{*}-\tfrac{1}{2})+u\beta_{t+1}^{*}-\tfrac{1}{2}\gamma_{t}^{*2}+\tfrac{(s-\gamma_{t}^{*})^{2}}{2(1-2u\alpha_{t+1}^{*})}\right)h_{t+1}^{*}\right],\label{eq:mgf_R_h}
\end{align}
where we used Theorem \ref{thm:DynUnderQ}. Next, we will establish
the exponentially affine form,
\[
\mathbb{E}_{t}^{\mathbb{Q}}(\exp(s\sum_{\tau=t+1}^{T}R_{\tau}))=\exp(A_{T}(s,M)+B_{T}(s,M)h_{t+1}^{*})
\]
 where $M=T-t$. This is proven by backwards induction. First, for
$t=T-1$ we have 
\[
\mathbb{E}_{T-1}^{\mathbb{Q}}(\exp(sR_{T}))=\exp[sr+\tfrac{1}{2}(s^{2}-s)h_{T}^{\ast}].
\]
So $A_{T}(s,1)=sr$ and $B_{T}(s,1)=\tfrac{1}{2}(s^{2}-s)$ define
the initial values for $A_{T}$ and $B_{T}$. Next, we establish the
recursions for $A_{T}$ and $B_{T}$, by showing that the exponentially
affine form holds for $t=T-m$, whenever it holds for $t=T-(m-1)$.
Thus, consider
\begin{eqnarray*}
\mathbb{E}_{T-m}^{\mathbb{Q}}\left[\exp\left(s\sum_{\tau=T-m+1}^{T}R_{\tau}\right)\right] & = & \mathbb{E}_{T-m}^{\mathbb{Q}}\left[\exp\left(sR_{T-m+1}+s\sum_{\tau=T-m+2}^{T}R_{\tau}\right)\right]\\
 & = & \mathbb{E}_{T-m}^{\mathbb{Q}}\left[\mathbb{E}_{T-m+1}^{\mathbb{Q}}\left[\exp\left(sR_{T-m+1}+s\sum_{\tau=T-m+2}^{T}R_{\tau}\right)\right]\right]\\
 & = & \mathbb{E}_{T-m}^{\mathbb{Q}}\left[\exp\left(sR_{T-m+1}\right)\mathbb{E}_{T-m+1}^{\mathbb{Q}}\left(s\sum_{\tau=T-m+2}^{T}R_{\tau}\right)\right]\\
 & = & \mathbb{E}_{T-m}^{\mathbb{Q}}\left[\exp\left(sR_{T-m+1}+A_{T}(s,m-1)+B_{T}(s,m-1)h_{T-m+2}^{*}\right)\right].
\end{eqnarray*}
Now substitute $B_{T}(s,m-1)$ for $u$ in (\ref{eq:mgf_R_h}) with
$t=T-m$. The exponentially affine form now follows if we set
\begin{align*}
A_{T}(s,m) & =A_{T}(s,m-1)+sr+B_{T}(s,m-1)\omega_{T-m+1}^{*}-\tfrac{1}{2}\log(1-2\alpha_{T-m+1}^{*}B_{T}(s,m-1)),\\
B_{T}(s,m) & =s(\gamma_{T-m}^{*}-\tfrac{1}{2})-\tfrac{1}{2}\gamma_{T-m}^{*2}+\beta_{T-m+1}^{*}B_{T}(s,m-1)+\tfrac{(s-\gamma_{T-m}^{*})^{2}}{2(1-2\alpha_{T-m+1}^{*}B_{T}(s,m-1))}.
\end{align*}
There are some interesting differences between the new expressions
and the original expressions. Our expressions for $A_{T}(s,m)$ and
$B_{T}(s,m)$ depend $s$ (the argument of the MGF) and $m$ (days
to maturity), but unlike the original expressions it also depends
on $T$ (the maturity date). The reason is that the coefficients,
$\omega^{\ast}$, $\alpha^{\ast}$, $\beta^{\ast}$, and $\gamma^{\ast}$,
are time-varying under $\mathbb{Q}$ due to their relation to $\eta_{t}$.
\end{proof}

\subsection{Proof of Theorem \ref{thm:OptionPricingRandom}\label{sec:ProofOPrandom}}

Let $\boldsymbol{\eta}_{t,M}=\left(\log\eta_{t+1},\ldots,\log\eta_{t+M}\right)^{\prime}$
and $\bar{\boldsymbol{\eta}}_{t,M}^{e}=\mathbb{E}_{t}^{\mathbb{Q}}(\boldsymbol{\eta}_{t,M})$
which embeds the expected path of $\log\eta_{t}$. The MGF for $M$-period
cumulative returns $R_{t,M}$ can be expressed as 
\[
g_{t,M}(s)\equiv\mathbb{E}_{t}^{\mathbb{Q}}[\exp(sR_{t,M})]=\mathbb{E}_{t}^{\mathbb{Q}}\{\mathbb{E}_{t}^{\mathbb{Q}}[\exp(sR_{t,T})\mid\boldsymbol{\eta}_{t,M}]\}=\mathbb{E}_{t}^{\mathbb{Q}}[f(\boldsymbol{\eta}_{t,M})],
\]
where $\ensuremath{f(\boldsymbol{\eta}_{t,M})\equiv\mathbb{E}_{t}^{\mathbb{Q}}[\exp(sR_{t,T})|\boldsymbol{\eta}_{t,M}]}$.
A second-order Taylor expansion of $f(\boldsymbol{\eta}_{t,M})$ around
$\bar{\boldsymbol{\eta}}_{t,M}^{e}$ gives us,
\begin{eqnarray*}
f(\boldsymbol{\eta}_{t,M}) & \approx & f(\bar{\boldsymbol{\eta}}_{t,M}^{e})+\nabla^{\prime}(\boldsymbol{\eta}_{t,M}-\bar{\boldsymbol{\eta}}_{t,M}^{e})+\tfrac{1}{2}(\boldsymbol{\eta}_{t,M}-\bar{\boldsymbol{\eta}}_{t,M}^{e})^{\prime}\tilde{H}_{t,M}(s)(\boldsymbol{\eta}_{t,M}-\bar{\boldsymbol{\eta}}_{t,M}^{e})\\
 & = & f(\bar{\boldsymbol{\eta}}_{t,M}^{e})+\nabla^{\prime}(\boldsymbol{\eta}_{t,M}-\bar{\boldsymbol{\eta}}_{t,M}^{e})+\tfrac{1}{2}\mathrm{tr}\left\{ \tilde{H}_{t,M}(s)\boldsymbol{\varepsilon}_{t,M}\boldsymbol{\varepsilon}_{t,M}^{\prime}\right\} ,
\end{eqnarray*}
where
\[
\tilde{H}_{t,M}(s)=\left.\frac{\partial^{2}f\left(\boldsymbol{\eta}_{t,M}\right)}{\partial\boldsymbol{\eta}_{t,M}\partial\boldsymbol{\eta}_{t,M}^{\prime}}\right|_{\boldsymbol{\eta}_{t,M}=\bar{\boldsymbol{\eta}}_{t,M}^{e}}
\]
is the Hessian, $\nabla$ is the Jacobian, and $\bm{\varepsilon}_{t,M}=\boldsymbol{\eta}_{t,M}-\bar{\boldsymbol{\eta}}_{t,M}^{e}$.
Since $\bar{\boldsymbol{\eta}}_{t,M}^{e}\in\mathcal{G}_{t}$ and $f(\bar{\boldsymbol{\eta}}_{t,M}^{e})=g_{t,M}(s|\bar{\boldsymbol{\eta}}_{t,M}^{e})$,
by taking conditional expectations on both sides, we arrive 
\[
\ensuremath{g_{t,M}(s)}\approx g_{t,M}(s|\bar{\boldsymbol{\eta}}_{t,M}^{e})+\frac{1}{2}\mathrm{tr}\left\{ \tilde{H}_{t,M}(s)\Sigma_{M}\right\} ,
\]
where $\Sigma_{M}=\mathbb{E}_{t}^{\mathbb{Q}}[\bm{\varepsilon}_{t,M}\bm{\varepsilon}_{t,M}^{\prime}]$.
From the AR(1) structure it follows that the $m$-th element of $\bm{\varepsilon}_{t,M}$
is given by, $[\bm{\varepsilon}_{t,M}]_{m}=\log\eta_{t+m}-\mathbb{E}_{t}^{\mathbb{Q}}\log\eta_{t+m}=\sum_{j=0}^{m-1}\varphi^{j}\varepsilon_{t+m-j}$,
$m=1,\ldots,M$, such that
\begin{eqnarray*}
\Sigma_{M} & = & {\rm var}_{t}^{\mathbb{Q}}\left[\boldsymbol{\eta}_{t,M}-\bar{\boldsymbol{\eta}}_{t,M}^{e}\right]={\rm var}_{t}^{\mathbb{Q}}\left(A_{M}\text{\ensuremath{\bm{\varepsilon}_{t,M}}}\right)=\sigma^{2}A_{M}A_{M}^{\prime},
\end{eqnarray*}
where 
\[
[A_{M}]_{i,j}=\begin{cases}
\varphi^{i-j} & \text{for }i\geq j\\
0 & \text{otherwise}.
\end{cases}
\]

The results above is for the AR(1) case. With an ARMA structure, $\varphi(L)\left(\log\eta_{t}-\zeta\right)=\theta(L)\varepsilon_{t},$
we have
\[
\log\eta_{t}-\zeta=\frac{\theta(L)}{\varphi(L)}\varepsilon_{t}=\sum_{j=0}^{\infty}\psi_{j}\varepsilon_{t-j},
\]
such that 
\[
[\bm{\varepsilon}_{t,M}]_{m}=\log\eta_{t+m}-\mathbb{E}\log\eta_{t+m}=\sum_{j=0}^{m-1}\psi_{j}\varepsilon_{t+m-j}=\sum_{j=1}^{m}\psi_{m-j}\varepsilon_{t+j}.
\]
So in the more general ARMA case we have
\[
[A_{M}]_{i,j}=\begin{cases}
\psi_{i-j} & \text{for }i\geq j\\
0 & \text{otherwise}.
\end{cases}
\]

After obtaining the analytical approximated MGF, the closed-form formula
for the price of a European call option can be simply adapted from
\citet{HestonNandi2000}.

\subsubsection{The Hessian Matrix $\tilde{H}_{t,M}(s)$ \label{subsec:Deriving-the-Hessian}}

First, to simplify the notations, we suppress the dependence on $T$
and $s$ for most terms and write,
\begin{eqnarray*}
\omega_{m} & \equiv & \omega_{T-m}^{*}=\omega\eta_{T-m},\\
\alpha_{m} & \equiv & \alpha_{T-m}^{*}=\alpha\eta_{T-m}\eta_{T-m-1},\\
\beta_{m} & \equiv & \beta_{T-m}^{*}=\beta\eta_{T-m}/\eta_{T-m-1},\\
\gamma_{m} & \equiv & \gamma_{T-m-1}^{*}=(\gamma+\lambda-\tfrac{1}{2})/\eta_{T-m-1}+\tfrac{1}{2},
\end{eqnarray*}
and $A(M)\equiv A_{T}(s,M),$ $B(M)\equiv B_{T}(s,M)$. For derivatives
of a variable $Y$ we write
\[
Y^{(i)}\equiv\frac{\partial Y}{\partial\log\eta_{t+i}}\quad\text{and}\quad Y^{(i,j)}\equiv\frac{\partial^{2}Y}{\partial\log\eta_{t+i}\partial\log\eta_{t+j}}.
\]
From Theorem \ref{lem:Closed-form-OP}, we have $f(\boldsymbol{\eta}_{t,M})=\exp\left[A(M)+B(M)h_{t+1}^{*}\right]$
with first and second derivatives given by
\[
f^{(i)}=f(\boldsymbol{\eta}_{t,M})[A^{(i)}(M)+B^{(i)}(M)h_{t+1}^{*}],
\]
and
\begin{eqnarray*}
f^{(i,j)} & = & f(\boldsymbol{\eta}_{t,M})[A^{(i)}(M)+B^{(i)}(M)h_{t+1}^{*}][A^{(j)}(M)+B^{(j)}(M)h_{t+1}^{*}]\\
 &  & +f(\boldsymbol{\eta}_{t,M})[A^{(i,j)}(M)+B^{(i,j)}(M)h_{t+1}^{*}],
\end{eqnarray*}
respectively. From the expression for $A(M)$ and $B(M)$ in Theorem
\ref{lem:Closed-form-OP}, we find (for the first derivatives)
\begin{align}
A^{(i)}(m+1) & =A^{(i)}(m)+\omega_{m}^{(i)}B(m)+\omega_{m}B^{(i)}(m)+\tfrac{{\alpha}_{m}^{(i)}B(m)+\alpha_{m}B^{(i)}(m)}{1-2\alpha_{m}B(m)},\label{eq:A(i)Recursive}\\
B^{(i)}(m+1) & =(s-\gamma_{m}){\gamma}_{m}^{(i)}+\beta_{m}B^{(i)}(m)+{\beta}_{m}^{(i)}B(m)\nonumber \\
 & \quad+(s-\gamma_{m})^{2}\tfrac{{\alpha}_{m}^{(i)}B(m)+\alpha_{m}B^{(i)}(m)}{\left[1-2\alpha_{m}B(m)\right]^{2}}+(\gamma_{m}-s)\tfrac{{\gamma}_{m}^{(i)}}{\left[1-2\alpha_{m}B(m)\right]},\label{eq:B(i)Recursive}
\end{align}
with $A^{(i)}(1)=B^{(i)}(1)=0$. Similarly, for the second derivatives
we find
\begin{eqnarray*}
A^{(i,j)}(m+1) & = & A^{(i,j)}(m)+{\omega}_{m}^{(i,j)}B(m)+{\omega}_{m}^{(i)}B^{(j)}(m)+B^{(i)}(m){\omega}_{m}^{(j)}+\omega_{m}B^{(i,j)}(m)\\
 &  & +\tfrac{1}{\left[1-2\alpha_{m}B(m)\right]}\left({\alpha}_{m}^{(i,j)}B(m)+{\alpha}_{m}^{(i)}B^{(j)}(m)+B^{(i)}(m){\alpha}_{m}^{(j)}+\alpha_{m}B^{(i,j)}(m)\right)\\
 &  & +\tfrac{2}{\left[1-2\alpha_{m}B(m)\right]^{2}}\left({\alpha}_{m}^{(i)}B(m)+\alpha_{m}B^{(i)}(m)\right)\left({\alpha}_{m}^{(j)}B(m)+\alpha_{m}B^{(j)}(m)\right),\\
{B}^{(i,j)}(m+1) & = & {\beta}_{m}^{(i,j)}B(m)+{\beta}_{m}^{(i)}{B}^{(j)}(m)+B^{(i)}(m){\beta}_{m}^{(j)}+\beta_{m}{B}^{(i,j)}(m)\\
 &  & +\tfrac{2\alpha_{m}B(m)}{\left[1-2\alpha_{m}B(m)\right]}\left[{\gamma}_{m}^{(i)}{\gamma}_{m}^{(j)}+(\gamma_{m}-s){\gamma}_{m}^{(i,j)}\right]\\
 &  & +\tfrac{2(\gamma_{m}-s)}{\left[1-2\alpha_{m}B(m)\right]^{2}}\left[\left({\alpha}_{m}^{(i)}B(m)+\alpha_{m}B^{(i)}(m)\right){\gamma}_{m}^{(j)}+{\gamma}_{m}^{(i)}\left({\alpha}_{m}^{(j)}B(m)+\alpha_{m}B^{(j)}(m)\right)\right]\\
 &  & +\tfrac{(s-\gamma_{m})^{2}}{\left[1-2\alpha_{m}B(m)\right]^{2}}\left({\alpha}_{m}^{(i,j)}B(m)+{\alpha}_{m}^{(i)}B^{(j)}(m)+B^{(i)}(m){\alpha}_{m}^{(j)}+\alpha_{m}{B}^{(i,j)}(m)\right)\\
 &  & +\tfrac{4(s-\gamma_{m})^{2}}{\left[1-2\alpha_{m}B(m)\right]^{3}}\left({\alpha}_{m}^{(i)}B(m)+\alpha_{m}B^{(i)}(m)\right)\left({\alpha}_{m}^{(j)}B(m)+\alpha_{m}B^{(j)}(m)\right),
\end{eqnarray*}
with $A^{(i,j)}(1)=B^{(i,j)}(1)=0$ and
\[
\omega_{m}^{(i)}=\begin{cases}
\omega_{m} & \text{for }i=M-m,\\
0 & \text{otherwise},
\end{cases}\quad\alpha_{m}^{(i)}=\begin{cases}
\alpha_{m} & \text{for }i=M-m,M-m-1\\
0 & \text{otherwise,}
\end{cases},
\]
\[
\beta_{m}^{(i)}=\begin{cases}
\text{\ensuremath{}}\beta_{m} & \text{for }i=M-m\\
-\beta_{m} & \text{for }i=M-m-1\\
0 & \text{otherwise,}
\end{cases},\quad\gamma_{m}^{(i)}=\begin{cases}
\tfrac{1}{2}-\gamma_{m} & \text{for }i=M-m-1\\
0 & \text{otherwise,}
\end{cases},
\]
\[
\omega_{m}^{(i,j)}=\begin{cases}
\omega_{m} & \text{for }i=j=M-m\\
0 & \text{otherwise,}
\end{cases},\quad\alpha_{m}^{(i,j)}=\begin{cases}
\alpha_{m} & \text{for }i,j=M-m,M-m-1\\
0 & \text{otherwise,}
\end{cases},
\]
\[
\beta_{m}^{(i,j)}=\begin{cases}
\text{\ensuremath{}}\beta_{m} & \text{for }i=j\in\{M-m,M-m-1\}\\
-\beta_{m} & \text{for }i\neq j\in\{M-m,M-m-1\}\\
0 & \text{otherwise,}
\end{cases},\quad\gamma_{m}^{(i,j)}=\begin{cases}
\gamma_{m}-\tfrac{1}{2} & \text{for }i=j=M-m-1\\
0 & \text{otherwise.}
\end{cases}
\]
Finally, the scaled Hessian matrix presented in Theorem \ref{thm:OptionPricingRandom}
is defined by $H_{t,M}(s)=\frac{1}{g_{t,M}(s|\bar{\boldsymbol{\eta}}_{t,M}^{e})}\tilde{H}_{t,M}(s)$.
We adopt his formulation because it simplifies the expression.

\subsection{Proof of Theorem \ref{thm:moments}\label{sec:Proof_moment}}

Next we prove Theorem \ref{thm:moments} that enables us to computing
moments from the moment-generating function The $k$-th moment of
cumulative returns can be expressed as
\begin{eqnarray*}
\mathbb{E}\left(R^{k}\right) & = & \int_{-\infty}^{+\infty}R^{k}f(R)\mathrm{d}R=\underbrace{\int_{-\infty}^{0}R^{k}f(R)\mathrm{d}R}_{(*)}+\underbrace{\int_{0}^{+\infty}R^{k}f(R)\mathrm{d}R}_{(**)},
\end{eqnarray*}
where $f(R)$ is the conditional density of cumulated returns under
$\mathbb{Q}$. By the Fourier inverse transform (FIT), $f(R)$ can
be expressed as
\[
f(R)=\frac{1}{\pi}\int_{0}^{\infty}\mathrm{Re}\left[e^{-uR}\phi(u)\right]\mathrm{d}u_{I},\quad\phi(u)\equiv\mathbb{E}\left[e^{uR}\right],
\]
where $u=u_{R}+iu_{I}\in\mathbb{C}$, with $u_{R}=\mathrm{Re}[u]$
being the real part of $u$, and $\phi\left(u\right)$ is the MGF
of cumulated returns. Note that different from traditional Fourier
inverse transform based on characteristic function, such form FIT
requires the existence of MGF. 

Then after changing the order of integration by Fubini's theorem,
we find
\begin{eqnarray*}
(*) & = & \frac{1}{\pi}\int_{0}^{\infty}\ensuremath{\mathrm{Re}\left[\left(\int_{-\infty}^{0}R^{k}e^{-uR}\mathrm{d}R\right)\phi(u)\right]\mathrm{d}u_{I}},\\
(**) & = & \frac{1}{\pi}\int_{0}^{\infty}\ensuremath{\mathrm{Re}\left[\left(\int_{0}^{+\infty}R^{k}e^{-vR}\mathrm{d}R\right)\phi(v)\right]\mathrm{d}v_{I}}.
\end{eqnarray*}
We have here used the following Laplace transformations for $R^{k}$:
\begin{eqnarray*}
\int_{-\infty}^{0}R^{k}e^{-uR}\mathrm{d}R & = & -\frac{k!}{u^{k+1}},\quad u_{R}<0,\\
\int_{0}^{+\infty}R^{k}e^{-vR}\mathrm{d}R & = & \frac{k!}{v^{k+1}},\quad\quad v_{R}>0.
\end{eqnarray*}
Thus, it follows that
\begin{eqnarray*}
(*) & = & \frac{1}{\pi}\int_{0}^{\infty}\ensuremath{\mathrm{Re}\left[-\frac{k!}{u^{k+1}}\phi(u)\right]\mathrm{d}u_{I},}\\
(**) & = & \frac{1}{\pi}\int_{0}^{\infty}\ensuremath{\mathrm{Re}\left[\frac{k!}{v^{k+1}}\phi(v)\right]\mathrm{d}v_{I}}.
\end{eqnarray*}
More general results for moment deduced by integrating the moment-generating
function can be found in \citet{HansenTong:2024FracMoments}.

\subsection{Proof of Theorem \ref{thm:ScoreProp}\label{sec:Proof_FormDisScore}}

We observe $N_{t}$ derivative prices at time $t$, resulting in the
vector of pricing error, $e_{t}\in\mathbb{R}^{N_{t}\times1}$. So,
the derivative with respect to $\log\eta_{t}$ is
\begin{align*}
\nabla_{t}\equiv\frac{\partial\ell(X_{t}|\mathcal{G}_{t})}{\partial\log\eta_{t}} & =-\frac{1}{2\sigma_{e}^{2}}\frac{\partial\left(e_{t}^{\prime}\Omega_{N_{t}}^{-1}e_{t}\right)}{\partial\log\eta_{t}}=-\frac{1}{2\sigma_{e}^{2}}\frac{\partial\left(e_{t}^{\prime}\Omega_{N_{t}}^{-1}e_{t}\right)}{\partial e_{t}^{\prime}}\frac{\partial e_{t}}{\partial\log\eta_{t}}\\
 & =-\frac{1}{\sigma_{e}^{2}}e_{t}^{\prime}\Omega_{N_{t}}^{-1}\frac{\partial e_{t}}{\partial\log\eta_{t}}=\frac{1}{\sigma_{e}^{2}}\left(\frac{\partial X_{t}^{m}}{\partial\log\eta_{t}}\right)^{\prime}\Omega_{N_{t}}^{-1}e_{t},
\end{align*}
such that
\begin{align*}
\mathbb{E}^{\mathbb{P}}[\nabla_{t}^{2}|\mathcal{G}_{t}] & =\frac{1}{\sigma_{e}^{2}}\left(\frac{\partial X_{t}^{m}}{\partial\log\eta_{t}}\right)^{\prime}\Omega_{N_{t}}^{-1}\left(\frac{\partial X_{t}^{m}}{\partial\log\eta_{t}}\right).
\end{align*}
In the special case with a single derivative price, $N_{t}=1$, our
expressions simplify to
\[
\nabla_{t}=\frac{1}{\sigma_{e}^{2}}\left(X_{t}-X_{t}^{m}\right)\frac{\partial X_{t}^{m}}{\partial\log\eta_{t}}\quad\text{and}\quad\mathbb{E}^{\mathbb{P}}[\nabla_{t}^{2}|\mathcal{G}_{t}]=\frac{1}{\sigma_{e}^{2}}\left(\frac{\partial X_{t}^{m}}{\partial\log\eta_{t}}\right)^{2},
\]
such that $s_{t}=\frac{1}{\sigma_{e}}\left(X_{t}-X_{t}^{m}\right)\operatorname{sign}\left(\frac{\partial X_{t}^{m}}{\partial\log\eta_{t}}\right).$

\subsubsection{The Distribution of the Score}

Define
\[
\psi_{t}^{\prime}=\frac{1}{\sigma_{e}^{2}}\left(\frac{\partial X_{t}^{m}}{\partial\log\eta_{t}}\right)^{\prime}\Omega_{N_{t}}^{-1},\quad\Sigma_{N_{t}}=\sigma_{e}^{2}\Omega_{N_{t}}.
\]
Recall that $\psi_{t}^{\prime}$ is $\mathcal{G}_{t}$measurable,
then under Assumption \ref{assu:Pricing-Errors} where $e_{t}|\mathcal{G}_{t}\sim iid\ N(0,\Sigma_{N_{t}})$,
the conditional MGF of score $s_{t}$ can be expressed as
\begin{align*}
\mathbb{E}_{t}^{\mathbb{P}}\left[\exp\left(us_{t}\right)\right] & =\mathbb{E}_{t}^{\mathbb{P}}\left[\exp\left(u\frac{\psi_{t}^{\prime}e_{t}}{\sqrt{\psi_{t}^{\prime}\Sigma_{N_{t}}\psi_{t}}}\right)\right]=\exp\left(\tfrac{1}{2}\frac{u\psi_{t}^{\prime}}{\sqrt{\psi_{t}^{\prime}\Sigma_{N_{t}}\psi_{t}}}\Sigma_{N_{t}}\frac{u\psi_{t}}{\sqrt{\psi_{t}^{\prime}\Sigma_{N_{t}}\psi_{t}}}\right)=\exp\left(\tfrac{1}{2}u^{2}\right).
\end{align*}
Therefore, the scaled score, $s_{t}$, is iid with a standard normal
distribution under $\mathbb{P}$.

To derive the distribution under $\mathbb{Q}$, we first obtain,
\begin{eqnarray*}
 &  & \mathbb{E}_{t}^{\mathbb{P}}\left[\exp\left(us_{t}+\phi_{t}R_{t+1}+\xi_{t}h_{t+2}\right)\right]\\
 & = & \mathbb{E}_{t}^{\mathbb{P}}\left\{ \mathbb{E}_{t}^{\mathbb{P}}\left[\exp\left(us_{t}+\phi_{t}R_{t+1}+\xi_{t}h_{t+2}\right)|\phi_{t}R_{t+1},\xi_{t}h_{t+2}\right]\right\} \\
 & = & \mathbb{E}_{t}^{\mathbb{P}}\left\{ \mathbb{E}_{t}^{\mathbb{P}}\left[\exp\left(\tfrac{1}{2}u^{2}+\phi_{t}R_{t+1}+\xi_{t}h_{t+2}\right)|\phi_{t}R_{t+1},\xi_{t}h_{t+2}\right]\right\} \\
 & = & \exp\left(\tfrac{1}{2}u^{2}\right)\mathbb{E}_{t}^{\mathbb{P}}\left[\exp\left(\phi_{t}R_{t+1}+\xi_{t}h_{t+2}\right)\right],
\end{eqnarray*}
where the last equality is a consequence of $e_{t}$ being assumed
to be independent of any return shock. Then, we directly have
\begin{align*}
\mathbb{E}_{t}^{\mathbb{Q}}\left(\exp\left(us_{t}\right)\right) & =\frac{\mathbb{E}_{t}^{\mathbb{P}}\left[\exp\left(us_{t}+\phi_{t}R_{t+1}+\xi_{t}h_{t+2}\right)\right]}{\mathbb{E}_{t}^{\mathbb{P}}[\exp(\phi_{t}R_{t+1}+\xi_{t}h_{t+2})]}=\exp\left(\tfrac{1}{2}u^{2}\right),
\end{align*}
which shows that the score, $s_{t}$, is also iid and distributed
as a standard normal under $\mathbb{Q}$.

\subsection{The Scores for Derivatives\label{subsec:ScoresDeri}}

\subsubsection{Score for VIX\label{subsec:Score-for-VIX}}

The simplest case to drive the score for is $X_{t}=\log{\rm VIX}_{t}$,
where $N_{t}=1$. The log-likelihood function for $\ell({\rm VIX}_{t}|\mathcal{G}_{t})$
is here given by
\[
\ell({\rm VIX}_{t}|\mathcal{G}_{t})=-\frac{1}{2\sigma_{e}^{2}}\left(\log\mathrm{VIX}_{t}-\log\mathrm{VIX}_{t}^{m}\right)^{2},
\]
and its first derivative is,
\[
\nabla_{t}=\frac{1}{\sigma_{e}^{2}}\left(\log\mathrm{VIX}_{t}-\log\mathrm{VIX}_{t}^{m}\right)\left(\frac{\partial\log\mathrm{VIX}_{t}^{m}}{\partial\log\eta_{t}}\right).
\]
The conditional variance of $\nabla_{t}$ is
\begin{eqnarray*}
\mathbb{E}^{\mathbb{P}}[\nabla_{t}^{2}|\mathcal{G}_{t}] & = & \frac{1}{\sigma_{e}^{2}}\left(\frac{\partial\log\mathrm{VIX}_{t}^{m}}{\partial\log\eta_{t}}\right)^{2},
\end{eqnarray*}
and it follows that
\[
s_{t}=\frac{1}{\sigma_{e}}\left(\log\mathrm{VIX}_{t}-\log\mathrm{VIX}_{t}^{m}\right){\rm sign}\left(\frac{\partial\log\mathrm{VIX}_{t}^{m}}{\partial\log\eta_{t}}\right).
\]
For the last term, we recall the expression, $\mathrm{VIX}_{t}^{m}=\sqrt{a_{1}\left(M,\sigma^{2}\right)+a_{2}\left(M,\sigma^{2}\right)h_{t+1}^{*}}$,
which leads to

\[
\frac{\partial\log\mathrm{VIX}_{t}^{m}}{\partial\log\eta_{t}}=\frac{1}{2\left(\mathrm{VIX}_{t}^{m}\right)^{2}}\left[\frac{\partial a_{1,t}}{\partial\log\eta_{t}}+\left(\frac{\partial a_{2,t}}{\partial\log\eta_{t}}+a_{2,t}\right)h_{t+1}^{*}\right]
\]
 with
\begin{align*}
\frac{\partial a_{1}(M,\sigma^{2})}{\partial\log\eta_{t}} & =\frac{A^{2}}{M}\sum_{k=2}^{M}\sum_{i=2}^{k}\tilde{\beta}^{k-i}\left[\omega\mathbb{E}_{t}^{\mathbb{Q}}(\eta_{t+k-1})B_{k-1,1}+\alpha\mathbb{E}_{t}^{\mathbb{Q}}\left(\eta_{t+k-1}\eta_{t+i-2}\right)\left(B_{k-1,1}+B_{i-2,1}\right)\right]
\end{align*}
and
\[
\frac{\partial a_{2}(M,\sigma^{2})}{\partial\log\eta_{t}}+a_{2}(M,\sigma^{2})=\frac{A^{2}}{M}\sum_{k=1}^{M}\tilde{\beta}^{k-1}\mathbb{E}_{t}^{\mathbb{Q}}\left(\frac{\eta_{t+k-1}}{\eta_{t}}\right)B_{k-1,1}
\]
Therefore, according to recursive formula for $B_{k,i}$ in (\ref{eq:logARMA}),
they are all positive if all $\varphi_{i}$ are positive for $i=1,\ldots,p$.
The term, ${\rm sign}\left(\frac{\partial\log\mathrm{VIX}_{t}^{m}}{\partial\log\eta_{t}}\right)$,
is therefore redundant and we arrived at the simple expression,
\[
s_{t}=\frac{1}{\sigma_{e}}\left(\log\mathrm{VIX}_{t}-\log\mathrm{VIX}_{t}^{m}\right).
\]
By Assumption \ref{assu:Pricing-Errors} it follows that $s_{t}$
is iid standard normally distributed.

\subsubsection{Score for Option Prices\label{subsec:Score-for-Options}}

When the model is estimated with option prices, the elements of $X_{t}\in\mathbb{R}^{N_{t}\times1}$
are given by $\log\left({\rm IV}_{bs}\left(C_{t},S_{t},M,K,r\right)\right)$,
for the $N_{t}$ options selected at time $t$. The logarithm of Black-Scholes
implied volatility makes this variable comparable to $\log\mathrm{VIX}_{t}$,
which is used when the model is estimated with VIX data.

Let $X_{t}^{m}$ denote the logarithm of model-based implied volatility
for a particular option. We seek an expression for 
\[
\frac{\partial X_{t}^{m}}{\partial\log\eta_{t}}=\frac{\partial\log\left({\rm IV}_{bs}^{m}\right)}{\partial{\rm IV}_{bs}^{m}}\frac{\partial{\rm IV}_{bs}^{m}}{\partial C_{t}^{m}}\frac{\partial C_{t}^{m}}{\partial\log\eta_{t}}=\frac{1}{{\rm IV}_{bs}^{m}\cdot{\rm Vega}_{bs}^{m}}\frac{\partial C_{t}^{m}}{\partial\log\eta_{t}},
\]
where 
\[
{\rm Vega}_{bs}=\frac{1}{\sqrt{2\pi}}\exp\left(-\frac{1}{2}d_{t}^{2}\right)S_{t}\sqrt{M},
\]
is the Black-Scholes implied Vega and
\[
d_{t}=\frac{1}{{\rm IV}_{bs}\sqrt{M}}\left[\log\left(\frac{S_{t}}{K}\right)+\left(r+\frac{{\rm IV}_{bs}^{2}}{2}\right)M\right].
\]
For $\partial\hat{C}^{m}/\partial\log\eta_{t}$, we find
\[
\frac{\partial\hat{C}^{m}(S_{t},M,K,r;h_{t+1}^{*})}{\partial\log\eta_{t}}=S_{t}\frac{\partial P_{1,t}}{\partial\log\eta_{t}}-K\exp(-rM)\frac{\partial P_{2,t}}{\partial\log\eta_{t}},
\]
where
\begin{align*}
\frac{\partial P_{1,t}}{\partial\log\eta_{t}} & =\text{\ensuremath{\frac{\exp(-rM)}{\pi}\int_{0}^{\infty}{\rm Re}\left[\frac{K^{-iu}}{iuS_{t}}\frac{\partial\hat{g}_{t,M}(iu+1)}{\partial\log\eta_{t}}\right]du},}\\
\frac{\partial P_{2,t}}{\partial\log\eta_{t}} & =\frac{1}{\pi}\int_{0}^{\infty}{\rm Re}\left[\frac{K^{-iu}}{iu}\frac{\partial\hat{g}_{t,M}(iu)}{\partial\log\eta_{t}}\right]du,
\end{align*}
with
\begin{align*}
\hat{g}_{t,M}(s) & =g_{t,M}(s|\bar{\boldsymbol{\eta}}_{t,M}^{e})\left[1+\frac{1}{2}{\rm tr}\left(H_{t,M}(s)\Sigma_{M}\right)\right].
\end{align*}
Using the simplified notations in Section \ref{subsec:Deriving-the-Hessian},
we have 
\begin{equation}
\frac{\partial\hat{g}_{t,M}(s)}{\partial\log\eta_{t}}=\hat{g}_{t,M}(s)\left(A^{(0)}(M)+\left(B^{(0)}(M)+B(M)\right)h_{t+1}^{*}\right)+\frac{1}{2}g_{t,M}\left(s|\bar{\boldsymbol{\eta}}_{t,M}^{e}\right){\rm tr}\left(\dot{H}_{t,M}(s)\Sigma_{M}\right),\label{eq:2ndorder}
\end{equation}
which is the expression we use to price options with the second-order
approximation. Pricing options with first-order approximation, amounts
to dropping the last term in (\ref{eq:2ndorder}). With a first-order
approximation, we simply need to compute $A^{(0)}(m)$ and $B^{(0)}(m)$,
that are given recursively from (\ref{eq:A(i)Recursive}) and (\ref{eq:B(i)Recursive}),
with initial condition $\dot{A}(1)=\dot{B}(1)=0$, and use
\begin{eqnarray*}
\omega_{m} & = & \omega\Psi_{t}\left(M-m+1\right),\\
\alpha_{m} & = & \alpha\Psi_{t}\left(M-m+1\right)\Psi_{t}\left(M-m\right),\\
\beta_{m} & = & \beta\Psi_{t}\left(M-m+1\right)/\Psi_{t}\left(M-m\right),\\
\gamma_{m} & = & (\gamma+\lambda-\tfrac{1}{2})/\Psi_{t}\left(M-m\right)+\tfrac{1}{2},
\end{eqnarray*}
where $\Psi_{t}\left(k\right)$ is defined as
\[
\Psi_{t}\left(k\right)=\exp\left\{ \mathbb{E}_{t}\left(\log\eta_{t+k-1}\right)\right\} =\exp\left\{ \left(1-\varphi^{k-1}\right)\zeta+\varphi^{k-1}\log\eta_{t}\right\} .
\]
The corresponding first derivatives (with respect to $\log\eta_{t}$)
are given by
\begin{eqnarray*}
\dot{\omega}_{m} & = & \omega_{m}\varphi^{M-m},\\
\dot{\alpha}_{m} & = & \alpha_{m}\left(\varphi^{M-m}+\varphi^{M-m-1}\right),\\
\dot{\beta}_{m} & = & \beta_{m}\left(\varphi^{M-m}-\varphi^{M-m-1}\right),\\
\dot{\gamma}_{m} & = & (\tfrac{1}{2}-\gamma_{m})\varphi^{M-m-1}.
\end{eqnarray*}

For the second-order approximation in (\ref{eq:2ndorder}), we need
to evaluate $\dot{H}_{t,M}(s)=\partial H_{t,M}(s)/\partial\log\eta_{t}$,
for which we have
\begin{eqnarray*}
\dot{H}_{i,j} & = & \left(A^{(i,0)}(M)+B^{(i,0)}(M)h_{t+1}^{*}\right)\left(A^{(j)}(M)+B^{(j)}(M)h_{t+1}^{*}\right)\\
 &  & +\left(A^{(i)}(M)+B^{(i)}(M)h_{t+1}^{*}\right)\left(\dot{A}^{(j)}(M)+\dot{B}^{(j)}(M)h_{t+1}^{*}\right)\\
 &  & +\left(A^{(i,j,0)}(M)+B^{(i,j,0)}(M)h_{t+1}^{*}\right).
\end{eqnarray*}
The expressions for the third derivatives, $A^{(i,j,0)}(m)=\partial A^{(i,j,0)}(m)/\partial\log\eta_{t}$
and $B^{(i,j,0)}(m)=\partial A^{(i,j,0)}(m)/\partial\log\eta_{t}$
are relatively cumbersome, and are omitted to conserve space. 

We now have all the terms needed for the analytical formula for $\partial C^{m}/\partial\log\eta_{t}$.
In practice, however, it may be more convenient to compute this derivative
with numerical methods, i.e.
\[
\frac{\partial C_{t}^{m}}{\partial\log\eta_{t}}\approx\frac{1}{\Delta}\left[C_{t}^{m}\left(\log\eta_{t}+\Delta\right)-C_{t}^{m}\left(\log\eta_{t}\right)\right],
\]
for a very small value of $\Delta$, because this simplifies the calculation
of score with respect to option prices, and is very fast. 

When the number of daily available option prices, $N_{t}$, is large,
computing the derivatives $\partial\log\text{IV}_{t,i}^{m}/\partial\log\eta_{t}$
for $i=1,2,\ldots,N_{t}$ becomes very time-consuming. In this case,
one could assume that $\partial\log\text{IV}_{t,i}^{m}/\partial\log\eta_{t}\approx\chi_{t}>0$
(which is only $t$-dependent), such that the scaled score $s_{t}$,
defined by (\ref{eq:ScaledScore}) and Theorem \ref{thm:ScoreProp},
is approximated by
\[
s_{t}^{\text{appro}}=\frac{1}{\sigma_{e}}\frac{\iota_{N_{t}}^{\prime}\Omega_{N_{t}}^{-1}e_{t}}{\sqrt{\iota_{N_{t}}^{\prime}\Omega_{N_{t}}^{-1}\iota_{N_{t}}}}
\]
where $\iota_{N_{t}}$ is a $N_{t}\times1$ vector of ones. With this
approximation, there is no need to compute any derivatives. In our
empirical analysis, we find the correlation between $s_{t}^{\text{appro}}$
and the true $s_{t}$ reaches up to 99.11\%, such that the correlation
between $\eta_{t}^{\text{appro}}$ and the true $\eta_{t}$ is 99.98\%. 

\subsection{Computation of Option Prices by Numerical Integrations\label{subsec:NumeCom}}

Our expression for option prices in Theorem \ref{lem:Closed-form-OP}
requires the evaluation of integrals of the type, $\int_{0}^{\infty}f(x)dx$,
for the computation of $P_{1}(t)$ and $P_{2}(t)$. For this, we use
the Gauss-Laguerre quadrature method, which is an extension of the
Gaussian quadrature method for approximating the value of integrals
of the following kind: 
\[
\int_{0}^{\infty}e^{-x}g(x)dx\approx\sum_{i=1}^{L}w(x_{i})g(x_{i}).
\]
The Gauss-Laguerre quadrature method is also based the abscissae $x_{1},\ldots,x_{L}$
and their associated weights $w(x_{i})$ with $i=1,...,L$, leading
to the following numerical approximation,
\[
\int_{0}^{\infty}f(x)dx=\int_{0}^{\infty}e^{-x}[e^{x}f(x)]dx\approx\sum_{i=1}^{L}w(x_{i})e^{x_{i}}f(x_{i}).
\]
In our implementation we use $L=32$.\newpage{}

\setcounter{equation}{0}\renewcommand{\theequation}{S.\arabic{equation}}
\setcounter{subsection}{0}\renewcommand{\thesubsection}{S.\arabic{subsection}}
\setcounter{section}{18}%\renewcommand{\thesection}{S}
\setcounter{figure}{0}\renewcommand{\thefigure}{S.\arabic{figure}}
\setcounter{table}{0}\renewcommand{\thetable}{S.\arabic{table}}
\setcounter{page}{1}\renewcommand{\thepage}{S.\arabic{page}}

\section*{Online Appendix\label{app:SupplementaryResults}}

\subsection{Alternative RMSE Measure\label{subsec:RMSEIV}}

An alternative metric for evaluation the derivative pricing is to
compute the RMSE in volatility levels. For the VIX this is defined
by
\begin{equation}
{\rm RMSE_{VIX}}=\sqrt{\frac{1}{T}\sum_{t=1}^{T}\text{\ensuremath{\left[\mathrm{VIX}_{t}^{m}-\mathrm{VIX}_{t}\right]^{2}}}},\label{eq:RMSE_VIX_Level}
\end{equation}
and for option prices it is defined by
\begin{equation}
{\rm RMSE_{IV}}=\sqrt{\frac{1}{\sum_{t=1}^{T}N_{t}}\sum_{t=1}^{T}\text{\ensuremath{\sum_{i=1}^{N_{t}}\left[\mathrm{IV}_{t,i}^{m}-\mathrm{IV}_{t,i}\right]^{2}}}}\times100.\label{eq:RMSE_IV_Level}
\end{equation}
The results for the RMSE in volatility levels are reported in Tables
\ref{tab:DerivativePricing2} and \ref{tab:OutOfSample2}, where the
former is analogous to Table \ref{tab:DerivativePricing} and the
latter is analogues to the out-of-sample results in Table \ref{tab:OutOfSample}.
The results are qualitatively very similar to the RMSE for logarithmically
transformed volatilities in Tables \ref{tab:DerivativePricing} and
\ref{tab:OutOfSample}.

\begin{table}
\caption{VIX and Option Pricing Performance (RMSE)}

\begin{centering}
\vspace{0.2cm}
\begin{footnotesize}%
\begin{tabularx}{\textwidth}{p{2.5cm}YYYYYYYY}
\toprule
\midrule
    Model & CHNG   & CHNG      & DHNG  & DHNG   \\
      & [VIX] & [Opt]  & [VIX] & [Opt]  \\
    \midrule
          &       &       &       &        \\
    \multicolumn{5}{l}{\it A: RMSE for VIX Pricing}\\[6pt]
    Full Sample   & 4.117 & 4.143 & 1.024 & 1.795 \\
          &       &       &       &         \\
    \multicolumn{5}{l}{\it B: RMSE for Option Pricing} \\[6pt]
    Full Sample   & 4.277 & 4.151 & 2.327 & 1.746 \\
          &       &       &       &         \\
    \multicolumn{5}{l}{\it Partitioned by moneyness } \\[3pt]
   \ Delta<0.3      & 4.500 & 4.149 & 3.108 & 1.947 \\
   \ 0.3$\leq$Delta<0.4      & 4.069 & 3.985 & 2.830 & 1.859 \\
   \ 0.4$\leq$Delta<0.5     & 3.743 & 3.862 & 2.415 & 1.639 \\
   \ 0.5$\leq$Delta<0.6      & 3.798 & 3.999 & 2.033 & 1.444 \\
   \ 0.6$\leq$Delta<0.7      & 4.017 & 4.062 & 1.876 & 1.455 \\
   \ 0.7$\leq$Delta    & 4.771 & 4.417 & 2.098 & 1.949 \\
           &       &       &       &        \\
    \multicolumn{5}{l}{\it Partitioned by maturity} \\[3pt]
    \ DTM<30 & 4.649 & 4.110 & 2.385 & 2.252 \\
    \ 30$\leq$DTM<60 & 4.235 & 4.058 & 2.061 & 1.475 \\
    \ 60$\leq$DTM<90 & 4.065 & 4.086 & 2.156 & 1.311 \\
    \ 90$\leq$DTM<120 & 4.059 & 4.234 & 2.492 & 1.598 \\
    \ 120$\leq$DTM<150 & 4.073 & 4.212 & 2.623 & 1.781 \\
    \ 150$\leq$DTM & 4.192 & 4.537 & 2.757 & 1.819 \\
          &       &       &       &       \\
    \multicolumn{5}{l}{\it Partitioned by the level of VIX} \\[3pt]
    \ VIX<15     & 3.726 & 3.727 & 1.821 & 1.155 \\
    \ 15$\leq$VIX<20     & 2.980 & 3.252 & 2.051 & 1.438 \\
    \ 20$\leq$VIX<25     & 3.711 & 3.808 & 2.265 & 1.718 \\
    \ 25$\leq$VIX<30     & 4.525 & 4.266 & 2.703 & 2.126 \\
    \ 30$\leq$VIX<35    & 5.460 & 5.227 & 3.071 & 2.444 \\
    \ 35$\leq$VIX     & 10.68 & 9.164 & 4.727 & 4.089 \\
\\[0.0cm]
\\[-0.5cm]
\midrule
\bottomrule
\end{tabularx}
\end{footnotesize}
\par\end{centering}
{\small Note: This table reports the in-sample VIX and option pricing
performance for each model in Table \ref{tab:JointEstimation}. We
evaluate the model's option pricing ability through the root of mean
square errors of implied volatility (${\rm RMSE_{IV}}$). We summarize
the results by option moneyness, maturity and market VIX level. Moneyness
is measured by Delta computed from the Black-Scholes model. DTM denotes
the number of calendar days to maturity. \label{tab:DerivativePricing2}}{\small\par}
\end{table}
\begin{table}
\caption{Out-of-sample VIX and Option Pricing (RMSE)}

\begin{centering}
\vspace{0.2cm}
\begin{footnotesize}%
\begin{tabularx}{\textwidth}{p{2.5cm}YYYYYYYY}
\toprule 
\midrule
    Model & CHNG   & CHNG      & DHNG  & DHNG   \\
      & [VIX] & [Opt]  & [VIX] & [Opt]  \\
    \midrule
          &       &       &       &        \\
    \multicolumn{5}{l}{\it A: RMSE for VIX Pricing}\\[6pt]
    Full Sample   & 5.083 & 4.660 & 1.278 & 2.067 \\
          &       &       &       &         \\
    \multicolumn{5}{l}{\it B: RMSE for Option Pricing} \\[6pt]
    Full Sample   & 4.710 & 4.043 & 2.724 & 1.932 \\
          &       &       &       &         \\
    \multicolumn{5}{l}{\it Partitioned by moneyness } \\[3pt]
   \ Delta<0.3      & 4.973 & 4.068 & 3.839 & 2.108 \\
   \ 0.3$\leq$Delta<0.4      & 4.622 & 3.910 & 3.467 & 2.018 \\
   \ 0.4$\leq$Delta<0.5     & 4.255 & 3.709 & 2.849 & 1.701 \\
   \ 0.5$\leq$Delta<0.6      & 4.339 & 3.907 & 2.281 & 1.508 \\
   \ 0.6$\leq$Delta<0.7      & 4.696 & 4.108 & 2.021 & 1.635 \\
   \ 0.7$\leq$Delta    & 4.934 & 4.213 & 2.202 & 2.195 \\
           &       &       &       &        \\
    \multicolumn{5}{l}{\it Partitioned by maturity} \\[3pt]
    \ DTM<30 & 5.276 & 4.243 & 2.858 & 2.301 \\
    \ 30$\leq$DTM<60 & 4.692 & 3.968 & 2.526 & 1.591 \\
    \ 60$\leq$DTM<90 & 4.217 & 3.794 & 2.547 & 1.524 \\
   \ 90$\leq$DTM<120 & 4.317 & 4.013 & 2.758 & 1.818 \\
   \ 120$\leq$DTM<150 & 4.133 & 3.870 & 2.941 & 2.157 \\
   \  150$\leq$DTM & 4.263 & 4.182 & 2.965 & 2.117 \\
          &       &       &       &       \\
    \multicolumn{5}{l}{\it Partitioned by the level of VIX} \\[3pt]
    \ VIX<15     & 4.062 & 3.121 & 2.003 & 1.257 \\
    \ 15$\leq$VIX<20     & 3.081 & 3.060 & 2.403 & 1.531 \\
    \ 20$\leq$VIX<25     & 3.026 & 3.537 & 2.615 & 1.714 \\
    \ 25$\leq$VIX<30     & 4.073 & 3.902 & 3.169 & 2.287 \\
    \ 30$\leq$VIX<35    & 5.352 & 4.856 & 3.688 & 2.719 \\
    \ 35$\leq$VIX     & 12.05 & 9.423 & 5.199 & 4.376 \\
\\[0.0cm]
\\[-0.5cm]
\midrule
\bottomrule
\end{tabularx}
\end{footnotesize}
\par\end{centering}
{\small Note: This table reports the out-of-sample option pricing performance
for each model. We conduct our out-of-sample performance evaluation
by splitting our original dataset into two subsamples: the in-sample
data consists of the years before 2008 and the out-of-sample consists
of the years 2008--2021, which spans a 14-year period. The estimation
for each model is done only once for the in-sample data, and then
price the out-of-sample data by the estimated parameters. Therefore,
the in-sample data is 1990--2007. We evaluate the model's option
pricing ability through the root of mean square errors of implied
volatility (IVRMSE). We summarize the results by option moneyness,
maturity and market VIX level. Moneyness is measured by Delta computed
from the Black-Scholes model. DTM denotes the number of calendar days
to maturity. \label{tab:OutOfSample2}}{\small\par}
\end{table}

\subsection{A Variance Risk Ratio based on Model-Free Realized Variances\label{subsec:EMVRR}}

As a robustness check, we compute an alternative monthly measure of
the VRR, $\tilde{\eta}=\frac{\mathrm{VIX}^{2}}{\sigma_{\mathbb{P}}^{2}}$,
where $\sigma_{\mathbb{P}}^{2}$ is an empirical measure of the expected
variance under $\mathbb{P}$, deduced from monthly realized variances.
So, $\sigma_{\mathbb{P}}^{2}$ does not rely on the Heston-Nandi GARCH
model. The monthly realized variance for the S\&P 500 index is computed
as the sum of the squared daily close-to-close log returns within
the same month. This quantity is multiplied by $12\times100^{2}$
to bring it to the same scale as the $\mathrm{VIX}^{2}$ (squared
annualized percentage). Finally, $\sigma_{\mathbb{P}}^{2}$ is defined
to be the one-month ahead prediction of the monthly realized variance
using a AR(1) model. 
\begin{figure}
\centering{}\includegraphics[width=1\textwidth]{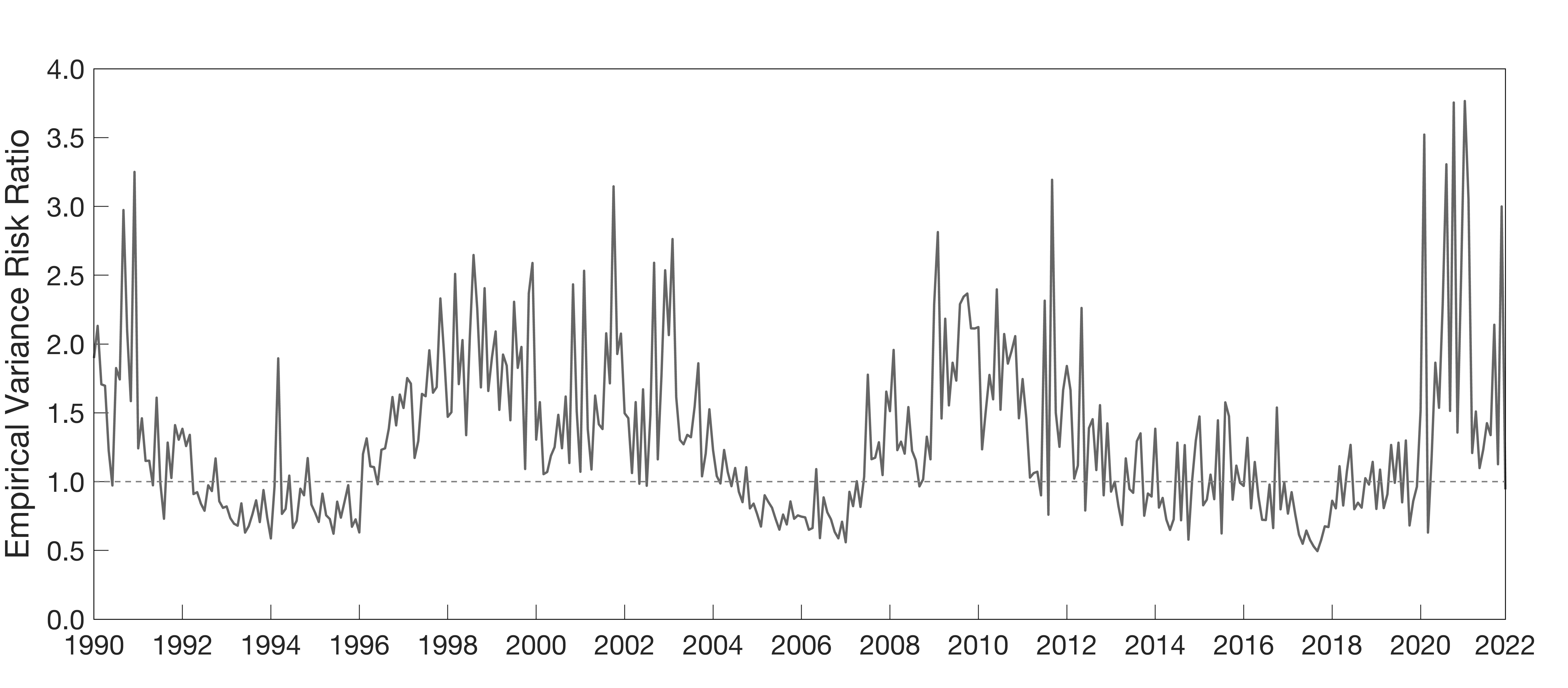}\caption{An empirical monthly variance risk ratio, defined as the ratio of
${\rm VIX}_{t}^{2}$ and the predicted value of the monthly realized
variance using an AR(1) model. \label{fig:EmpiricalVRR}\protect \\
\protect \\
Alt text: Time series of the empirical monthly variance risk ratio,
calculated as the ratio of VIX squared to the predicted monthly realized
variance, where the prediction is based on a simple first-order autoregressive
model.}
\end{figure}

This time-series of this empirical VRR is presented in Figure \ref{fig:EmpiricalVRR},
and it is very similar to the model-based time-series presented in
Figure \ref{fig:eta}. 

Figure \ref{fig:Miss} contains scatterplots of one-month-ahead volatility
prediction errors under $\mathbb{P}$ and the level of $\log\eta$.
The former is defined as the difference between the ex-post model-implied
variance and the ex-ante model-based one-month ahead volatility prediction.
The scatterplots indicate some positive correlation, but we do not
find that a large proportion of overpredictions of volatility occurs
when $\log\eta_{t}$ is relatively small, nor that a large proportion
of underpredictions of volatility occurs when $\log\eta_{t}$ is large.
\begin{figure}
\centering{}\includegraphics[width=1\textwidth]{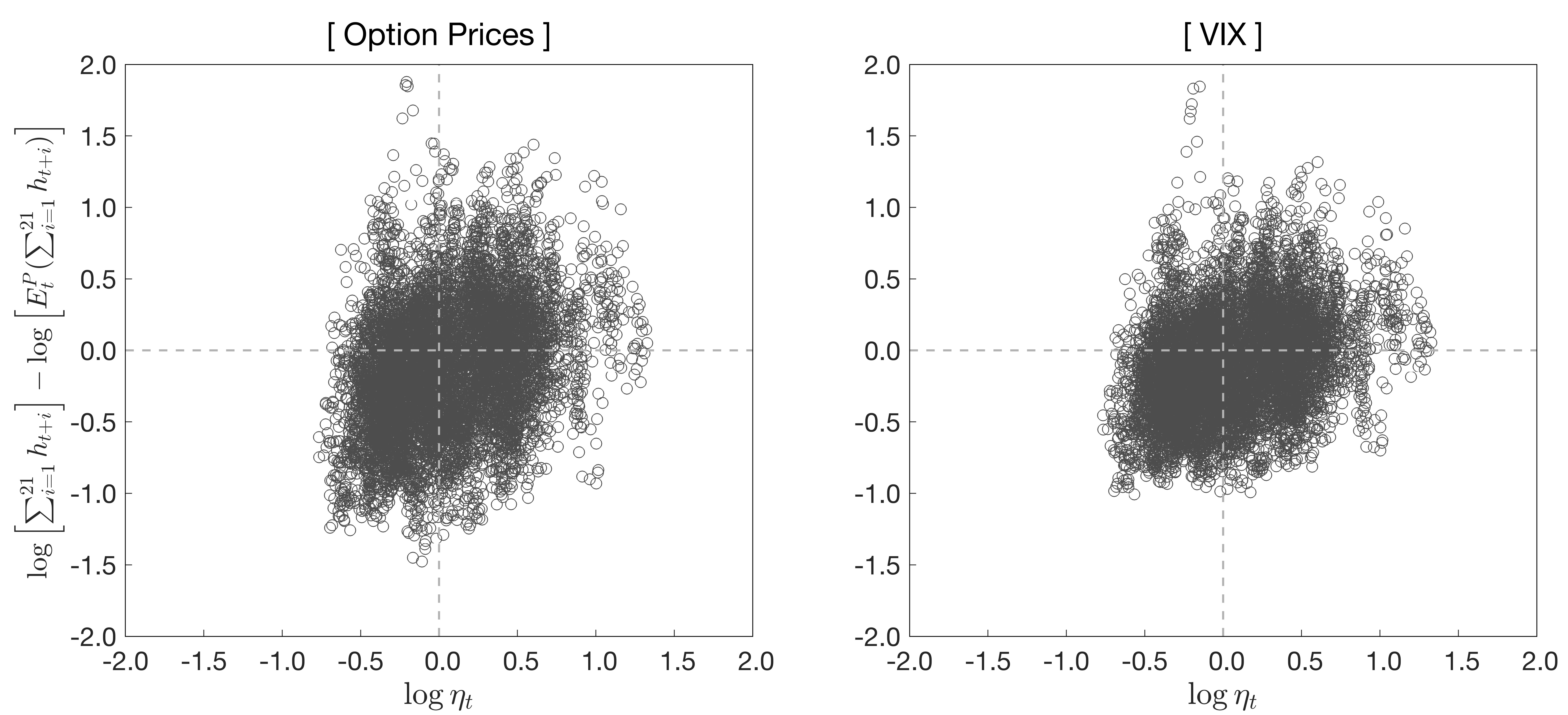}\caption{Volatility prediction errors for one-month ahead predictions plotted
against the variance risk ratio. Based on the DHNG model estimated
with option prices (left panel) and VIX (right panel). \label{fig:Miss}\protect \\
\protect \\
Alt text: Volatility prediction errors for one-month-ahead forecasts
plotted against the variance risk ratio. Predictions are based on
the DHNG model estimated using either option prices or VIX data. }
\end{figure}

\subsection{Empirical Results for Auxiliary Structure (Predetermined VRR)\label{subsec:ResultsAuxi}}

While the Auxiliary structure does not constitute a coherent option
pricing model, we can estimate the parameters. We should not expect
all parameters to be consistent, because the auxiliary structure employs
and option pricing that poorly approximates the true formula, as shown
in Table \ref{tab:ApproError}. The parameter estimates of the Auxiliary
structure are shown in Table \ref{tab:JointEstimation_Appro}, where
we also include those of the DHNG model, for comparison. Most parameter
estimates are very similar, with the important exception with the
intercept in the score-driven model for $\log\eta_{t}$. This can
be seen from the estimate of $\zeta=\mathbb{E}\left(\log\eta_{t}\right)$,
which is substantially larger for the Auxiliary structure than the
DHNG model. The reason is simple that the Auxiliary structure need
an upwards biased value of $\eta_{t}$ to compensate for the shortcomings
of its option pricing formula, that neglects the random variations
in $\eta_{t}$. So, by inflating the model-implied value of $\eta_{t}$,
the Auxiliary structure is able to reduce option pricing errors, which
is part of the objective in the estimation problem.

The option pricing performance for the Auxiliary structure is shown
in Table \ref{tab:DerivativePricing-Appro}, where we, for the sake
of comparison, include the results for the DHNG model. We evaluate
the derivative pricing in terms of the RMSE for both the logarithmically
transformed implied volatility (the first two columns) and the level
of implied volatility (the last two columns). Interestingly, the Auxiliary
structure only results in slightly larger pricing errors than the
DHNG model. This can be ascribed to the adaptive nature of the score-driven
model, which shifts $\log\eta_{t}$ to a higher level to compensate
for errors embedded in the Auxiliary pricing formulae. Thus, while
the first-order approximation is insufficient, the estimated score-drive
model ``rescues'' the Auxiliary structure, by inflating the variance
risk ratio above its true value.

\begin{table}
\caption{Estimation Results for Auxiliary and DHNG}

\begin{centering}
\vspace{0.2cm}
\begin{footnotesize}
\begin{tabularx}{\textwidth}{p{4cm}YYYY}
\toprule  
\midrule
   Model & Auxiliary  & DHNG \\
        & (1st-order) & (2nd-order) \\   
          & [Opt] & [Opt] \\
    \midrule
          &       &  \\
    $\lambda$ & 3.080 & 3.088 \\
          & \textit{(0.588)} & \textit{(0.644)} \\
          &       &  \\
    $\beta$ & 0.587 & 0.570 \\
          & \textit{(0.023)} & \textit{(0.015)} \\
          &       &  \\
    $\alpha(\times10^{-6})$ & 5.783 & 5.833 \\
          & \textit{(0.389)} & \textit{(0.256)} \\
          &       &  \\
    $\gamma$ & 248.08 & 249.59 \\
          & \textit{(12.52)} & \textit{(10.28)} \\
          &       &  \\
    $\zeta$ & 0.227 & 0.102 \\
          & \textit{(0.012)} & \textit{(0.011)} \\
          &       &  \\
    $\varphi$ & 0.994 & 0.994 \\
          & \textit{(0.003)} & \textit{(0.001)} \\
          &       &  \\
    $\sigma$ & 0.046 & 0.042 \\
          & \textit{(0.008)} & \textit{(0.010)} \\
          &       &  \\
    $\rho$ & 0.114 & 0.087 \\
          & \textit{(0.015)} & \textit{(0.010)} \\
          &       &  \\
    $\sigma_e$ & 0.090 & 0.091 \\
          & \textit{(0.006)} & \textit{(0.006)} \\
          &       &  \\
    $\mathbb{E}{\eta}$ & 1.362 & 1.190 \\
          &  &  \\
          &  &  \\
    $\widehat{\mathrm{var}}(s_{t})$ & 1.010 & 1.007 \\
          &  &  \\
    $\pi^\mathbb{P}$ & 0.945 & 0.934 \\
    $\pi^\mathbb{Q}$ & 0.956 & 0.944 \\
   
          &  &  \\
    LogL  &  &  \\
    $\ell(\rm{R})$ & 26,413 & {26,419} \\
    $\ell(\rm{VIX})$ & \textit{7,828}  & \textit{7,932} \\
    $\ell(\rm{Opt})$ & 36,534 & {36,682} \\
    $\ell(\rm{R,VIX})$ & \textit{34,241} & \textit{34,351} \\
    $\ell(\rm{R,Opt})$ & 62,947 & {63,101} \\
\\[0.0cm]
\\[-0.5cm]
\midrule
\bottomrule
\end{tabularx}
\end{footnotesize}
\par\end{centering}
{\small Note: Estimation results based on option prices for the Auxiliary
structure and the DHNG model using the full sample period, January
1990 to December 2021. Estimates are reported with robust standard
errors (in parentheses), and $\pi^{\mathbb{P}}$ and $\pi^{\mathbb{Q}}$
refer to the persistence of volatility under $\mathbb{P}$ and $\mathbb{Q}$,
respectively. The last five rows report the components of the maximized
log-likelihood function.\label{tab:JointEstimation_Appro}}{\small\par}
\end{table}

\begin{table}
\caption{Pricing Performance for Auxiliary and DHNG}

\begin{centering}
\vspace{0.2cm}
\begin{footnotesize}%
\begin{tabularx}{\textwidth}{p{2.5cm}YYYp{0cm}YYYYYYYYYYYYYY}
\toprule  
\midrule
          &       & \multicolumn{2}{c}{Errors in Log-Difference} &       & \multicolumn{2}{c}{Errors in Level-Difference} \\
\cmidrule{3-4}\cmidrule{6-7}    Model &       & Auxiliary  & DHNG  &       & Auxiliary  & DHNG \\
       &       & (1st-order) & (2nd-order) &       & (1st-order) & (2nd-order) \\          
          &       & [Opt] & [Opt] &       & [Opt] & [Opt] \\
    \midrule
          &       &       &       &       &       &  \\
    \multicolumn{7}{l}{\it A: RMSE for VIX Pricing}\\[6pt]
    Full Sample &  & 9.164 & 9.049 &       & 1.804 & 1.795  \\
          &       &       &       &       &       &  \\
    \multicolumn{7}{l}{\it B: RMSE for Option Pricing} \\[6pt]
    Full Sample  & & 9.241 & 9.191 &       & 1.767 & 1.746 \\
          &       &       &       &       &       &  \\
    \multicolumn{7}{l}{\it Partitioned by moneyness } \\[3pt]
   \ Delta<0.3   & &  12.63 & 12.77 &       & 1.961 & 1.947 \\
   \ 0.3$\leq$Delta<0.4   &    & 11.33 & 11.28 &       & 1.880 & 1.859 \\
   \ 0.4$\leq$Delta<0.5  & & 9.496 & 9.375 &       & 1.670 & 1.639 \\
   \ 0.5$\leq$Delta<0.6   &    & 7.385 & 7.271 &       & 1.467 & 1.444 \\
   \ 0.6$\leq$Delta<0.7   &  & 6.925 & 6.859 &       & 1.468 & 1.455 \\
   \ 0.7$\leq$Delta   & & 8.746 & 8.634 &       & 1.981 & 1.949 \\
          &       &       &       &       &       &  \\
    \multicolumn{7}{l}{\it Partitioned by maturity} \\[3pt]
    \ DTM<30 & & 12.30 & 12.19 &       & 2.284 & 2.252 \\
    \ 30$\leq$DTM<60 & & 7.969 & 7.848 &       & 1.500 & 1.475 \\
    \ 60$\leq$DTM<90 & & 6.892 & 6.914 &       & 1.312 & 1.311 \\
   \ 90$\leq$DTM<120 & & 7.898 & 7.963 &       & 1.604 & 1.598 \\
   \ 120$\leq$DTM<150 & & 8.556 & 8.622 &       & 1.792 & 1.781 \\
    \ 150$\leq$DTM & & 9.265 & 9.226 &       & 1.862 & 1.819 \\
          &       &       &       &       &       &  \\
    \multicolumn{7}{l}{\it Partitioned by the level of VIX} \\[3pt]
    \ VIX<15     & &  10.04 & 10.02 &       & 1.160 & 1.155 \\
    \ 15$\leq$VIX<20    &  & 9.062 & 9.021 &       & 1.450 & 1.438 \\
    \ 20$\leq$VIX<25   &   & 8.288 & 8.224 &       & 1.739 & 1.718 \\
    \ 25$\leq$VIX<30   &   &  8.613 & 8.518 &       & 2.155 & 2.126 \\
    \ 30$\leq$VIX<35   &  & 8.593 & 8.542 &       & 2.464 & 2.444 \\
    \ 35$\leq$VIX    &  & 9.661 & 9.508 &       & 4.172 & 4.089 \\       
\\[0.0cm]
\\[-0.5cm]
\midrule
\bottomrule
\end{tabularx}
\end{footnotesize}
\par\end{centering}
{\small Note: This table reports the in-sample VIX and option pricing
performance for the Auxiliary structure and DHNG. The pricing performance
is evaluated in terms of the RMSEs of logarithmically transformed
VIX/implied volatility (the first two columns) and the levels of VIX/implied
volatilities (last two columns). For option pricing the performance
is also show for different ranges of moneyness, DTM, and the level
of VIX level.\label{tab:DerivativePricing-Appro}}{\small\par}
\end{table}

\subsection{Supplementary Material for Table \ref{tab:EconomicFundamentals}\label{subsec:FullEconomic}}

\begin{sidewaystable}
\begin{centering}
\vspace{0.2cm}
\begin{scriptsize}
\begin{tabularx}{\textwidth}{p{4.5cm}YYYYYYYYYYYYYYYYYYYYYYY}
\toprule
\midrule
           \multicolumn{9}{c}{Log Variance Risk Ratio (Monthly Average)} \\
    \midrule
 &       &   &       &       &       &       &       &       &  \\
    Sentiment &       & -0.095*** &       &       &       &       &       &       & -0.093*** \\
          &       & \textit{(0.026)} &       &       &       &       &       &       & \textit{(0.020)} \\
    Dispersion-UNEMP &       &       & 0.139*** &       &       &       &       &       & 0.008 \\
          &       &       & \textit{(0.052)} &       &       &       &       &       & \textit{(0.044)} \\
    Dispersion-GDP &       &       &       & 0.176*** &       &       &       &       & 0.050 \\
          &       &       &       & \textit{(0.040)} &       &       &       &       & \textit{(0.049)} \\
    Economic Policy Uncertainty &       &       &       &       & 0.380*** &       &       &       & 0.208*** \\
          &       &       &       &       & \textit{(0.060)} &       &       &       & \textit{(0.045)} \\
    Survey-based Uncertainty &       &       &       &       &       & 0.148*** &       &       & 0.078*** \\
          &       &       &       &       &       & \textit{(0.026)} &       &       & \textit{(0.025)} \\
    Economic Uncertainty Index &       &       &       &       &       &       & 0.071*** &       & 0.013 \\
          &       &       &       &       &       &       & \textit{(0.009)} &       & \textit{(0.009)} \\
    Variance Risk Premium &       &       &       &       &       &       &       & 0.120*** & 0.096*** \\
          &       &       &       &       &       &       &       & \textit{(0.014)} & \textit{(0.012)} \\       
    Stock Market Volatility & 0.394*** & 0.395*** & 0.368*** & 0.356*** & 0.294*** & 0.363*** & 0.354*** & 0.395*** & 0.306*** \\
          & \textit{(0.042)} & \textit{(0.041)} & \textit{(0.043)} & \textit{(0.043)} & \textit{(0.039)} & \textit{(0.042)} & \textit{(0.041)} & \textit{(0.029)} & \textit{(0.029)} \\
    Default Yield Spread & 0.385*** & 0.358*** & 0.372*** & 0.365*** & 0.393*** & 0.282*** & 0.246*** & 0.285*** & 0.196*** \\
          & \textit{(0.090)} & \textit{(0.091)} & \textit{(0.088)} & \textit{(0.087)} & \textit{(0.078)} & \textit{(0.087)} & \textit{(0.081)} & \textit{(0.067)} & \textit{(0.058)} \\
    Default Return Spread & 0.026*** & 0.027*** & 0.025*** & 0.025*** & 0.019*** & 0.024*** & 0.021*** & 0.026*** & 0.020*** \\
          & \textit{(0.008)} & \textit{(0.008)} & \textit{(0.007)} & \textit{(0.007)} & \textit{(0.007)} & \textit{(0.007)} & \textit{(0.007)} & \textit{(0.006)} & \textit{(0.005)} \\
    Long Term Yield & 0.028** & 0.030** & 0.040*** & 0.043*** & 0.077*** & 0.016 & 0.045*** & 0.035*** & 0.063*** \\
          & \textit{(0.013)} & \textit{(0.013)} & \textit{(0.013)} & \textit{(0.012)} & \textit{(0.012)} & \textit{(0.011)} & \textit{(0.010)} & \textit{(0.010)} & \textit{(0.010)} \\
    Dividend Price Ratio & -0.457*** & -0.435*** & -0.525*** & -0.529*** & -0.638*** & -0.463*** & -0.403*** & -0.354*** & -0.471*** \\
          & \textit{(0.132)} & \textit{(0.127)} & \textit{(0.130)} & \textit{(0.129)} & \textit{(0.117)} & \textit{(0.133)} & \textit{(0.122)} & \textit{(0.091)} & \textit{(0.085)} \\
    Term Spread & -0.031* & -0.040** & -0.020 & -0.018 & -0.035** & -0.031* & -0.030* & -0.011 & -0.022* \\
          & \textit{(0.018)} & \textit{(0.017)} & \textit{(0.018)} & \textit{(0.018)} & \textit{(0.015)} & \textit{(0.018)} & \textit{(0.016)} & \textit{(0.012)} & \textit{(0.012)} \\
    Earnings Price Ratio & -0.080 & -0.079 & -0.089 & -0.056 & -0.096 & -0.094 & 0.086 & -0.015 & -0.007 \\
          & \textit{(0.075)} & \textit{(0.075)} & \textit{(0.074)} & \textit{(0.070)} & \textit{(0.064)} & \textit{(0.064)} & \textit{(0.061)} & \textit{(0.045)} & \textit{(0.046)} \\
    Book to Market Ratio & 0.484 & 0.149 & 0.574 & 0.485 & 0.402 & 0.625 & 0.023 & 0.230 & -0.095 \\
          & \textit{(0.431)} & \textit{(0.435)} & \textit{(0.422)} & \textit{(0.418)} & \textit{(0.392)} & \textit{(0.435)} & \textit{(0.397)} & \textit{(0.284)} & \textit{(0.285)} \\
    Net Equity Expansion & 2.874** & 3.765*** & 1.957 & 1.777 & 1.137 & 1.218 & 0.648 & 0.239 & -0.940 \\
          & \textit{(1.368)} & \textit{(1.394)} & \textit{(1.390)} & \textit{(1.304)} & \textit{(1.269)} & \textit{(1.215)} & \textit{(1.225)} & \textit{(1.019)} & \textit{(0.926)} \\
    Long Term Rate of Return & 0.002 & 0.003 & 0.004 & 0.004 & 0.001 & 0.003 & 0.004 & 0.006 & 0.006* \\
          & \textit{(0.005)} & \textit{(0.005)} & \textit{(0.005)} & \textit{(0.004)} & \textit{(0.004)} & \textit{(0.004)} & \textit{(0.004)} & \textit{(0.004)} & \textit{(0.003)} \\
    Inflation & 5.472 & 7.530* & 6.593 & 4.323 & 0.471 & 5.131 & -4.108 & -3.496 & -4.550 \\
          & \textit{(4.464)} & \textit{(4.401)} & \textit{(4.330)} & \textit{(4.357)} & \textit{(3.900)} & \textit{(4.209)} & \textit{(4.571)} & \textit{(4.030)} & \textit{(3.116)} \\
    Recession & 0.137* & 0.191*** & 0.118* & 0.104 & 0.054 & 0.044 & 0.071 & 0.015 & -0.023 \\
          & \textit{(0.073)} & \textit{(0.073)} & \textit{(0.071)} & \textit{(0.070)} & \textit{(0.068)} & \textit{(0.070)} & \textit{(0.068)} & \textit{(0.059)} & \textit{(0.057)} \\
    Industrial Production & 2.977 & 2.460 & 2.150 & 1.761 & 2.694* & 2.504* & 0.299 & 0.380 & -0.879 \\
          & \textit{(1.820)} & \textit{(1.787)} & \textit{(1.575)} & \textit{(1.578)} & \textit{(1.626)} & \textit{(1.514)} & \textit{(1.514)} & \textit{(1.499)} & \textit{(1.213)} \\
    Employment & -0.563 & 0.561 & 6.544** & 7.386*** & 0.354 & 3.161 & 1.329 & 0.851 & 7.141*** \\
          & \textit{(2.328)} & \textit{(2.446)} & \textit{(2.629)} & \textit{(2.109)} & \textit{(2.027)} & \textit{(1.919)} & \textit{(1.809)} & \textit{(2.017)} & \textit{(2.025)} \\
    Durables Consumption & 0.617 & 0.734* & 0.234 & 0.204 & 0.391 & 0.376 & 0.641 & 0.023 & -0.127 \\
          & \textit{(0.417)} & \textit{(0.435)} & \textit{(0.436)} & \textit{(0.410)} & \textit{(0.333)} & \textit{(0.360)} & \textit{(0.422)} & \textit{(0.339)} & \textit{(0.259)} \\
    Nondurables Consumption & -0.287 & -0.999 & -1.277 & -1.034 & -0.220 & -1.098 & 0.259 & 2.099* & 0.352 \\
          & \textit{(1.224)} & \textit{(1.284)} & \textit{(1.125)} & \textit{(1.051)} & \textit{(0.872)} & \textit{(1.016)} & \textit{(1.020)} & \textit{(1.250)} & \textit{(0.729)} \\
    Service Consumption & -2.309 & -2.341 & -5.582*** & -6.529*** & -2.955* & -4.037*** & -4.348*** & -4.670* & -7.240*** \\
          & \textit{(1.704)} & \textit{(1.685)} & \textit{(1.773)} & \textit{(1.758)} & \textit{(1.506)} & \textit{(1.432)} & \textit{(1.507)} & \textit{(2.452)} & \textit{(1.400)} \\
          &       &       &       &       &       &       &       &       &  \\
    $R^2$  & 0.709 & 0.722 & 0.722 & 0.732 & 0.756 & 0.741 & 0.761 & 0.817 & 0.866 \\
\\[0.0cm]
\\[-0.5cm]
\midrule
\bottomrule
\end{tabularx}
\end{scriptsize}
\par\end{centering}
{\small Note: Supplementary material for Table \ref{tab:EconomicFundamentals}
in ``}{\small\emph{Option Pricing with Time-Varying Volatility Risk
Aversion}}{\small ''.}{\small\par}
\end{sidewaystable}

\end{document}